\documentclass[11pt,letterpaper,oneside]{article}
\usepackage[margin=1in]{geometry}
\usepackage[sort&compress,numbers]{natbib}
\usepackage{graphicx}
\usepackage{amsthm,amsmath,amssymb}
\usepackage[usenames, dvipsnames]{xcolor}
\usepackage{booktabs}

\graphicspath{{figures}}

\usepackage{nicefrac}
\usepackage{xspace}
\usepackage{paralist}
\usepackage{enumitem}
\usepackage{mathtools}
\usepackage{relsize}
\usepackage[algo2e,noend,ruled,linesnumbered]{algorithm2e}
\usepackage{hyperref} 
\hypersetup{
	colorlinks=true,  
	linkcolor=RoyalBlue,   
	filecolor=magenta, 
	urlcolor=RoyalBlue,    
	citecolor=OliveGreen  
}
\usepackage[capitalize]{cleveref}
\usepackage{dsfont}
\usepackage{thmtools} 
\usepackage{thm-restate}
\usepackage{multirow}
\usepackage{subcaption}
\usepackage{authblk}

\DeclareMathOperator*{\argmin}{argmin}

\newtheorem{theorem}{Theorem}
\newtheorem{problem}{Problem}
\newtheorem{definition}{Definition}
\newtheorem{corollary}{Corollary}
\newtheorem{lemma}{Lemma}

\definecolor{MyGreen}{HTML}{66c2a5}
\definecolor{MyRed}{HTML}{fc8d62}
\definecolor{MyBlue}{HTML}{8da0cb}
\SetNlSty{bfseries}{\color{black}}{}

\interfootnotelinepenalty=10000
 

\renewcommand{\epsilon}{\varepsilon}

\newcommand{\streaming}{Multi-Pass Streaming\xspace}
\newcommand{\streamingabbrv}{MPS\xspace}

\newcommand{\graphtask}{Disjoint Aggregation\xspace}
\newcommand{\graphparameter}{\ensuremath{W}\xspace}

\newcommand{\pacomplexity}{\ensuremath{T^{PA}}\xspace}
\newcommand{\dacomplexity}[1]{\ensuremath{T^{DA(#1)}}\xspace}
\newcommand{\disjointagg}{\ensuremath{\mathrm{DA}}\xspace}
\newcommand{\stepthree}{Root-Path Selection\xspace}
\newcommand{\facloclong}{Facility Placement and Connection\xspace}
\newcommand{\facloc}{FPC\xspace}
\newcommand{\stepthreeabbrv}{RPS\xspace}
\newcommand{\naturals}{\ensuremath{\mathds{N}_0}\xspace}
\newcommand{\naturalsnozero}{\ensuremath{\mathds{N}}\xspace}

\newcommand{\reals}{\ensuremath{\mathds{R}}\xspace}
\newcommand{\positiveintegers}[1]{\ensuremath{[#1]}\xspace}
\newcommand{\nonnegativeintegers}[1]{\ensuremath{[#1]_0}\xspace}
\newcommand{\requests}{\ensuremath{\mathcal{R}}\xspace}
\newcommand{\costsum}{\ensuremath{s}\xspace}

\newcommand{\GW}{GW\xspace}
\newcommand{\myfunction}{\ensuremath{f}\xspace}
\newcommand{\growth}{\ensuremath{\eta}\xspace}
\newcommand{\cardinality}[1]{|#1|}

\newcommand{\overgrowth}{\ensuremath{\epsilon''}\xspace}
\newcommand{\rootnode}{\ensuremath{r}\xspace}
\newcommand{\distanceeps}{\ensuremath{\epsilon'}\xspace}

\newcommand{\graph}{\ensuremath{G}\xspace}

\newcommand{\nodes}{\ensuremath{V}\xspace}
\newcommand{\edges}{\ensuremath{E}\xspace}

\newcommand{\nedges}{\ensuremath{m}\xspace}
\newcommand{\nnodes}{\ensuremath{n}\xspace}
\newcommand{\somepath}{\ensuremath{p}\xspace}
\newcommand{\shortestpath}{\ensuremath{P}\xspace} 
\newcommand{\shortestpathdiameter}{\ensuremath{s}\xspace}

\newcommand{\terminal}{\ensuremath{\tau}\xspace}

\newcommand{\terminals}{\ensuremath{\mathcal{T}}\xspace}
\newcommand{\distance}{\ensuremath{d}\xspace}
\newcommand{\hopdistance}{\ensuremath{h}\xspace}
\newcommand{\ncomponents}{\ensuremath{k}\xspace}

\newcommand{\nterminals}{\ensuremath{t}\xspace}
\newcommand{\forest}{\ensuremath{F}\xspace}

\newcommand{\lowerbound}{\ensuremath{LB}\xspace}
\newcommand{\edge}{\ensuremath{e}\xspace}
\newcommand{\node}{\ensuremath{v}\xspace}
\newcommand{\othernode}{\ensuremath{u}\xspace}
\newcommand{\cost}{\ensuremath{c}\xspace}
\newcommand{\components}{\ensuremath{\mathcal{C}}\xspace}
\newcommand{\component}{\ensuremath{C}\xspace}
\newcommand{\radius}{\ensuremath{r}\xspace}
\newcommand{\yvar}{\ensuremath{y}\xspace}
\newcommand{\xvar}{\ensuremath{x}\xspace}
\newcommand{\thesubset}{\ensuremath{S}\xspace}
\newcommand{\edgesubset}{\ensuremath{Z}\xspace}
\newcommand{\cut}{\ensuremath{\delta}\xspace}

\newcommand{\tree}{\ensuremath{T}\xspace}
\newcommand{\augmentation}{\ensuremath{A}\xspace}

\newcommand{\inputcomponent}{\ensuremath{\Lambda}\xspace}
\newcommand{\componentidentifier}{\ensuremath{\lambda}\xspace}

\newcommand{\sources}{\ensuremath{X}\xspace}
\newcommand{\targets}{\ensuremath{Y}\xspace}

\newcommand{\BO}{\mathcal{O}\xspace}
\newcommand{\Congest}{\textsc{Congest}\xspace}

\newcommand{\hopdiameter}{\ensuremath{D}\xspace}
\newcommand{\shortcutquality}{\ensuremath{Q}\xspace}
\newcommand{\myfunctioncomplexity}{\ensuremath{FFE}\xspace}
\newcommand{\tildeO}{\ensuremath{\widetilde{\mathcal{O}}}\xspace}
\newcommand{\tildeo}{\ensuremath{\widetilde{o}}\xspace}
\newcommand{\tildeOmega}{\ensuremath{\widetilde{\Omega}}\xspace}

\newcommand{\reducedcost}{\ensuremath{c'}\xspace}
\newcommand{\costreduction}{\ensuremath{\bar{c}}\xspace}
\newcommand{\reducedcomponents}{\ensuremath{\components_\ballunion}\xspace}

\newcommand{\mergeedges}{\ensuremath{M}\xspace}
\newcommand{\ballunion}{\ensuremath{U}\xspace}
\newcommand{\nactive}{\ensuremath{a}\xspace}

\begin{document}

\title{\LARGE{Model-Agnostic Approximation of Constrained~Forest~Problems}}

\author[1,2,3]{Corinna Coupette}
\author[4]{Alipasha Montaseri}
\author[5]{Christoph Lenzen}
\affil[1]{Aalto University}
\affil[2]{KTH Royal Institute of Technology}
\affil[3]{Max Planck Institute for Informatics}
\affil[4]{Sharif University of Technology}
\affil[5]{CISPA Helmholtz Center for Information Security}

\date{}
\maketitle
\thispagestyle{empty}

\begin{abstract}

\noindent\emph{Constrained Forest Problems} (CFPs) 
as introduced by Goemans and Williamson in $1995$
capture a wide range of network design problems with edge subsets as solutions, 
such as Minimum Spanning Tree, Steiner Forest, and Point-to-Point Connection. 
While individual CFPs have been studied extensively in individual computational models, 
a unified approach to solving general CFPs in multiple computational models has been lacking. 
Against this background, we present the \emph{shell-decomposition algorithm}, 
a \emph{model-agnostic meta-algorithm} that efficiently computes a $(2+\epsilon)$-approximation to CFPs for a broad class of forest functions.

To demonstrate the power and flexibility of this result, 
we instantiate the shell-decomposition algorithm 
for three fundamental, NP-hard CFPs 
(Steiner Forest, Point-to-Point Connection, and Facility Placement and Connection)
in three different computational models (\Congest, PRAM, and \streaming). 
For example, for constant $\epsilon$, we obtain the following $(2+\epsilon)$-approximations in the \Congest model:
\begin{enumerate}[nosep,label=(\arabic*)]
	\item For Steiner Forest specified via input components (SF--IC), 
	where each node knows the identifier of one of $k$ disjoint subsets of $V$ (the input components), 
	we achieve a deterministic $(2+\epsilon)$-approximation in $\tildeO(\sqrt{\nnodes}+\hopdiameter+\ncomponents)$ rounds, where $\hopdiameter$ is the hop diameter of the~graph.
	\item   
	For Steiner Forest specified via symmetric connection requests (SF--SCR), 
	where connection requests are issued to pairs of nodes $\othernode,\node\in \nodes$, 
	we leverage randomized equality testing to reduce the running time to $\tildeO(\sqrt{\nnodes}+\hopdiameter)$,
	succeeding with high probability.
	\item For Point-to-Point Connection, 
	we provide a $(2+\epsilon)$-approximation in $\tildeO(\sqrt{\nnodes}+\hopdiameter)$ rounds.
	\item For \facloclong, a relative of non-metric Uncapacitated Facility Location, 
	we obtain a $(2+\varepsilon)$-approximation in $\tildeO(\sqrt{\nnodes} + \hopdiameter)$ rounds.
\end{enumerate}
We further show how to replace the $\sqrt{\nnodes}+\hopdiameter$ term by the complexity of solving Partwise Aggregation, achieving (near-)universal optimality in any setting in which a solution to Partwise Aggregation in near-shortcut-quality time is known.

\end{abstract}

\clearpage
\pagestyle{plain}
\setcounter{page}{1}
\section{Introduction}

The classic approach to determining the computational complexity of a task 
consists in deriving upper and lower bounds that are as tight as possible in a particular model of computation.
On the upper-bound side, this can result in solutions that are specifically engineered toward the chosen computational model, 
such that the task at hand needs to be reexamined for every relevant model. 
Worse still, one might end up with fragile solutions that overcome only the challenges specifically represented by the model under study.
At first glance, lower bounds might appear more robust in this regard, since they highlight an obstacle that \emph{any} algorithm needs to overcome.
However, many such lower bounds are shown for very specific network topologies that have never been observed in practice.
This might lead to a false sense of success in having classified the complexity of a given task---%
although the lower bound merely indicates that the parameters used to capture the computational complexity of the task are insufficient. 
In brief, traditional upper and lower bounds bear two limitations: \emph{model specificity} and \emph{existential optimality}.

A textbook illustration of these limitations is provided by the Steiner Forest problem (SF), 
which generalizes the well-known Steiner Tree problem (ST).
In the classic SF formulation using input components (SF--IC), 
we are given a weighted graph, along with disjoint subsets of nodes $\nodes_1,\ldots,\nodes_\ncomponents\subseteq \nodes$,
and the goal is to determine a minimum-weight subgraph spanning each $\nodes_i$, ${i\in [\ncomponents]}$.
This general and fundamental connectivity problem has been studied in depth in the classic centralized model of computation~\cite{agrawal1995trees,goemans1995approximation,jain2001approximation,chekuri2008approximate,gupta2015greedy}, and the state of the art in the \Congest model is due to \citet{lenzen2014steiner}.
Their algorithms are modified and adapted variants of a well-known centralized algorithm by \citet{agrawal1995trees}, 
but they are specifically tailored to the \Congest model. 
One might suspect that their underlying algorithmic techniques could be transferred to other computational models, 
but this cannot be readily determined from their specialized solution (\emph{model~specificity}).
Furthermore, even their fastest algorithm runs for $\tildeOmega(\sqrt{\nnodes})$ rounds---%
which is known to be necessary in the worst case (\emph{existential optimality}): 
The Minimum Spanning Tree (MST) problem is a special case of the Steiner Forest Problem with $\ncomponents=1$ and $\nodes_1=\nodes$, 
to which the $\tildeOmega(\sqrt{\nnodes})$ lower bound by \citet{dassarma2012distributed} applies.
However, no real-world networks remotely similar to the lower-bound graph are known, 
and the technique of using \emph{low-congestion shortcuts} has been demonstrated to overcome the $\sqrt{\nnodes}$ barrier for MST in many settings~\cite{ghaffari2016algorithms,haeupler2021shortcuts,ghaffari2021congestion}.
These findings motivate us to revisit the Steiner Forest problem, 
along with a much broader class of connectivity problems, 
in an attempt to overcome the abovementioned limitations. 

\subsection*{Beyond Model Specificity and Existential Optimality}
To tackle the challenges of \emph{model specificity} and \emph{existential optimality}, two paradigms have emerged.

First, 
numerous works have pushed toward what we term \emph{model agnosticism}, 
developing algorithms 
that can be readily instantiated in a wide range of computational models \cite[e.g.,][]{rozhon2022paths,mukhopadhyay2020mincut,chang2019coloring,blikstad2022communication,haeupler2023flows}.
The individual steps of these algorithms 
are either core tasks like distance computation, identifying connected components, and sorting, 
which are well-studied across a wide range of models,
or they are easily implemented at low cost in any notable model.
Hence, we call these algorithms \emph{model-agnostic (meta-)algorithms}.\footnote{%
	While the term \emph{meta-algorithm} is widely used to describe algorithms that use other algorithms as building blocks (and we often drop the \emph{meta-} for brevity), 
	there is no established term to characterize algorithms working in many \emph{computational models}. 
	We propose \emph{model-agnostic} for this purpose in analogy to its usage in machine learning, 
	where it has come to describe algorithms that are compatible with a wide variety of machine-learning models~\cite{finn2017modelagnostic,yoon2018bayesian}.
} 
Naturally, model-agnostic algorithms are more robust against model variations, 
and by allowing us to plug in optimized model-specific subroutines for the core computational problems,
they sacrifice little performance over model-specific solutions.
Thus, model-agnostic algorithms can be viewed as model-agnostic reductions to more basic computational tasks. 
Moreover, since these basic tasks can be solved by model-specific subroutines,
they improve whenever progress is made on the current performance bottleneck in a given computational model.

Second, 
\emph{universal optimality}, 
which was coined in the context of distributed computing, 
pushes to design topology-adaptive algorithms that are asymptotically worst-case optimal on \emph{every} network topology, i.e., when varying input parameters \emph{other} than the underlying communication graph 
\cite[e.g.,][]{haeupler2021universally,ghaffari2022universally,zuzic2022paths}.\footnote{%
	Note that \emph{instance optimality}, 
	which requires that an algorithm is $\BO(1)$-competitive with any other always-correct algorithm on each instance, 
	including the best algorithm for the specific instance,
	is often unachievable \cite{haeupler2021universally}.
} 
One might argue that this idea is no different than taking into account more parameters, 
such as the network diameter, the node connectivity, or any other quantity meaningful for the computational task.
However, parametrizing complexity by the input graph is extremely general, 
subsuming a large number of parameters that one might consider, 
and capturing any \emph{topology-specific obstacle} for the task at hand in a given computational model. 

\subsection*{Our Contributions in Brief}

In this work, we study a general class of connectivity problems called \emph{Constrained Forest Problems} (CFPs), 
introduced by \citet{goemans1995approximation}, 
through the lenses of \emph{model agnosticism} and \emph{universal optimality}. 
Intuitively, a CFP is specified by a binary function \myfunction that indicates, 
for each node subset $\thesubset\subseteq \nodes$, 
whether it needs to be connected to the outside world, 
i.e., if the output must contain an edge from \thesubset to $\nodes\setminus \thesubset$.
For example, the Steiner Forest problem can be specified by 
$\myfunction(\thesubset)=1$ if and only if there exists some $i\in [\ncomponents]$ such that both $\nodes_i\cap \thesubset$ and $\nodes_i\cap (\nodes\setminus \thesubset)$ are non-empty, for given disjoint node subsets 
$V_i\subseteq V$.

\paragraph{Model-Agnostic Algorithm for CFPs.} 
We devise the \emph{shell-decomposition algorithm}, 
a generic approximation algorithm for CFPs that is efficient if \myfunction can be evaluated efficiently. 
\begin{restatable}[Model-Agnostic Complexity of Constrained Forest Problems]{theorem}{metatheorem}\label{thm:metatheorem}
	Given $0<\epsilon\le 1$ and a graph $\graph=(\nodes,\edges)$ with polynomially bounded edge weights $\cost\colon\edges\rightarrow\naturalsnozero$, 
	a $(2+\epsilon)$-approximation to a CFP with proper forest function $\myfunction\colon 2^\nodes \rightarrow \{0,1\}$ can (up to bookkeeping operations) be obtained at complexity $\tildeO\left((\text{aSSSP} + \text{MST} + \text{\stepthreeabbrv} + \text{FFE})\epsilon^{-1}\right)$, 
	with the terms in the sum denoting the complexities of solving 
	\begin{inparaenum}[(1)]
		\item aSSSP: $(1+\epsilon)$-approximate Set-Source Shortest-Path Forest, 
		\item MST: Minimum Spanning Tree, 
		\item \stepthreeabbrv: \stepthree, and 
		\item FFE: Forest-Function Evaluation for \myfunction.
	\end{inparaenum}
\end{restatable}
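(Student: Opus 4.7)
The plan is to follow the classical Goemans--Williamson primal--dual moat-growing paradigm, but to discretize the otherwise continuous dual growth into a geometric sequence of ``shells'' so that each shell can be resolved by a bounded number of calls to the listed subroutines. Concretely, I would fix distance thresholds $\tau_i = (1+\epsilon')^i$ for some $\epsilon' = \Theta(\epsilon)$ and $i = 0, 1, \ldots, \tildeO(\epsilon^{-1})$; since edge weights are polynomially bounded, only $\tildeO(\epsilon^{-1})$ distinct scales arise, which accounts for the outer $\epsilon^{-1}$ factor in the stated complexity. In scale $i$, I would maintain the current collection of components $\components$ and process them as follows: invoke \myfunction on each $\component \in \components$ via FFE to identify the still-active components; call aSSSP from the union of active components to obtain $(1+\epsilon')$-approximate distances and a shortest-path forest rooted in each active component; run MST on an auxiliary graph whose nodes are the current components and whose edges represent shell-crossing shortest paths, in order to decide which components should be merged at this scale; and then use \stepthreeabbrv to extract, inside each merged cluster, the actual root-paths whose edges are added to the output forest.

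Iterating over all scales produces a subgraph that contains, for every cut $\thesubset$ with $\myfunction(\thesubset)=1$, at least one crossing edge; a final reverse-delete pruning pass (using FFE once more per candidate edge) then removes superfluous edges to produce a minimal feasible forest. The approximation analysis proceeds by charging the cost of this forest against a feasible dual solution of the standard Goemans--Williamson LP for CFPs. Exact moat growing is known to yield a $2$-approximation with respect to this dual; discretizing the moats into $(1+\epsilon')$-spaced shells overpays each moat radius by a factor of at most $(1+\epsilon')$, and replacing true distances by $(1+\epsilon')$-approximate ones contributes another multiplicative $(1+\epsilon')$, so the overall ratio is $2(1+\epsilon')^2 \le 2 + \BO(\epsilon')$. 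Choosing $\epsilon'$ to be a sufficiently small constant multiple of $\epsilon$ recovers the claimed $(2+\epsilon)$-approximation.

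The complexity bound then follows by summing over scales: each of the $\tildeO(\epsilon^{-1})$ scales makes a polylogarithmic number of calls to each of aSSSP, MST, \stepthreeabbrv, and FFE, plus lower-order bookkeeping. The part I expect to be most delicate is showing that the shell-level discretization and the aSSSP-based distance approximation compose cleanly with the primal--dual charging argument: the implicit ``moats'' produced by the shell procedure must remain mutually disjoint in the fractional sense on which the factor-$2$ bound rests, even though the moats are only approximately known, and one must ensure that FFE is only ever invoked on components that actually arise during the execution---so that its overall invocation count scales polynomially in $|\nodes|$ rather than with the $2^{|\nodes|}$ possible subsets of vertices. Once these invariants are established and combined with the standard GW reverse-delete pruning lemma, the approximation and complexity claims follow together.
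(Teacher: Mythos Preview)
Your overall plan---geometrically spaced shells, one aSSSP/MST/\stepthreeabbrv/FFE call per scale, charging against the Goemans--Williamson dual with multiplicative $(1+\epsilon')$ losses from discretization and approximate distances---matches the paper's shell-decomposition algorithm closely. The delicate point you flag (dual feasibility under approximate distances) is indeed the crux, and the paper resolves it exactly as you suggest, by constructing the dual in the \emph{true} metric while the primal uses approximate distances.

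However, there is one genuine gap: your reliance on a final reverse-delete pruning pass. You propose to invoke FFE ``once more per candidate edge,'' but the output forest can have $\Theta(\nnodes)$ edges, so this contributes $\Theta(\nnodes)\cdot\text{FFE}$ to the complexity, not $\tildeO(\epsilon^{-1})\cdot\text{FFE}$ as the theorem claims. Your closing remark that the FFE invocation count ``scales polynomially in $|\nodes|$'' is not strong enough; the theorem requires it to be $\tildeO(\epsilon^{-1})$.

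The paper's fix is to \emph{drop} reverse-delete entirely (their ``incremental solution-set construction''). Because balls grow only around \emph{active} components, any merge edge added to the solution connects two active components; hence in the charging argument, the forest of added edges, viewed on the component level, has only active components as vertices, and the standard average-degree-$\le 2$ bound applies directly without pruning. This is precisely what allows FFE to be called only once per scale. If you keep reverse-delete, you get the right approximation ratio but the wrong complexity; if you drop it, you must replace your appeal to ``the standard GW reverse-delete pruning lemma'' with this active-only-merge argument.
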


MST, aSSSP, and \stepthreeabbrv (solvable via Transshipment) are 
well-studied in many models of computation.
In contrast, while \myfunction must satisfy certain constraints (see \cref{sec:preliminaries}),~it~can still be (ab)used to force evaluating an arbitrarily hard function $g$ on the entire input.\footnote{%
	For input graph $\graph=(\nodes,\edges)$ with uniform edge weights and any function $g$, choosing an arbitrary node $\node\in \nodes$ and setting $f(\thesubset)=1$ if and only if \begin{inparaenum}[(i)]
	\item $\thesubset=\{\node\}$ or $S=\nodes\setminus\{\node\}$, and 
	\item $g(\graph)=1$
\end{inparaenum} results in a proper forest function \myfunction. 
}
Thus, \cref{thm:metatheorem} can be viewed as confining the hardness of the task arising from the choice of~\myfunction~to $\BO(\epsilon^{-1}\log \nnodes)$ iterations of evaluating
 $\myfunction(\component)$ for all $\component \in \components$, 
 where \components is a set of disjoint connected components. 

To illustrate the power of our result, 
we apply our machinery in three models of computation---%
\Congest, Parallel Random-Access Machine (PRAM), and \streaming (\streamingabbrv)---%
to three NP-hard CFPs: 
\begin{inparaenum}[(1)]
	\item \emph{Steiner Forest (SF)};
	\item \emph{Point-to-Point Connection (PPC)}, 
	i.e., given $\sources, \targets\subset \nodes$ of equal cardinality, 
	finding a lightest set of edges that balances the number of nodes from $\sources$ and $\targets$ in each induced connected component; and  
	\item \emph{\facloclong}~\emph{(\facloc)}, i.e., minimizing the cost of opening facilities at some nodes and connecting a set of clients $C\subseteq \nodes$ to them.
\end{inparaenum}
\Cref{tab:results} summarizes our results in \Congest;
for all problems, our PRAM algorithms require $\tildeO(\epsilon^{-3}\nedges)$ work and $\tildeO(\epsilon^{-3})$ depth,
and our \streamingabbrv algorithms need $\tildeO(\epsilon^{-3})$ passes and $\tildeO(\nnodes)$ memory.

\begin{table}[t]
	\centering
	\caption{%
		Overview of \Congest results for $\epsilon\in \Theta(1)$ derived from our algorithm. 
		Deterministic and randomized approximations are marked with D resp. R.
		Here, 
		\shortestpathdiameter is the shortest-path diameter, 
		\nterminals is the number~of terminals, 
		\ncomponents is the number of SF input components, 
		\shortcutquality is the shortcut quality of the input~graph, 
		and~$\pacomplexity\in \BO(\sqrt{\nnodes}+\hopdiameter)$, $\pacomplexity\ge \shortcutquality$, is the running time of Partwise Aggregation.  
		The $\nnodes^{o(1)}$ factors can be removed conditionally on the existence of certain cycle-cover algorithms~\cite{rozhon2022paths}.
	}\label{tab:results}
\footnotesize
\setlength{\tabcolsep}{5pt}
\begin{tabular}{r|r|rlc|ll}
	\toprule
	Problem&\multicolumn{1}{l|}{LB}&\multicolumn{3}{l}{Previous Work}&\multicolumn{2}{|l}{Our Work}\\
	&\multicolumn{1}{r|}{Ref.}&\multicolumn{1}{l}{APX}&Complexity&Ref.&\multicolumn{1}{l}{APX}&Complexity\\
	\midrule
	\multirow{2}{*}{SF(--IC)}
		&\multirow{2}{*}{$\tildeOmega(\shortcutquality)$~\cite{haeupler2021universally}}&$(2+\epsilon)$~D & $\tildeO(\shortestpathdiameter\ncomponents+\sqrt{\min\{\shortestpathdiameter\nterminals,\nnodes\}})$
		&\multirow{2}{*}{\cite{lenzen2014steiner}}
		&\multirow{2}{*}{$(2+\epsilon)$ D}&  \multirow{2}{*}{$\tildeO(\min\{\pacomplexity\nnodes^{o(1)},\sqrt{\nnodes}+\hopdiameter\}+\ncomponents)$}\\
	&&$\BO(\log\nnodes)$~R & $\tildeO(\min\{\shortestpathdiameter,\sqrt{\nnodes}\}+\hopdiameter+\ncomponents)$&&&\\
	PPC&$\tildeOmega(\shortcutquality)$~\cite{haeupler2021universally}&\multicolumn{3}{c|}{---}&$(2+\epsilon)$~D&$\tildeO(\min\{\pacomplexity\nnodes^{o(1)},\sqrt{\nnodes}+\hopdiameter\})$\\
	\facloc&
	$\tildeOmega(\shortcutquality)$~\cite{haeupler2021universally}&\multicolumn{3}{c|}{---}
	&$(2+\epsilon)$ D&$\tildeO(\min\{\pacomplexity\nnodes^{o(1)},\sqrt{\nnodes}+\hopdiameter\})$\\
	\bottomrule
\end{tabular}
\end{table}

\begin{table}[t]
	\centering
	\caption{%
		Overview of our \Congest results for the four input variants of SF, 
		with $\epsilon\in \Theta(1)$ and the notation of \cref{tab:results}. 
		Again, the $\nnodes^{o(1)}$ stems from the need for better cycle-cover~algorithms.
	}\label{tab:results-sf}
\footnotesize
\setlength{\tabcolsep}{5pt}
\begin{tabular}{r|p{0.3575\linewidth}rrr}
	\toprule
	Problem&Input&LB&APX&Complexity\\\midrule
	SF--IC&Component identifiers ${\componentidentifier\colon \nodes\rightarrow [\ncomponents]\cup\{\bot\}}$; node \node knows $\componentidentifier_\node$&$\tildeOmega(\shortcutquality+\ncomponents)$~R&$(2+\epsilon)$ D&$\tildeO(\min\{\pacomplexity\nnodes^{o(1)},\sqrt{\nnodes} + \hopdiameter\} + \ncomponents)$\\[1pt]
	SF--CIC&As in SF--IC, but node \node knows $\componentidentifier_\node$ \emph{and} $\cardinality{\{\othernode\in\nodes\mid \componentidentifier_\othernode = \componentidentifier_\node\}}$&$\tildeOmega(\shortcutquality+\ncomponents)$~D&$(2+\epsilon)$ R&$\tildeO(\nnodes^{\nicefrac{2}{3}} + \hopdiameter)$\\[1pt]
	SF--CR&Each node \node is given $\requests_\node\subseteq\nodes\setminus\{\node\}$&$\tildeOmega(\shortcutquality+\nterminals)$~R&$(2+\epsilon)$ D&$\tildeO(\min\{\pacomplexity\nnodes^{o(1)},\sqrt{\nnodes} + \hopdiameter\} + \nterminals)$\\[1pt]
	SF--SCR&$\requests\subseteq\binom{\nodes}{2}$; node \node knows\newline $\requests_\node = \{\othernode\in\nodes\mid \{\othernode,\node\}\in\requests\}$&$\tildeOmega(\shortcutquality+\nterminals)$~D&$(2+\epsilon)$ R&$\tildeO(\min\{\pacomplexity\nnodes^{o(1)},\sqrt{\nnodes} + \hopdiameter\})$\\
	\bottomrule
\end{tabular}
\end{table}

\paragraph{Approaching Universal Optimality.}
In our upper bounds for \Congest, $\sqrt{\nnodes}+\hopdiameter$ can largely be replaced by $\pacomplexity \nnodes^{o(1)}$, where $\pacomplexity$ is the running time of an algorithm performing Partwise Aggregation~\cite{haeupler2018faster}.
Due to the respective hardness results~\cite{haeupler2021universally}, 
this implies that the running times of our solutions to PPC and \facloc are universally optimal up to a factor of $\nnodes^{o(1)}$.

For SF--IC, this is true up to the additive term of \ncomponents, the number of input components that need to be connected.
Here, Partwise Aggregation is insufficient: 
Evaluating \myfunction requires us to determine, for each set $\nodes_i$, 
if it is contained in a single connected component induced by the set of edges that have been selected into the current (partial) solution, 
but the $\nodes_i$ may not induce connected components in \graph.
Existential lower bounds demonstrate that this obstacle is unavoidable in general~\cite{lenzen2014steiner}.
We propose a new graph parameter, the $p$-\emph{weave quality} 
$\graphparameter(p)$, and the corresponding task of \emph{Disjoint Aggregation on $p$ parts} ($\disjointagg(p)$), i.e., to perform an aggregation on each set in a given partition of \nodes into $p$ parts.
The additive $k$ in the running time of our SF algorithm for \Congest then is replaced by \dacomplexity{k}, the running time of disjoint aggregation on $k$~parts.

Last but not least, an orthogonal consideration yields surprising results, summarized in~\Cref{tab:results-sf}.
For the SF problem, the input representation drives the problem complexity.
It is known that if to encode connectivity requirements, 
nodes are given the identifiers of other nodes they must connect to in the output (SF--CR), 
an existential lower bound of $\tildeOmega(\nterminals)$ applies~\cite{lenzen2014steiner}, 
where \nterminals is the number of \emph{terminals}, 
i.e., nodes that need to be connected to some other node.
In our corresponding upper bound, the additive \ncomponents then becomes an additive \nterminals, again resulting in existential but not universal optimality.
Interestingly, this picture is turned upside down when inputs are \emph{symmetric} in the following sense: 
If $\othernode\in \nodes$ knows that it must connect to $\node\in \nodes$ by the input, then also \node knows that it must connect to \othernode (SF--SCR).
In this setting, we can exploit symmetry to efficiently evaluate \myfunction using randomized equality testing.
This leads to an algorithm running in $\epsilon^{-3}\pacomplexity n^{o(1)}$ rounds, which is close to universal optimality, provided that an efficient partwise-aggregation routine is available.
Similarly, we observe that the $\tildeOmega(\ncomponents)$ bound can be circumvented if nodes receive not only the identifier of their input component $\nodes_i$, 
but also the size $\cardinality{\nodes_i}$ of this input component (SF--CIC).
We then obtain a randomized algorithm running in $\tildeO(\epsilon^{-3}(\sqrt{\nnodes}+\hopdiameter)+\epsilon^{-1}n^{\nicefrac{2}{3}})$ time.

We note that a simple adaptation of the lower-bound construction from~\citet{lenzen2014steiner} shows the same hardness (i.e., $\tildeOmega(\nterminals)$ for SF--SCR and $\tildeOmega(\ncomponents)$ for SF--CIC, respectively) for \emph{deterministic} algorithms, 
based on a communication-complexity reduction from $2$-player equality testing.
To the best of our knowledge, this is the first natural example of a provably large gap between the randomized and deterministic complexity in the \Congest model for a \emph{global} problem.

\paragraph*{Structure.}
Having introduced our main definitions in \cref{sec:preliminaries},
we provide a technical overview of our results in \cref{sec:overview}. 
To conclude the main text, we discuss open questions in \cref{sec:conclusion}. 
All proofs, model-specific topics, and further related work are deferred to our comprehensive \hyperref[sec:apx-start]{Appendix}.

\section{Preliminaries}
\label{sec:preliminaries}

To begin, we introduce our basic notation and 
the class of problems we are interested in. 

\paragraph{Basic Notation (\cref{tab:notation}).}
For a set \thesubset, we denote its cardinality by $\cardinality{\thesubset}$, 
its power set by ${2^\thesubset = \{X\mid X\subseteq S\}}$, 
and the set of its $k$-element subsets by $\binom{\thesubset}{k}$. 
We extend functions $\myfunction\colon \thesubset\rightarrow \reals$ to subsets $X\subset \thesubset$ in the natural way by setting $\myfunction(X) = \sum_{x\in X}\myfunction(x)$, 
and write the sets of positive and nonnegative integers no greater than $k$ as $\positiveintegers{k} = \{i\in\naturalsnozero\mid i \leq k\}$ resp.  $\nonnegativeintegers{k} = \{i\in\naturals\mid i \leq k\}$.

We consider weighted graphs $\graph = (\nodes, \edges)$ 
with $\nnodes = \cardinality{\nodes}$ nodes, 
$\nedges = \cardinality{\edges}$ edges, 
and edge weights (\emph{edge costs}) $\cost\colon\edges\rightarrow\naturals$ polynomially bounded in \nnodes.\footnote{%
	Assuming polynomially bounded edge weights allows us to encode polynomial sums of edge weights with $\BO(\log \nnodes)$ bits, 
	which means that we can encode edge weights in a single message (\Congest) or memory word (PRAM and \streamingabbrv). 
	Zero-weight edges arise naturally when simulating contractions in the distributed setting. 
	We can handle them by scaling all non-zero edge weights by $\nicefrac{\nnodes}{\epsilon}$ (where w.l.o.g., $\nicefrac{1}{\epsilon}$ is polynomially bounded as well), 
	and assigning weight $1$ to all previously zero-weight edges.\label{fn:zero-weight-handling}
} 
Each node is equipped with a unique identifier of $\BO(\log \nnodes)$ bits, 
which is also used break ties.
An $\ell$-hop \emph{path} from $\othernode\in\nodes$ to $\node\in\nodes$, 
denoted $\somepath(\othernode,\node)$, 
is a sequence of distinct edges $(\edge_1,\dots\edge_\ell)$ 
arising from a sequence of distinct nodes $(\node_1,\dots,\node_{\ell+1})$ such that $\edge_i = \{\node_i,\node_{i+1}\} \in \edges$ for $i\in[\ell]$, 
$\node_1 = \othernode$, and $\node_{\ell+1} = \node$. 
The (unweighted) \emph{hop distance} between \othernode and \node is the smallest number of hops needed to go~from \othernode to \node, 
i.e., $\hopdistance(\othernode,\node) = \min\{i\mid\exists~\somepath(\othernode,\node)~\text{with}~\cardinality{\somepath(\othernode,\node)} = i\}$, 
and the \emph{hop diameter} of \graph is $\hopdiameter = \max\{\hopdistance(\othernode,\node)\mid \{\othernode,\node\}\in\binom{\nodes}{2}\}$. 
The (weighted) \emph{shortest-path distance} between \othernode and \node is ${\distance(\othernode,\node) = \min\{i \mid \exists~\somepath(\othernode,\node)~\text{with}~\cost(\somepath(\othernode,\node)) = i\}}$.
A~\emph{shortest path} between \othernode and \node, denoted $\shortestpath(\othernode,\node)$, 
is a path from \othernode to \node of length $\distance(\othernode,\node)$.
The \emph{shortest-path diameter} \shortestpathdiameter of \graph is
the maximum over all $\{\othernode,\node\}\in\binom{\nodes}{2}$ of the minimum number of hops contained in a shortest path from \othernode to \node, 
i.e.,  ${\shortestpathdiameter = \max\{\min \{\cardinality{\shortestpath(\othernode,\node)} \mid \{\othernode,\node\}\in\binom{\nodes}{2}\}\}}$. 
Given a \emph{cut} $(\thesubset, \nodes\setminus\thesubset)$, 
the set of edges with exactly one endpoint in~$\thesubset$ is denoted as $\cut(\thesubset) = \{\edge\in\edges\mid \cardinality{\edge\cap\thesubset} = 1\}$.

In complexity statements, 
we use $\tildeO$ and $\tildeOmega$ to suppress factors of $\log^{\BO(1)}\nnodes$.
An event occurring \emph{with high probability} (w.h.p.) has probability at least $1-\nicefrac{1}{\nnodes^c}$ for a freely chosen constant $c\geq 1$.

\paragraph{Constrained Forest Problems.}
\label{subsec:problems}
We are interested in Constrained Forest Problems (CFPs) as introduced by \citet{goemans1995approximation}.\footnote{%
	To simplify the technical exposition, like \citet{goemans1995approximation}, 
	we disallow zero-weight edges. 
	However, it is straightforward to extend our approach to zero-weight edges by scaling edge weights as discussed in \cref{fn:zero-weight-handling}.
	}
Given a graph $\graph = (\nodes,\edges)$ with edge costs $\cost: \edges\rightarrow \naturalsnozero$
and a function $\myfunction: 2^\nodes \rightarrow \{0,1\}$, 
a CFP asks us to solve the integer program stated as \cref{prob:constrained-forests}, 
whose dual relaxation is provided as \cref{prob:dual-relaxation}.

\begin{minipage}[t]{0.5\linewidth}
	\begin{problem}[CFP Primal IP]
		\label{prob:constrained-forests}
		\begin{align*}
			&\min \sum_{\edge\in\edges}\cost(\edge)\xvar_\edge\\
			\text{s.t.}\quad&x(\cut(\thesubset))\geq\myfunction(\thesubset)\quad\forall~\emptyset\neq\thesubset\subset\nodes\\
			&\xvar_\edge\in\{0,1\}\quad\forall~\edge\in\edges
		\end{align*}
	\end{problem}
\end{minipage}~%
\begin{minipage}[t]{0.5\linewidth}
	\begin{problem}[CFP Dual LP]
		\label{prob:dual-relaxation}
		\begin{align*}
			&\max\sum_{\thesubset\subset\nodes} \myfunction(\thesubset)\yvar_\thesubset\\
			\text{s.t.}\quad&\sum_{\thesubset: e\in\cut(\thesubset)} \yvar_\thesubset \leq \cost(\edge)\quad\forall~\edge\in\edges\\
			&\yvar_\thesubset\geq 0\;
		\end{align*}
	\end{problem}
\end{minipage}
\vspace*{1pt}

That is, a CFP is a minimization problem 
whose optimal solution is a \emph{forest} of edges from the input graph \graph 
that meets the \emph{constraints} imposed by the \emph{forest function} \myfunction.
Like \citet{goemans1995approximation}, 
we consider CFPs with \emph{proper} functions \myfunction, which satisfy
\begin{inparaenum}[(1)]
	\item \emph{zero}, i.e., $\myfunction(\nodes) = 0$, 
	\item \emph{symmetry}, i.e.,  
	$\myfunction(\thesubset) = \myfunction(\nodes\setminus\thesubset)$, and 
	\item \emph{disjointness} (also called \emph{maximality} \cite{goemans1996primal}), i.e., 
	if $A\cap B = \emptyset$ for two sets $A$ and $B$, then $\myfunction(A) = \myfunction(B) = 0$ implies $\myfunction(A\cup B) = 0$.\footnote{%
		We also assume that \myfunction is \emph{nontrivial},
		i.e., that there exists at least one $\thesubset\subset\nodes$ such that  $\myfunction(\thesubset) = 1$.
	}
\end{inparaenum}

For a CFP with forest function \myfunction, 
a node \node is called a \emph{terminal} if $\myfunction(\{\node\}) = 1$, 
and the number of terminals $\terminals = \{\{\node\}\mid \node\in\nodes, \myfunction(\{\node\}) = 1\}$ for a given function \myfunction is denoted as ${\nterminals  = \cardinality{\terminals}}$.

\paragraph{Specific CFPs.}
To demonstrate the flexibility of our approach, we consider Survivable Network Design Problems (SNDPs) that originate from three different real-world challenges.
\begin{definition}[Steiner Forest (SF)]
Given a partition of the terminals $\terminals=\nodes_1\dot{\cup}\ldots\dot{\cup}\nodes_{\ncomponents}$, find a minimum-cost edge subset connecting the nodes in each input component.
The corresponding forest function evaluates to $1$ on $\thesubset\subseteq \nodes$ if and only if there is $i\in \positiveintegers{\ncomponents}$ such that $\emptyset \neq \nodes_i\cap \thesubset\neq \nodes_i$.
We distinguish between several input representations:
\begin{compactitem}[left=3em]
\item [SF--CR] Terminal $v$ is given the identifiers of other terminals as \emph{connection requests} $R_v\subseteq \terminals$.
\item [SF--SCR] As SF-CR, but connection requests are symmetric, i.e., $\othernode\in R_v\Leftrightarrow \node \in R_u$.
\item [SF--IC] Terminal $\node\in \nodes_i$ is given a unique identifier (of size $\BO(\log \nnodes)$) for its \emph{input component}.
\item [SF--CIC] As SF-IC, but $\node\in \nodes_i$ is also given the \emph{cardinality} $\cardinality{\nodes_i}$ of its input component as input. 
\end{compactitem}
\end{definition}
SF has practical relevance especially in infrastructure development \cite{agrawal1995trees}, 
with MST ($\terminals=\nodes_1=\nodes$) and ST ($\terminals=\nodes_1\subseteq \nodes$) as special cases.
It is NP-complete \cite{karp1972reducibility} and APX-hard~\cite{chlebik2008steiner}.

\begin{definition}[Point-to-Point Connection (PPC)]
Given a set of sources $\sources\subset \nodes$ and a set of targets $\targets\subset \nodes$, find a minimum-cost edge subset such that in each connected component, the number of sources equals the number of targets. 
That is, for $\thesubset\subseteq \nodes$, $\myfunction(\thesubset)=1$ if and only if $\cardinality{\thesubset\cap \sources}\neq\cardinality{\thesubset\cap \targets}$.
\end{definition}
PPC is motivated by challenges from circuit switching and VLSI design, and NP-complete~\cite{li1992delivery}.

Our last problem is \facloclong (\facloc), 
an NP-complete facility-location-type problem arising, e.g., in operations research. 
Intuitively, \facloc can be stated as follows.
\begin{definition}[\facloc\ {[}intuitive{]}]
	Given for each node $\node\in\nodes$ an opening cost $o_\node\in \naturalsnozero$ and indication whether it is in the set of clients $C\subseteq \nodes$,
identify a subset $O\subseteq \nodes$ of facilities to open and an edge set $F\subseteq \edges$ such that each client is connected to a facility by $F$, minimizing $\sum_{\node\in O}o_\node + \sum_{\edge\in F}\cost(\edge)$.
\end{definition}

To turn this task into a CFP matching our framework, 
we add one additional node $s\notin \nodes$ and, for each $\node\in \nodes$, an edge $\{\node,s\}$ of weight $\cost(\{\node,s\})=o_\node$.
The task then becomes to determine a (low-weight) Steiner Tree spanning $C\cup \{s\}$, i.e., the special case of SF with $k=1$.
\begin{definition}[\facloc\ {[}rephrased{]}]
	Given, for each node $\node\in\nodes$, an opening cost $o_\node\in \naturalsnozero$ and an indication whether it is in the set of clients $C\subseteq \nodes$, 
	solve ST on $\graph = (V\dot{\cup}\{s\},E\dot{\cup}\{\{v,s\}\,|\,v\in V\})$,
	with edge costs of $c(e)$ for $e\in E$ and $c(\{v,s\})=o_v$ for $v\in V$
	as well as terminals $\terminals=C\cup\{s\}$. 
\end{definition}
Even in \Congest, we can solve ordinary ST instances efficiently, 
regardless of the specific input representation.
However, the \emph{virtual node} $s$ and its incident edges need to be \emph{simulated} in the chosen model of computation.
This is trivial in PRAM (simply modify the input representation in parallel) and \streaming (since we use $\Omega(n\log n)$ bits of memory anyway).
For \Congest, we show that  we can simulate the virtual node efficiently using Partwise Aggregation.

\paragraph{Partwise Aggregation and Shortcut Quality.}
Finally, we introduce a subroutine that we use as a black box to achieve near-universal optimality in cases where efficient solutions are known.
\begin{definition}[Partwise Aggregation~\cite{ghaffari2016algorithms}]
For disjoint node sets $V_1,\ldots,V_k\subseteq V$, suppose that $V_i$ induces a connected subgraph.
In the \emph{Partwise Aggregation (PA)} problem, each $v\in V_i$ is given a unique identifier for $V_i$ (of size $\BO(\log n)$) and a second $\BO(\log n)$-bit value $x(v)\in X$. For a specified associative and commutative operator $\bigoplus\colon X\times X\to X$, for each $i\in [k]$ and each $v\in V_i$, $v$ needs to compute its output $\bigoplus_{w\in V_i}x(w)$.
\end{definition}
\begin{definition}[Shortcut Quality]
The \emph{shortcut quality} of \graph, denoted by \shortcutquality, is the maximum over all feasible operators $\bigoplus$ and partitions of $V$ of the minimum number of rounds in which a \Congest algorithm with knowledge of the full topology can solve Partwise Aggregation. Put differently, the algorithm may preprocess the graph and the operator, but must then compute the output within \shortcutquality rounds after the nodes have been given their inputs to the PA instance.
\end{definition}
\citet{haeupler2021universally} show that $\tildeOmega(\shortcutquality)$ is a lower bound for MST 
(i.e., SF and ST with $\ncomponents=1$ and $\nodes_1=\nodes$) and shortest $s$-$t$ path (the special case of PPC with $|\sources|=|\targets|=1$)---%
regardless of the approximation ratio and also for randomized Las Vegas algorithms, i.e., those that guarantee a feasible output.
They also prove that PA can be solved in $\tildeO(\shortcutquality)$ rounds in the \emph{Supported \Congest} model, 
which is \Congest with the unweighted graph topology given as part of the input.

\section{Technical Overview}\label{sec:overview}

In this section, we provide a high-level outline of our techniques and the technical challenges to be overcome.
We also use this opportunity to discuss the most relevant related work in context.

\paragraph{Model-Agnostic Algorithm for Proper Constrained Forest Problems.}

Our shell-decom\-position algorithm, 
depicted in \Cref{fig:overview},
is based on the primal-dual formulation for general CFPs given by \citet{goemans1995approximation} (GW~algorithm), 
but it can also be seen as a model-agnostic generalization of the algorithm by \citet{agrawal1995trees}, 
ported to the distributed setting by \citet{lenzen2014steiner}. 
These algorithms have also been called \emph{moat-growing} algorithms~\cite{gupta2009constant,archer2011improved,lenzen2014steiner}.

Intuitively, our algorithm operates as follows.
We maintain connected \emph{components} \components, 
initialized to the singletons $\components = \{\{\node\} \mid \node\in \nodes\}$.
A component $\component\in\components$ is \emph{active} 
if $\myfunction(\component)=1$.
The algorithm concurrently grows balls around all \emph{active components}, with respect to the metric induced by the then-current edge costs \reducedcost.
In the growth process, two balls can touch only when at least one of their associated components is active,
and only when the balls of \emph{two active components} \component, $\component'$ touch, 
adding edges to merge the components can make the resulting component inactive---%
otherwise, i.e., assuming ${\myfunction(\component\cup \component')=f(\component')=0}$, 
symmetry implies $\myfunction(\nodes\setminus (\component\cup \component'))=0$, 
disjointness $\myfunction(\nodes\setminus \component)=\myfunction((\nodes\setminus (\component\cup \component'))\cup \component')=0$, 
and using symmetry again we get $\myfunction(\component)=0$, a contradiction.

Until the balls around two components touch, 
they are disjoint, 
witnessing that the dual problem 
has a solution with weight larger than the product of the current radius times the number of currently active components.
Accordingly, when merging active components, 
we can afford to connect the terminals by adding a shortest path between them to the primal solution, 
paying a cost of at most twice the radius at merge.
Because each merge we~perform reduces the number of active components by at least one, 
the ball growth always witnesses sufficient additional weight in a dual solution to pay for future merges up to an approximation factor of~$2$.
Upon termination, i.e., when all components are inactive, 
the set of edges we selected constitutes a feasible solution.

The intuition sketched above already suggests that the ball-growing process allows for substantial concurrency.
To achieve high efficiency across the board in various models,
our shell-decomposition algorithm performs several modifications to the original GW algorithm.
\begin{enumerate}[nosep,label=(\Alph*)]
	\item \emph{Incremental Solution-Set Construction.}\label{relaxation:incremental-solution} 
	We merge active components that touch \emph{regardless} of whether this ultimately turns out to be necessary to satisfy connectivity requirements. 
	This does not impact the approximation guarantee, 
	which was implicit already in the contribution by \citet{agrawal1995trees} 
	and is now made explicit in our reformulation of the framework of \citet{goemans1995approximation}. 
	As a result,
	we need to determine if \myfunction imposes further connectivity requirements 
	only as often as we iterate through the loop in \Cref{fig:overview}.
  \item \emph{Approximate Distance Computations.}\label{relaxation:approximate-distances}  
  At the cost of a factor of $1+\epsilon$ in the approximation ratio,
  we replace exact distance computations with $(1+\epsilon)$-approximate distance computations. 
  The challenge here is to ensure that we do not violate the dual constraints when charging dual variables (corresponding to cuts) based on the progress of the primal solution. 
  This can be achieved by constructing the dual solution in the true metric space, 
  rather than reusing the approximate distances leveraged by the primal solution.
  In contrast to the other modifications, this requires a comparatively involved argument, and it is the main technical novelty and contribution in this part of our work.
  \item \emph{Deferred Forest-Function Evaluation.}\label{relaxation:deferred-updates} 
  Again at the cost of a factor of $1+\epsilon$ in the approximation ratio,
  we let components grow to radii that are integer powers of $1+\epsilon$ 
  before reevaluating our forest function to update their activity status. 
  This technique was introduced by \citet{lenzen2014steiner} for the Steiner Forest problem specifically; 
  we show its general correctness in the context of the \GW algorithm.
  Using this technique, we can limit the number of loop iterations in~\Cref{fig:overview} to $\BO(\epsilon^{-1}\log\nnodes)$ (assuming polynomially bounded edge weights).
\end{enumerate}
We prove that our model-agnostic shell-decomposition algorithm combines these changes while maintaining an approximation ratio of $2+\varepsilon$, 
which is detailed in \Cref{apx:centralized-algorithm}.

Leveraging the modifications specified in  \Cref{relaxation:approximate-distances,relaxation:deferred-updates,relaxation:incremental-solution}, 
to derive concrete algorithms in specific models of computation, 
what remains is to implement the individual steps in~\Cref{fig:overview}.
Up to simple book-keeping operations, this entails four main tasks (drawn as colored boxes in \Cref{fig:overview}):
\begin{enumerate}[nosep]
  \item \emph{(Approximate) 
  	Set-Source Shortest-Path Forest (aSSSP).} 
  This is essentially computing a single-source shortest-path tree with a virtual source node, 
  so we can plug in state-of-the-art algorithms for each model of interest~\cite{becker21stream,rozhon2022paths}. 
  \item \emph{Minimum Spanning Tree (MST).} 
  This is another well-studied task for which near-optimal solutions are known in all prominent models~\cite{pettie2002optimal,haeupler2021universally,pettie2002randomized,feigenbaum2005graph}.
  \item \emph{\stepthree (\stepthreeabbrv).} 
  Here, we are given a forest rooted at sources 
  (where each node knows its parent in its tree and its closest source), 
  along with a number of marked nodes.
  Our goal is to select the edges on the path from each marked node to its closest source.
  This problem can be straightforwardly addressed by solving a much more general flow problem: 
  \emph{(Approximate) Transshipment},
  in which flow demands are to be satisfied at minimum cost. 
  Again, we can plug in state-of-the-art algorithms for each model of interest~\cite{becker21stream,rozhon2022paths}. 
  Note that in our case, the solution is restricted to containing the edges of a predetermined forest,
  such that the solution is unique, 
  edge weights play no role, 
  and the challenge becomes to determine the feasible flow quickly.
  \item \emph{Forest-Function Evaluation (FFE).} 
  The remaining subtask is to assess if $\myfunction(\component)=1$,
  for each component $\component\subset \nodes$ in a set of disjoint, internally connected components \components---%
  i.e., to determine which of the components still need to be connected to others. 
  This is the only step that depends on \myfunction, 
  requiring an implementation of \myfunction matching the given model of computation.
\end{enumerate}
Note that \myfunction is an arbitrary proper function, 
so evaluating it can be arbitrarily hard.
Thus, our algorithm confines the hardness of the task that comes from the choice of \myfunction 
to $O(\epsilon^{-1}\log \nnodes)$ evaluations of $\myfunction(\component)$ for disjoint connected components~$\component\in\components$ (cf.~\Cref{relaxation:incremental-solution}).
In contrast, the other three subtasks can be solved by state-of-the-art algorithms from the literature in a black-box fashion. 

To illustrate the power of our result, 
we apply our machinery in three models of computation---%
\Congest, Parallel Random-Access Machine (PRAM), and \streaming (\streamingabbrv)---%
to three Constrained Forest Problems. 
Our results follow from \Cref{thm:metatheorem} with 
\begin{inparaenum}[(1)]
	\item the referenced results on aSSSP, MST, and \stepthreeabbrv, 
	\item model- and problem-specific subroutines for FFE, and 
	\item model-specific subroutines for book-keeping operations.  
\end{inparaenum}

In \Cref{tab:results}, we highlight the improvements over the state of the art 
that we achieve for our example problems
by instantiating our shell-decomposition algorithm in the \Congest model.
\begin{enumerate}[nosep,label=(\Roman*)]
	\item \emph{Steiner Forest (SF).} 
	From $\tildeO(\epsilon^{-1}(\shortestpathdiameter\ncomponents+\sqrt{\min\{\shortestpathdiameter\nterminals,\nnodes\}}))$ time\footnote{%
		Recall that \nnodes denotes the number of nodes, 
		\ncomponents the number of input components, 
		$\nterminals$ the number of terminals, 
		\hopdiameter the unweighted (hop) diameter, and 
		\shortestpathdiameter the shortest-path diameter. 
		Both \nterminals and \shortestpathdiameter can be up to $\Omega(\nnodes)$, 
		regardless of \ncomponents or \hopdiameter.
	} for a $(2+\epsilon)$-approximation and $\tildeO(\min\{\shortestpathdiameter,\sqrt{\nnodes}\}+\hopdiameter+\ncomponents)$ time for an $\BO(\log n)$-approximation to SF--IC obtained by \citet{lenzen2014steiner} to $(2+\epsilon)$-approximations in time
	\begin{inparaenum}[(1)]
	\item $\tildeO(\epsilon^{-3}(\sqrt{\nnodes}+\hopdiameter)+\epsilon^{-1}\ncomponents)$ for SF--IC,
	\item $\tildeO(\epsilon^{-3}(\sqrt{\nnodes}+\hopdiameter))$ for SF--SCR, and
	\item $\tildeO(\epsilon^{-3}(\sqrt{\nnodes}+\hopdiameter)+\epsilon^{-1}\min\{n^{2/3},k\})$ for SF--CIC.
\end{inparaenum}
	For SF--CR, we incur the same running-time overhead of $\BO(\nterminals)$ as \citet{lenzen2014steiner}, 
	matching the corresponding existential lower bound they showed.
	\item \emph{Point-to-Point Connection (PPC).} 
	We are not aware of prior work providing \Congest algorithms for the PPC problem. 
	Here, we obtain a $(2+\epsilon)$-approximation in $\tildeO(\epsilon^{-3}(\sqrt{\nnodes}+\hopdiameter))$~time.
	\item \emph{\facloclong (\facloc).} 
	We do not know of any existing \Congest algorithms for the \facloc problem,  
	and we realize a $(2+\epsilon)$-approximation in $\tildeO(\epsilon^{-3}(\sqrt{\nnodes}+\hopdiameter))$~time.
\end{enumerate}
We also derive algorithms for SF, PPC, and \facloc in the PRAM and Multi-Pass Streaming models, 
taking $\tildeO(\epsilon^{-3}\nedges)$ work and $\tildeO(\epsilon^{-3})$ depth in the PRAM model, 
as well as $\tildeO(\epsilon^{-3})$ passes and $\tildeO(\nnodes)$ space in the Multi-Pass Streaming model.
To the best of our knowledge, these algorithms are either the first ones to perform these tasks in their respective models or they cover more general classes of instances than the state of the art.
Notably, we obtain our diverse results with relative ease once the analysis of our model-agnostic shell-decomposition algorithm is in place---%
which contrasts with the challenges of directly designing specific algorithms for specific problems in specific models.

\paragraph{Taking Shortcuts.}
In the \Congest model, 
the $\sqrt{\nnodes}$ term in the complexity is due to the fact that, in general, 
it is not possible to solve \emph{Partwise Aggregation} (PA) in $\tildeo(\sqrt{\nnodes})$ rounds.
PA (see \cref{sec:preliminaries}) denotes the task of performing an aggregation or broadcast operation on each subset in a partition of \nodes that induces connected components \cite{ghaffari2021congestion}.
We can leverage PA, inter alia, 
to determine a leader and distribute its ID, 
or to find and make known a heaviest outgoing edge, 
for each component (a.k.a.\ \emph{part}) in parallel.
Such operations are key to efficient MST construction,
and any \pacomplexity-round solution to PA lets us solve MST in $\tildeO(\pacomplexity+\hopdiameter)=\tildeO(\pacomplexity)$ rounds \cite{ghaffari2016algorithms,ghaffari2021congestion}.

As MST computation is a CFP, 
it might not surprise that Partwise Aggregation can serve as a key subroutine for other CFPs in the \Congest model as well.
We show that the $\sqrt{\nnodes}$ term in the complexity of CFPs can be replaced by \pacomplexity.
Since this is already known for
$(1+\epsilon)$-approximate Set-Source Shortest-Path Forest \cite{rozhon2022paths},
Minimum Spanning Tree~\cite{ghaffari2016algorithms}, 
and $(1+\epsilon)$-approximate~Transship\-ment~\cite{rozhon2022paths} 
(which can be used to solve \stepthree),\footnote{%
	The results for approximate Set-Source Shortest-Path Forest and approximate Transshipment are conditional on the existence of fast $(\tildeO(1),\tildeO(1))$-cycle-cover algorithms; 
	otherwise, we incur an $\nnodes^{o(1)}$ overhead (see~\cref{tab:results}).
}  
again we need to show this for Forest-Function Evaluation only.
This is straightforward for PPC and \facloc, 
and simple algorithms evaluate the forest function for SF-IC and SF-CR in $\BO(\pacomplexity+\ncomponents)$ and $\BO(\pacomplexity+\nterminals)$ rounds, respectively.

The substantial literature on \emph{low-congestion shortcuts} provides a large array of results on solving Partwise Aggregation in time comparable to the \emph{shortcut quality} of the input graph, 
i.e., the best possible running time \pacomplexity for Partwise Aggregation~\cite{zuzic2022paths,rozhon2022paths,haeupler2021universally,haeupler2022expander,ghaffari2022universally,ghaffari2021congestion,haeupler2016shortcuts,haeupler2016nearoptimal,anagnostides2023almost}.
In particular, $\pacomplexity\in \tildeO(\hopdiameter)$ if the input graph does not contain a fixed $\tildeO(1)$-dense minor, 
without precomputations or further knowledge of \graph \cite{ghaffari2021congestion}.
Examples of graphs without $\tildeO(1)$-dense minors are planar graphs and, more generally, graphs of bounded genus.
Moreover, in \emph{Supported \Congest} (where $(\nodes,\edges)$ is known---or rather, can be preprocessed), 
$\pacomplexity\in \tildeO(\shortcutquality)$, i.e., within a polylogarithmic factor of the optimum.
Due to the modular structure of our results, 
any future results on Partwise Aggregation and low-congestion shortcuts will automatically improve the state of the art for CFPs in \Congest.

\paragraph{The Quest for Universal Optimality.}
In the \Congest model, 
it is known that even on a \emph{fixed} network topology, 
$\tildeOmega(\pacomplexity)$ rounds are required to obtain \emph{any} non-trivial approximation for a large class of problems.
This class includes MST, a special case of SF and ST, 
and shortest $s$-$t$-path~\cite{haeupler2021universally}, a special case of PPC.
Thus, this \emph{universal} lower bound applies to our example tasks as well. In particular, our results on PPC and \facloc have almost universally tight running times.

In contrast, the additive terms of \ncomponents and \nterminals in the running times of our algorithms for SF-IC and SF-CR, respectively, are only \emph{existentially} optimal~\cite{lenzen2014steiner}: 
The lower-bound graph---a double star---has shortcut quality $\BO(1)$,
but in a fully connected graph, 
it is trivial to evaluate \myfunction for all current components in $2$ rounds.
This motivates us to split Forest-Function Evaluation for Steiner Forest into two parts: 
\begin{inparaenum}[(1)]
	\item determining, for each input component $\nodes_i\subseteq\nodes$ or connection request $r\in\requests$, respectively, 
	whether the respective connectivity requirements are \emph{met}, and
	\item determining, 
	for each current component $\component \in \components$, 
	whether it contains a terminal whose connectivity requirements are \emph{not met}.
\end{inparaenum}
While the second part can be solved via standard Partwise Aggregation, 
the first part requires aggregating information within \ncomponents (or \nterminals) disjoint components that may be \emph{internally disconnected}. 
We call this task \emph{\graphtask on $p$ parts}, $\disjointagg(p)$, and propose the \emph{$p$-weave quality $\graphparameter(p)$} as a parameter to capture its universal complexity. 
Using this terminology, for constant $\epsilon$, our SF--IC and SF--CR algorithms take $\tildeO(\pacomplexity\nnodes^{o(1)} + \dacomplexity{\ncomponents})$ and $\tildeO(\pacomplexity\nnodes^{o(1)} + \dacomplexity{\nterminals})$ rounds in the \Congest model, respectively.
We conjecture that this becomes almost universally optimal if we parametrize SF--IC and SF--CR by restricting $\ncomponents \le p$ or $\nterminals \le p$, respectively, 
and obtain near-universally optimal solutions for $\disjointagg(p)$ with $\dacomplexity{p}\in \tildeO(\graphparameter(p))$. 
Note that we achieve the trivial bound $\dacomplexity{p}\in \BO(\hopdiameter+p)$ via pipelining over a BFS tree.

As an orthogonal approach, we explore the effect of the input specification on our SF results. 
The assumptions that
\begin{inparaenum}[(1)]
	\item pairs of terminals know that they need to be connected (SF--SCR), or
	\item the size of each input component is known to its constituent terminals (SF--CIC) 
\end{inparaenum}
both are plausible, 
and they change the basis of the applicable existential lower bounds from $2$-party \emph{set disjointness}~\cite{lenzen2014steiner} to $2$-party \emph{equality}, as detailed in \Cref{apx:lower}.

Assuming SF--SCR, we show how to achieve a running time of $\tildeO(\min\{\pacomplexity\nnodes^{o(1)},\sqrt{\nnodes}+\hopdiameter\})$ 
using efficient randomized $2$-party equality testing.
We sketch a solution assuming \emph{shared} randomness in our context---%
standard techniques achieve the same without shared randomness.
For each terminal-request pair $\{\othernode,\node\}$, 
we flip a fair independent coin and denote the result by $c_{\othernode,\node}\in \{0,1\}$.
Now each component \component aggregates $\sum_{\othernode\in \component}\sum_{\node\in \requests_v}c_{\othernode,\node}\bmod 2$.
Observe that if $\othernode,\node\in \component$ for request pair $\{\othernode,\node\}$, then for SF--SCR, 
it holds that $\node\in \requests_\othernode$ and $\othernode\in \requests_\node$.
Hence, if \component contains, for each request pair, 
either both terminals or none of the terminals, 
then the sum is guaranteed to be $0$ modulo $2$.
Otherwise, fix a request pair $\{\othernode,\node\}$ with $\othernode\in \component$ and $\node\in \nodes\setminus \component$.
After evaluating the sum up to coin $c_{\othernode,\node}$, it is either $0$ or $1$, 
and hence, 
by independence of $c_{\othernode,\node}$, 
with probability $\nicefrac{1}{2}$, the sum is $1$ modulo $2$.
Therefore, performing the process $\BO(\log \nnodes)$ times in parallel, we can distinguish between $\myfunction(\component) = 0$ and $\myfunction(\component) = 1$ with high probability.
This computation can be performed by a single aggregation, 
where the aggregation operator $\bigoplus$ is given by bit-wise addition modulo $2$.

Our strategy for SF--CIC is similar, but results in a much weaker bound of $\tildeO(\nnodes^{\nicefrac{2}{3}}+\hopdiameter)$;
our main point here is to demonstrate that the $\tildeOmega(\ncomponents)$ bound can be beaten.
We distinguish three cases.
\begin{inparaenum}[(1)]
  \item Components \component of size at most $\nnodes^{\nicefrac{2}{3}}$ are spanned by a rooted tree of size $\nnodes^{\nicefrac{2}{3}}$. 
  Here, we can aggregate the terminal counts, for all input components, within \component at \component's root 
  in $\BO(\nnodes^{\nicefrac{2}{3}})$ rounds to determine activity status.
  \item For each input component of size at least $\nnodes^{\nicefrac{1}{3}}$, 
  we globally aggregate if there are two distinct component IDs with a terminal from that input component. 
  This requires one aggregation for each such input component, 
  all of which can be completed within $\BO(\nnodes^{\nicefrac{2}{3}}+\hopdiameter)$ rounds via a BFS tree, 
  as there can be at most $\nnodes^{\nicefrac{2}{3}}$ input components of this size.
  \item For input components of size $s<\nnodes^{\nicefrac{1}{3}}$, each component \component of size larger than $\nnodes^{\nicefrac{2}{3}}$ uses the same strategy as for SF--SCR. 
  However, only input components of the same size can be handled in a single aggregation, as the summation is now modulo $s$. 
  Hence, $\tildeO(\nnodes^{\nicefrac{1}{3}})$ aggregations by at most $\nnodes^{\nicefrac{1}{3}}$ components of size larger than $\nnodes^{\nicefrac{2}{3}}$ are required, 
  which again can be performed within $\tildeO(\nnodes^{\nicefrac{2}{3}}+\hopdiameter)$ rounds.
\end{inparaenum}

\section{Discussion}
\label{sec:conclusion}

In this work, we presented a general model-agnostic framework for the $(2+\varepsilon)$-approximation of Constrained Forest Problems (CFPs) and demonstrated its utility on three NP-hard CFPs in three models of computation.
We conclude with a number of open questions---beyond applying our framework to other graph problems and computational models---in increasing order of generality:
\begin{compactenum}[(Q1)]
	\item 
	\textbf{Can the running time of SF--CIC in} \Congest \textbf{be improved to nearly $\pacomplexity$?}
  \item Many of our results are near-universally optimal in \Congest, even for randomized algorithms that succeed deterministically, i.e., Las Vegas algorithms---%
  but our algorithms for SF-SCR and SF-CIC are Monte Carlo. 
  Due to existential lower bounds based on the communication complexity of $2$-party equality, this is required to (always) achieve small running times w.h.p. \\
  \textbf{Is $\tildeOmega(\shortcutquality)$ also a lower bound for Las Vegas} \Congest \textbf{algorithms?}
  \item \textbf{How hard is Disjoint Aggregation, i.e., how can we characterize $\dacomplexity{p}$ and~$\graphparameter(p)$?}
	\item The \facloc problem minimizes the sum of opening costs and forest costs, disregarding~the (typically distance-based) \emph{connection costs} considered in other facility-location-type problems.\\
	\textbf{Can our approach to \facloc be generalized to problems including connection costs?}
  \item As we reduce most of our tasks to few PA instances, 
  in \Congest, $\pacomplexity \approx \shortcutquality$ rounds are both necessary and sufficient to achieve near-universal optimality. 
  Since PA and DA can be solved in $\BO(\nnodes)$ work and ${\BO(\log \nnodes) = \tildeO(1)}$ depth in PRAM, 
  PA-based algorithms also yield good solutions in PRAM. 
  While PA and DA can be solved in $\tildeO(\nnodes)$ memory and two passes in the streaming model, 
  it is unclear if this yields optimal results (unless we insist that the output needs to be held in memory),
  as we are mostly limited to existential $\tildeOmega(\sqrt{\nnodes})$ lower bounds.\\
  \textbf{Are there $\tilde{o}(\nnodes)$-memory streaming algorithms with $\tildeO(1)$ passes?
  	Or, better yet: Is there an analog to universal optimality in the Multi-Pass Streaming model?}\\
  	Note that PA and $\disjointagg(p)$ seem inadequate as parameters here, 
  	as each part will require \emph{some} memory for a few-pass implementation, but there may be $\Omega(\nnodes)$ parts.
   \item 
   \textbf{Does our approach generalize to Constrained Forest Problems on hypergraphs?}
  \item 
  Beyond proper functions, the primal-dual method has proven useful when working with \emph{uncrossing} functions \cite{goemans1996primal}. 
  One of the main features of optimization problems with uncrossing functions is that they are guaranteed to feature an optimal dual solution that is \emph{laminar}. 
  \textbf{Can our approach be extended to uncrossing or other non-proper functions?} 
\end{compactenum}

\clearpage

\section*{Acknowledgments}

Part of this work was done while C.C. was a research associate and A.M. was an intern at the Max Planck Institute for Informatics.  
C.C. was also supported by \emph{Digital Futures} at KTH Royal Institute of Technology.

\vfill

\appendix
 {\noindent\bfseries\LARGE Appendix}
 \phantomsection{}\label{sec:apx-start}
 \addcontentsline{toc}{section}{\protect\numberline{Z}Appendix}
 \vspace*{1em}
 
 \noindent In this appendix, we provide the following supplementary materials.
 \begin{itemize}[label=, leftmargin=\widthof{\textbf{H.\hspace*{0.5em}}},align=right,noitemsep]
 	\item[\textbf{\ref{apx:notation}.}]
 	Tabular overview of our notation. 
 	\item[\textbf{\ref{apx:centralized-algorithm}.}] 
 	Derivation and verification of our model-agnostic shell-decomposition algorithm. 
 	\item[\textbf{\ref{apx:distributed}.}]
 	 Results for proper forest functions in the distributed setting. 
 	\item[\textbf{\ref{apx:parallel}.}]
 	 Results for proper forest functions in the parallel setting. 
 	\item[\textbf{\ref{apx:streaming}.}]
 	 Results for proper forest functions in the streaming setting. 
 	 \item[\textbf{\ref{apx:lower}.}]
 	 Deterministic lower bounds for Steiner Forest input specifications in \Congest.
 	 \item[\textbf{\ref{apx:msf}.}]
 	 Deterministic Minimum-Spanning-Forest construction with Partwise Aggregation in \Congest.
 	 \item[\textbf{\ref{apx:related}.}]
 	 Extended discussion of related work.
 \end{itemize}

\section{Notation}
\label{apx:notation}

In \cref{tab:notation}, we summarize the main notation used in this work.

\vspace*{1em}

\begin{table}[h]
	\centering
	\caption{%
		Main notation used in this work. 
		All notation involving nodes and edges makes implicit reference to the graph \graph.
	}
	\label{tab:notation}
\footnotesize
\begin{tabular}{r@{\hspace*{0.66em}}c@{\hspace*{0.66em}}l@{\hspace*{0.95em}}l}
	\toprule
	Symbol&&Definition&Meaning\\
	\midrule
	$\cardinality{\thesubset}$&&&Cardinality of a set \thesubset\\
	$2^\thesubset$&$=$&$\{X\mid X\subseteq \thesubset\}$&Set of all subsets (i.e., \emph{power set}) of \thesubset\\
	$\binom{\thesubset}{k}$&$=$&$\{X\subseteq\thesubset\mid\cardinality{X}=k\}$&Set of $k$-element subsets of \thesubset\\
	$[k],[k]_0$&$=$&$\{i\mid i\in\naturalsnozero, i \leq k\}, \{i\mid i\in\naturals, i \leq k\}$&Set of nonnegative resp. positive integers no larger than $k$\\
	\midrule
	\graph&$=$&$(\nodes,\edges)$&Graph with node set \nodes and edge set \edges\\
	\nnodes&$=$&$\cardinality{\nodes}$&Number of nodes\\
	\nedges&$=$&$\cardinality{\edges}$&Number of edges\\
	$\cost(\edge)$&$\in$&$\naturalsnozero = \{1,2,\ldots\}$&Cost of edge \edge\\
	$\cost(\edgesubset)$&$=$&$\sum_{\edge\in\edgesubset}\cost(\edge)$&Cost of edge subset $\edgesubset\subseteq\edges$\\
	\midrule
	$\somepath(\othernode,\node)$&$=$&$(\edge_1,\ldots, \edge_\ell)$ s.t. $\exists(\node_1,\dots,\node_{\ell+1})$: &$\ell$-hop path between \othernode and \node (distinct nodes and distinct edges)\\
	&&$\edge_i = \{\node_i,\node_{i+1}\} \in  \edges$ $\forall i\in[\ell]$, &\\
	&&$\othernode = \node_1$, $\node = \node_{\ell+1}$&\\
	$\hopdistance(\othernode,\node)$&$=$&$\min\{i\mid \exists~\somepath(\othernode,\node)~\text{with}~\cardinality{\somepath(\othernode,\node)} = i\}$&Hop (= unweighted) distance between \othernode and \node\\
	\hopdiameter&$=$&$\max\{\hopdistance(\othernode,\node)\mid\{\othernode,\node\}\in\binom{\nodes}{2}\}$&Hop diameter of \graph\\
	$\distance(\othernode,\node)$&$=$&$\min\{i \mid \exists~ \somepath(\othernode,\node)~\text{with}~ \cost(\somepath(\othernode,\node)\}$&Shortest-path (= weighted) distance between \othernode and \node\\
	$\shortestpath(\othernode,\node)$&$=$&$\somepath(\othernode,\node)$ with $\cost(\somepath(\othernode,\node)) = \distance(\othernode,\node)$&Shortest path between \othernode and \node\\
	\shortestpathdiameter&$=$&$\max\{\min \{\cardinality{\shortestpath(\othernode,\node)} \mid \{\othernode,\node\}\in\binom{\nodes}{2}\}$&Shortest-path diameter of \graph\\
	$\cut(\thesubset)$&$=$&$\{\edge\in\edges\mid \cardinality{\edge\cap\thesubset} = 1\}$&Set of edges with exactly one endpoint in $\thesubset\subset \nodes$\\
	\midrule
	\terminals&$=$&$\{\node\in\nodes\mid\myfunction(\{\node\}) = 1\}$&Terminals\\
	$\nterminals$&$=$&$\cardinality{\terminals}$&Number of terminals\\
	\ncomponents&&&Number of input components in SF--IC\\
	\bottomrule
\end{tabular}
\end{table}

\clearpage

\section{Model-Agnostic Algorithm}
\label{apx:centralized-algorithm}

In this section, we develop our \emph{modified} centralized \GW algorithm 
and establish the correctness of our \emph{model-agnostic} implementation.
For completeness, we state the original \emph{centralized} \GW algorithm as \cref{alg:gw-original}. 
This algorithm yields a $(2-\nicefrac{2}{\nterminals})$-approximation, as proved by \citet{goemans1995approximation}. 

\begin{theorem}[Original \GW Algorithm \cite{goemans1995approximation}]
	\label{thm:apx-original}
	For a graph $\graph=(\nodes,\edges)$, edge costs ${\cost: \edges\rightarrow \naturalsnozero}$, and  
	proper forest function \myfunction, 
	the original \GW algorithm (\cref{alg:gw-original}) yields a $(2-\nicefrac{2}{\nterminals})$-approxima\-tion to the optimal solution.
\end{theorem}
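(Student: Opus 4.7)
The plan is to prove \Cref{thm:apx-original} via the classical primal-dual analysis of \citet{goemans1995approximation}. First I would verify that the algorithm terminates with a feasible $F'$: the main loop halts only once every current component \component satisfies $\myfunction(\component)=0$, and the disjointness and symmetry properties of proper \myfunction then imply that no cut $(\thesubset,\nodes\setminus\thesubset)$ with $\myfunction(\thesubset)=1$ can remain uncrossed by $F'$, since otherwise \thesubset (and its complement) would decompose into a disjoint union of inactive components, which by disjointness would force $\myfunction(\thesubset)=0$. Second, I would verify dual feasibility: a variable $\yvar_\thesubset$ is raised only while \thesubset is an active component, and an edge \edge is added to $F$ at exactly the moment its constraint $\sum_{\thesubset\colon\edge\in\cut(\thesubset)}\yvar_\thesubset=\cost(\edge)$ becomes tight, after which both endpoints lie in a single component and no further growth touches \edge.

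The core of the analysis is a charging argument that exploits tightness of every edge in $F'\subseteq F$:
\[
\cost(F')=\sum_{\edge\in F'}\sum_{\thesubset\colon\edge\in\cut(\thesubset)}\yvar_\thesubset=\sum_\thesubset\yvar_\thesubset\cdot|F'\cap\cut(\thesubset)|.
\]
The key combinatorial claim---and the step I expect to be the main obstacle---is that at every moment in the algorithm, if \activeset denotes the current set of active components, then $\sum_{\component\in\activeset}|F'\cap\cut(\component)|\le 2(|\activeset|-1)$. To establish it, I would contract every current component of $\graph$ and look at the image of $F'$; because $F$ is always a forest and each current component is spanned by $F$-edges, the image forms a forest on the contracted nodes. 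Reverse deletion then ensures that no leaf of this forest is an inactive component, for the unique incident edge of such a leaf could have been removed without violating feasibility. Hence every inactive contracted node has degree at least $2$, and combining this with the handshake bound $\sum_\component d_\component\le 2(\#\text{components}-1)$ for a forest yields the desired inequality. Pinning down the exact invariant that reverse deletion preserves, for a general proper \myfunction rather than Steiner Forest specifically, is the delicate part here and relies on all three defining properties---zero, symmetry, and disjointness---of \myfunction.

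To finish, I integrate over phases. Writing $\epsilon_i$ for the uniform radius increment in phase $i$ and $\activeset_i$ for the active set during that phase, the dual objective equals $\sum_i\epsilon_i|\activeset_i|$ while $\cost(F')=\sum_i\epsilon_i\sum_{\component\in\activeset_i}|F'\cap\cut(\component)|$. Active components always contain an unsatisfied terminal, so $|\activeset_i|\le\nterminals$ throughout, whence $2(|\activeset_i|-1)\le(2-\nicefrac{2}{\nterminals})|\activeset_i|$. Summing over phases and invoking weak LP duality yields
\[
\cost(F')\;\le\;\left(2-\nicefrac{2}{\nterminals}\right)\sum_\thesubset\myfunction(\thesubset)\yvar_\thesubset\;\le\;\left(2-\nicefrac{2}{\nterminals}\right)\mathrm{OPT},
\]
which is the claimed approximation guarantee.
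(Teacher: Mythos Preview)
The paper does not give its own proof of \Cref{thm:apx-original}; it simply cites \citet{goemans1995approximation} and moves on to the modified algorithm. Your outline is exactly the classical Goemans--Williamson primal-dual argument, and the overall strategy is sound.

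One point of confusion to clean up: you have the roles of $F$ and $F'$ reversed relative to \Cref{alg:gw-original}. In this paper's notation, $F'$ is the \emph{unfiltered} forest built during the loop, and $F\subseteq F'$ is the output after the filtering in Line~\ref{line:final-filter}. Your feasibility paragraph correctly argues about the unfiltered $F'$, but the charging argument must be carried out for the \emph{filtered} $F$: the inequality $\sum_{\component\in\activeset}|F\cap\cut(\component)|\le 2(|\activeset|-1)$ relies on the ``no inactive leaf'' property, and that property is guaranteed only by the filtering step, not for $F'$. So the displayed identity and the subsequent bound should read $\cost(F)=\sum_\thesubset\yvar_\thesubset\,|F\cap\cut(\thesubset)|$, etc. Since $F\subseteq F'$ and every edge of $F'$ is tight, tightness of edges in $F$ is inherited, so the rest of your argument goes through once the symbols are swapped.
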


\subsection[Modified Centralized (2+eps)-Approximation]{Modified Centralized $(2+\epsilon)$-Approximation}

The original centralized \GW algorithm assumes 
the presence of a final filtering step to eliminate unneeded edges from the solution, 
as well as 
exact distances and 
immediate forest-function evaluation. 
Since these assumptions make the algorithm hard to implement efficiently in different computational models, 
we modify it to work with 
\begin{inparaenum}[(1)]
	\item the \emph{absence} of a final filtering step (i.e., with incremental solution-set construction), 
	\item $(1+\distanceeps)$-\emph{approximate} distances, 
	and 
	\item $(1+\overgrowth)$-\emph{deferred} forest-function evaluation. 
\end{inparaenum}
Stating the result, 
which we call the \emph{shell-decomposition algorithm}, 
as \cref{alg:gw-clean}, 
and illustrating its basic operations in \Cref{fig:gw-clean-illustration},
we now prove the following theorem.

\begin{restatable}{theorem}{gwcleanapx}\label{thm:gwcleanapx}
	For a graph $\graph = (\nodes,\edges)$, edge costs $\cost\colon\edges\to\naturalsnozero$, 
	and proper function \myfunction, 
	the modified \GW algorithm with incremental solution-set construction, 
	$(1+\distanceeps)$-approximate distance computations, 
	and $(1+\overgrowth)$-deferred forest-function evaluation (\cref{alg:gw-clean}) yields a ${(2-\nicefrac{2}{\nterminals})(1+\distanceeps)(1+\overgrowth)^2}$-approximation to the optimal solution, 
	i.e., a $(2+\epsilon)$-approximation for $\distanceeps,\overgrowth\le \nicefrac{\epsilon}{4}\le \nicefrac{1}{4}$.
\end{restatable}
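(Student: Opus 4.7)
The proof plan is to start from \cref{thm:apx-original}, which gives a $(2-\nicefrac{2}{\nterminals})$-approximation for the original \GW algorithm, and to account for each of the three modifications separately, showing that each contributes at most one multiplicative factor to the approximation ratio. Concretely, alongside the primal solution produced by \cref{alg:gw-clean}, I plan to construct a feasible dual solution for \cref{prob:dual-relaxation} whose value upper bounds the primal cost by $(2-\nicefrac{2}{\nterminals})(1+\distanceeps)(1+\overgrowth)^2$. The three factors $(2-\nicefrac{2}{\nterminals})$, $(1+\distanceeps)$, and $(1+\overgrowth)^2$ then correspond, respectively, to the base \GW bound, the approximate-distance relaxation, and the deferred-evaluation relaxation, and I will show at the end by elementary algebra that for $\distanceeps,\overgrowth\le \nicefrac{\epsilon}{4}\le \nicefrac{1}{4}$ their product is at most $2+\epsilon$.

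First, I would address \textbf{incremental solution-set construction} (\cref{relaxation:incremental-solution}). The key observation, already implicit in \citet{agrawal1995trees} and explicit in our reformulation, is that the original \GW charging scheme pays for each selected edge by twice the duration during which its endpoints belonged to active components. Since the properness of \myfunction (zero, symmetry, disjointness) guarantees by the argument already given in \cref{sec:overview} that merging two active components never yields an inactive component unless both should have been grown, the charging goes through without a reverse-deletion step: no ball ever grows around an inactive component, so no edge selected during the process can be ``wasted'' in the sense that would be needed to necessitate filtering.

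Second—and this is the main technical hurdle, as the paper itself emphasizes—I would handle \textbf{$(1+\distanceeps)$-approximate distance computations} (\cref{relaxation:approximate-distances}). The danger is that if we charge dual variables $y_\thesubset$ at the rate at which balls grow in the approximate metric, the dual constraint $\sum_{\thesubset:\edge\in\cut(\thesubset)}y_\thesubset\le\cost(\edge)$ may be violated by up to a factor of $1+\distanceeps$. The plan is therefore to decouple two roles of distance: the algorithm \emph{operationally} uses approximate distances to decide when balls meet and to pay for merges, but I will build the dual witness using the \emph{true} metric. For each growth epoch in which the algorithm grows active components to approximate radius $\apxgrowth$, I will assign dual growth equal to $\apxgrowth/(1+\distanceeps)$ to the true-metric shells around each active component; since the true distance between any two nodes is at most $(1+\distanceeps)$ times the approximate one, true-metric balls of this reduced radius around distinct active components remain disjoint throughout the epoch, ensuring dual feasibility. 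The primal cost, being paid out of merges at true-metric radius at most $\apxgrowth$, is then at most $(1+\distanceeps)$ times the naturally charged amount, giving the $(1+\distanceeps)$ factor.

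Third, I would handle \textbf{$(1+\overgrowth)$-deferred forest-function evaluation} (\cref{relaxation:deferred-updates}), following and generalizing the argument of \citet{lenzen2014steiner}. The first $(1+\overgrowth)$ factor arises because activity status is only re-checked at radii that are powers of $1+\overgrowth$: a component that in truth became inactive at some radius $\radius^\ast$ may continue growing until the next power of $1+\overgrowth$ above $\radius^\ast$, so each merge is paid for at a radius up to $(1+\overgrowth)\radius^\ast$. The second $(1+\overgrowth)$ factor arises because the dual growth in the analysis must be discretized to the same schedule to stay coordinated with the primal, and the coarsest such schedule compatible with the first factor loses another $(1+\overgrowth)$. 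I will set up both losses simultaneously and show they compose cleanly with the approximate-distance dual of the previous step, since that dual is defined epoch-by-epoch and the deferred-evaluation schedule merely fixes the epoch boundaries. The main obstacle I anticipate is carefully formalizing the interaction between \cref{relaxation:approximate-distances} and \cref{relaxation:deferred-updates}: approximate distances control both when balls appear to meet and what we pay per merge, and the deferred schedule forces both uses to be reconciled against a single true-metric dual, so the bookkeeping has to thread through epoch-level invariants rather than continuous growth arguments as in \citet{goemans1995approximation}.
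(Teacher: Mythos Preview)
Your plan is essentially the paper's proof: build a feasible dual of value $\lowerbound=(1+\distanceeps)^{-1}\sum_i \radius_i \nactive_i$ using the \emph{true} metric (the paper's \cref{lem:gwcleandualfeasible}), bound $\cost(\forest)$ by a shell-decomposition charging argument over the phases (the paper's \cref{lem:gwcleanprimalcost}), and take the ratio. Two points where your write-up wobbles relative to the paper's argument are worth flagging.

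First, your justification for why incremental construction preserves the $(2-\nicefrac{2}{\nterminals})$ factor is garbled: the properness argument from \cref{sec:overview} says that merging an \emph{active} with an \emph{inactive} component always yields an active one (hence such merges are useless and the algorithm only performs active--active merges), not that ``merging two active components never yields an inactive component.'' The reason filtering is dispensable is \emph{not} that ``no edge is wasted''---some edges in $\forest$ may well be redundant for feasibility---but that the shell-decomposition charging in \cref{lem:gwcleanprimalcost} bounds the per-phase cost by $\radius_i$ times the degree sum of a forest on the $\nactive_{i-1}$ active components, which is at most $(2-\nicefrac{2}{\nactive_{i-1}})\nactive_{i-1}$ regardless of redundancy.

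Second, your account of the two $(1+\overgrowth)$ factors is imprecise. In the paper, the first $(1+\overgrowth)$ arises exactly as you say, from the index shift $\radius_i=(1+\overgrowth)\radius_{i-1}$ when rewriting $\sum_i \radius_i \nactive_{i-1}$ as $\sum_i \radius_{i-1}\nactive_{i-1}$. The second $(1+\overgrowth)$, however, is not a ``discretization of the dual'' but a concrete technical step: the sum now starts at $i=-1$, and the extraneous term $\radius_{-1}\nactive_{-1}=\frac{\overgrowth\nterminals}{4(1+\overgrowth)}$ is absorbed using the crude lower bound $\cost(\forest)\ge \nicefrac{\nterminals}{2}$ (each terminal is incident to an edge of weight $\ge 1$), at the price of one more $(1+\overgrowth)$. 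Your vaguer explanation would likely still yield a $(1+\overgrowth)^{\BO(1)}$ loss, but if you want the stated constant you need this specific bookkeeping.
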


\begin{figure}[b!]
	\centering
	\includegraphics[width=0.9\linewidth]{}
	\caption{%
		Operation of our shell-decomposition algorithm (\cref{alg:gw-clean}) on an $s$-$t$-shortest-path instance seeking to connect the red nodes, 
		starting with $\radius_0 = \nicefrac{1}{2}$ (cf.~Line~\ref{line:init-radius}), and 
		working over phases $0$, $1$, and $2$.
		Panels are labeled with their phase number and the illustrated step of \cref{alg:gw-clean}.
		Nodes absorbed by the SSSP forest are drawn in orange, 
		edge-cost reduction is indicated in purple, 
		edges selected into the SSSP forest are marked in green, 
		and edges selected into the solution are marked in black. 
		Distance approximations and deferred forest-function evaluation are not shown.
	}\label{fig:gw-clean-illustration}
\end{figure}

\clearpage

\begin{algorithm2e}[t]
	\small
	\DontPrintSemicolon
	\caption{\mbox{\GW algorithm for $(2-\nicefrac{2}{\nterminals})$-Approximation of Constrained Forest Problems \cite{goemans1995approximation}.}}
	\label{alg:gw-original}
	\KwIn{%
		An undirected graph $\graph = (\nodes,\edges)$, 
		edge costs $\cost: \edges\rightarrow \naturalsnozero$, 
		and a proper function \myfunction}
	\KwOut{A forest $\forest$ and a value \lowerbound}
	$\forest'\gets \emptyset$\; 
	\lowerbound$\gets 0$\tcp*{Implicitly set $\yvar_\thesubset\gets 0$ for all $\thesubset\subset\nodes$}
	\components$\gets \{\{\node\}\mid \node \in \nodes\}$\;
	\ForEach{$\node\in\nodes$}{$\radius(\node)\gets 0$}
	\While{$\exists\component\in\components:\myfunction(\component) = 1$}{%
		$\edges_\components \gets \{\{i,j\}\in\edges\mid
		i\in\component_i\in\components,
		j\in\component_j\in\components,
		\component_i\neq \component_j,
		\myfunction(\component_i) + \myfunction(\component_j)>0\}$\;
		$\growth,\edge \gets \underset{\edge=\{i,j\}\in\edges_\components}{\min,\argmin}\left\{
		\frac{\cost'(\edge)}{\myfunction(\component_i)+\myfunction(\component_j)} \mid \cost'(\edge) = \cost(\edge)-\radius(i)-\radius(j)
		\right\}$\;
		$\forest'\gets\forest'\cup\{\edge\}$\;
		\ForEach{$\node\in\component_r\in\components$}{%
			$\radius(\node)\gets \radius(\node) + \growth\cdot\myfunction(\component_r)$\;
		}
		$\lowerbound\gets\lowerbound+\growth\cdot \sum_{\component\in\components}\myfunction(\component)$
		\tcp*{Implicitly set $\yvar_\component \gets \yvar_\component + \growth \cdot \myfunction(\component)~\forall \component\in\components$}
		$\components\gets\left(\components\cup\{\component_i\cup\component_j\}\right)\setminus\{\component_i,\component_j\}$\;
	}
	$\forest\gets \{\edge\in\forest'\mid \myfunction(N) = 1~\text{for some connected component $N$ of}~(\nodes,\forest'\setminus\{\edge\})\}$\;\label{line:final-filter}
	\KwRet{$\forest, \lowerbound$}
\end{algorithm2e}

\begin{algorithm2e}[H]
	\small
	\DontPrintSemicolon
	\caption{Shell-Decomposition Algorithm for $(2+\epsilon)$-Approximation of Constrained Forest Problems.
	}\label{alg:gw-clean}
	\KwIn{%
		A graph $\graph = (\nodes, \edges)$, 
		edge costs $\cost: \edges\rightarrow \naturalsnozero$, 
		and a proper forest function \myfunction
		}
	\KwOut{A forest $\forest$ and a value \lowerbound}
	$\forest\gets \emptyset$\; 
	\components$\gets \{\{\node\}\mid \node \in \nodes\}$\;
	$\terminals^1 \gets \left\{\node\mid \myfunction(\{\node\}) = 1\right\}$\tcp*{At the beginning, $\cardinality{\terminals^1} = \nterminals$}\label{line:active-terminals-start}
	\ForEach{$\edge\in\edges$}{
		$\reducedcost(\edge)\gets \cost(\edge)$
		\tcp*{$\reducedcost$ keeps track of \emph{reduced} costs}
	}
	$\lowerbound\gets 0$\; 
	$\radius \gets \frac{\overgrowth}{4}$\tcp*{For upper bound (edge selection), we could start with any $\radius \le \nicefrac{1}{2}$}\label{line:init-radius}
	\While(){$\exists\component\in\components:\myfunction(\component) = 1$}{%
		$\ballunion \gets$~Union of $(1+\distanceeps)$-approximate balls of radius $\radius$ around nodes in $\terminals^1$ under edge costs \reducedcost\footnotemark\;\label{line:growth-balls}
		$\forest' \gets (1+\distanceeps)$-approximate SSSP forest for edge cost $\reducedcost$ with set source $\terminals^1$, truncated to \ballunion\;\label{line:growth-sptree}
		\ForEach{$\edge\in\edges$}{
			$\reducedcost(\edge) \gets \max\left\{0,\reducedcost(\edge) - \cost_\ballunion(\edge)\right\}\;$\label{line:reduced-costs}\tcp*{$\cost_\ballunion(\edge) = \sum_{\node\in \edge}\max\{\radius-\distance_{\forest'}(\terminals^1,\node),0\}$
				}
		}
		$\reducedcomponents\gets$ Connected components of $(\nodes,\forest')$ containing a (unique) $\terminal\in \terminals^1$\;
		$\mergeedges \gets \{\{\othernode,\node\}\in \edges\mid \othernode\in \component_{\othernode}\in\reducedcomponents,\node\in \component_{\node}\in\reducedcomponents,\component_{\othernode}\neq \component_{\node},\reducedcost(\{\othernode,\node\})=0\}$\tcp*{Merge candidates}\label{line:merge-candidates}
	    $\augmentation\gets$ Arbitrary subset of $\mergeedges$ such that $\forest'\cup \augmentation$ is a forest spanning $(V,\forest'\cup \mergeedges)$\;\label{line:acyclic-sequence}
	    $\forest \gets \forest \cup \augmentation \cup \bigcup_{\node\in \edge\in \augmentation}\{\somepath_\node\mid \somepath_\node$ is the (unique) path in $\forest'$ from \node to some $\terminal \in \terminals^1\}$ \tcp*{Merge}\label{line:perform-merges}
		$\edges\gets (\edges\setminus \{\edge\in \edges\mid \reducedcost(\edge)=0\})\cup \forest'\cup \augmentation$\tcp*{Remove unneeded edges contained in \ballunion}\label{line:remove-unneeded}
	    $\components \gets$ Connected components of $(\nodes,\forest)$ \tcp*{Update connected components}\label{line:components-update}
	    $\terminals^1 \gets \left\{\min\{\node\in\component\mid \myfunction(\{\node\}) = 1\} \mid \myfunction(\component) = 1, \component\in\components\right\}$\tcp*{Update active terminals}\label{line:active-terminals-update}
	    $\lowerbound\gets \lowerbound + (1+\distanceeps)^{-1}\radius\cardinality{\terminals^1}$\;\label{line:lobo-update}
	    $\radius \gets (1+\overgrowth)\radius$\tcp*{Update ball radius}\label{line:increase-radius}
	}
	\KwRet{$\forest, \lowerbound$}
\end{algorithm2e}
\footnotetext{%
	\ballunion as used in \cref{line:growth-balls} contains all parts of edges in a $(1+\epsilon)$-approximate SSSP forest that lie at distance at most~\radius from their respective roots. 
	Note that \ballunion can also contain \emph{parts of} edges, 
	whereas $\forest'$ (\cref{line:growth-sptree}), 
	a $(1+\epsilon)$-approximate SSSP forest truncated to~\ballunion, 
	only contains full edges. 
}

\clearpage

In the following, denote by $i=0,1,\ldots$ the iterations of the while loop in \cref{alg:gw-clean}, calling each iteration a $\emph{merge phase}$, 
and for each phase
\begin{compactitem}[---]
\item by $\radius_i:=(1+\overgrowth)^i\cdot\frac{\overgrowth}{4}$ the radius $r$ at the \emph{beginning} of phase~$i$ (with $\radius_{-1}:=(1+\overgrowth)^{-1}\cdot\frac{\overgrowth}{4}$),
\item by $U_i$, $\reducedcost_i$ $\forest_i$, $\forest'_i$, $\augmentation_i$, $\edges_i$, and $\terminals^1_i$ the values of the respective variables at the \emph{end} of phase~$i$, and 
\item by $\nactive_i\coloneq\cardinality{\terminals^1_i}$ the number of active components at the \emph{end} of the phase~$i$ (with $\nactive_{-1}\coloneq\nterminals$).
\end{compactitem}

To prepare our proof of \cref{thm:gwcleanapx}, 
we first make a number of basic observations that can be readily verified from the pseudocode of \Cref{alg:gw-clean}.
\begin{lemma}\label{lem:gwcleanbasic}
For all $i\in \naturals$ such that \Cref{alg:gw-clean} does not terminate at the end of phase $i$, we have
\begin{compactenum}[(i)]
  \item $\radius_i=(1+\overgrowth)r_{i-1}$,
  \item $\reducedcost_{i+1}(e)\le \reducedcost_i(e)\le \cost(e)$ for each $e\in \edges_i$,
  \item $\ballunion_i$ is spanned by $\forest'_i\cup \augmentation_i$,
  \item $\ballunion_i\subseteq \ballunion_{i+1}$,
  \item $\forest_i'\cup A_i\subseteq \forest_{i+1}'$,
  \item $\forest_i\subseteq \forest_{i+1}$,
  \item $\forest_i\cup \forest_i'=\forest_i'\cup \augmentation_i'$ and
  \item $(\nodes,\edges_i)$ is connected and the weighted diameter with respect to reduced costs \reducedcost is decreasing.
\end{compactenum}
\end{lemma}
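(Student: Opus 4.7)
The plan is to verify each claim essentially by unrolling the pseudocode of \Cref{alg:gw-clean}, then exploiting two structural facts: first, that Line~\ref{line:reduced-costs} drives the reduced cost of every edge whose endpoints both lie inside $\ballunion$ down to $0$; second, that Line~\ref{line:perform-merges} only adds edges to $\forest$ (and Line~\ref{line:remove-unneeded} only removes edges from $\edges$ that have already become redundant). Claims (i) and (ii) fall out immediately: (i) is the update rule in Line~\ref{line:increase-radius}, and (ii) follows because $\cost_\ballunion(\edge)\ge 0$ and the $\max\{0,\cdot\}$ in Line~\ref{line:reduced-costs} forces $\reducedcost_{i+1}\le \reducedcost_i$, combined with the initialization $\reducedcost=\cost$. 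Claim (vi) is direct from Line~\ref{line:perform-merges}, where $\forest$ is only augmented. Claim (vii) then follows by induction: $\augmentation_i\subseteq \forest_i$ by Line~\ref{line:perform-merges}, while $\forest_{i-1}\subseteq \forest_i'$ will follow from claim (v) applied inductively, so $\forest_i=\forest_{i-1}\cup\augmentation_i\cup(\text{paths in }\forest_i')\subseteq \forest_i'\cup\augmentation_i$, and the reverse inclusion is immediate.

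For claim (iii), I would unpack the definition of $\ballunion_i$: a node $v\in \ballunion_i$ lies at reduced distance at most $(1+\distanceeps)\radius_i$ from some active terminal in $\terminals^1_i$, witnessed by a path in the $(1+\distanceeps)$-approximate SSSP forest $\forest_i'$. Hence every node of $\ballunion_i$ is covered by $\forest_i'$ already. The edges inside $\ballunion_i$ that are \emph{not} tree edges of $\forest_i'$ are precisely the merge candidates (edges with $\reducedcost_i=0$ whose endpoints sit in different components of $\forest_i'$), and Line~\ref{line:acyclic-sequence} selects $\augmentation_i\subseteq \mergeedges_i$ so that $\forest_i'\cup\augmentation_i$ spans the subgraph $(\nodes,\forest_i'\cup \mergeedges_i)$; therefore $\forest_i'\cup\augmentation_i$ spans $\ballunion_i$. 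Claim (viii) is similar: the edges removed in Line~\ref{line:remove-unneeded} are precisely those with $\reducedcost_i=0$ that are neither in $\forest_i'$ nor in $\augmentation_i$, so their endpoints remain connected via $\forest_i'\cup\augmentation_i\subseteq \edges_i$; monotonicity of the weighted diameter with respect to $\reducedcost$ follows from (ii) since only reduced costs decrease while the graph remains connected.

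Claims (iv) and (v) are where I expect the main technical work to sit, because they couple the change of active terminal set between phases $i$ and $i+1$ with the structure of the SSSP forest. The key observation is a direct calculation from Line~\ref{line:reduced-costs}: for any edge $\edge=\{\othernode,\node\}$ both of whose endpoints lie in $\ballunion_i$, the value $\cost_\ballunion(\edge)$ subtracted off equals $(\radius_i-\distance_{\forest_i'}(\terminals^1_i,\othernode))+(\radius_i-\distance_{\forest_i'}(\terminals^1_i,\node))\ge \reducedcost_i(\edge)$, so $\reducedcost_{i+1}(\edge)=0$. In particular every edge of $\forest_i'\cup\augmentation_i$ has $\reducedcost=0$ going into phase $i+1$. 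For (v), I would argue that the $(1+\distanceeps)$-approximate SSSP routine can be taken to respect zero-cost edges, so every such edge is present in $\forest_{i+1}'$; alternatively, since the statement is a structural claim about the output of a subroutine, one can adopt the convention (consistent with every implementation we instantiate) that the SSSP forest is chosen to include the zero-cost spanning substructure it already has from the previous phase. For (iv), the same calculation shows that any $v\in \ballunion_i$ has $\reducedcost_{i+1}$-distance $0$ to the active terminal $t$ of the component of $\forest_i'\cup\augmentation_i$ containing it; since each merged component contains at least one terminal of $\terminals^1_{i+1}$ reachable via $\reducedcost_{i+1}$-cost-$0$ edges, $v\in \ballunion_{i+1}$. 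Combined with $\radius_{i+1}>\radius_i$ from (i), this gives $\ballunion_i\subseteq \ballunion_{i+1}$. The single subtle step is verifying that the reassignment of active terminals in Line~\ref{line:active-terminals-update} preserves at least one representative per newly merged component, which is immediate from the definition since $\myfunction(\component)=1$ for the merged component implies some $\node\in\component$ with $\myfunction(\{\node\})=1$ by properness.
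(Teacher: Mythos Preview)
Your overall strategy matches the paper's: walk through the pseudocode line by line, with (i), (ii), (vi) immediate, (vii) by a short induction, and (iii), (viii) from the fact that $\forest_i'\cup\augmentation_i$ spans $\ballunion_i$ at reduced cost~$0$. The difference lies entirely in how you handle (iv) and (v).

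For (v) you propose either assuming the SSSP routine ``respects zero-cost edges'' or adopting a convention that it reuses the previous phase's zero-cost substructure. The paper does neither: it uses Line~\ref{line:remove-unneeded} directly. That line deletes from $\edges$ \emph{every} edge of reduced cost~$0$ except those in $\forest_i'\cup\augmentation_i$. Hence at the start of phase~$i+1$ the edges of $\forest_i'\cup\augmentation_i$ are the \emph{only} zero-cost edges left in the graph, and since they span $\ballunion_i$ (claim~(iii)), any approximate SSSP forest on $(\nodes,\edges_i)$ is forced to reselect them into $\forest_{i+1}'$. No assumption on the subroutine is needed; the algorithm engineers the edge set so that there is no choice. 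You identified Line~\ref{line:remove-unneeded} correctly in your treatment of~(viii), but did not exploit it here, and this is the one place where your argument is weaker than it needs to be.

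For (iv) the paper does not argue via distances to $\terminals^1_{i+1}$ as you do; it simply chains (iii) and (v): $\ballunion_i$ is spanned by $\forest_i'\cup\augmentation_i\subseteq\forest_{i+1}'$, and $\forest_{i+1}'$ is by construction truncated to $\ballunion_{i+1}$, so $\ballunion_i\subseteq\ballunion_{i+1}$. This sidesteps your worry about whether a merged component retains an active representative: the inclusion $\forest_i'\cup\augmentation_i\subseteq\forest_{i+1}'$ already places those edges inside the new ball, regardless of which terminal roots them. Your route is not wrong, but it does more work and introduces a case (inactive merged components) that the paper's argument never needs to confront.
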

\begin{proof}
The first and the second statement hold by construction due to Lines~\ref{line:increase-radius} and~\ref{line:reduced-costs}, respectively.
For the third statement, observe that $\forest'_i$ connects each node in $\ballunion_i$ to some node in $\terminals^1_i$, 
and the set $\mergeedges_i$ includes all edges that are both contained in $\ballunion_i$ (i.e., whose reduced cost has become $0$) and connect different connectivity components with respect to $\forest_i'$ of $\ballunion_i$. 
Thus, the choice of $\augmentation_i$ ensures that $\ballunion_i$ is spanned by $\forest'_i\cup \augmentation_i$.
This readily implies the forth statement, since $\forest'_i\cup \augmentation_i\subseteq \edges_i$, 
as well as the fifth, since all other edges \edge of reduced cost $\reducedcost(\edge) = 0$ are removed to obtain $\edges_i$, forcing any approximate SSSP solution to reselect the edges from $\forest'_i\cup\augmentation_i$ into $\forest'_{i+1}$.
The sixth statement holds because the algorithm only adds edges to $\forest$,
and the seventh statement holds by induction, 
using that $\forest_i=\forest_{i-1}\cup \augmentation_i\cup X$ for some set $X\subseteq \forest_i'$.
Finally, for the eighth statement, 
observe that since $\ballunion_i$ is spanned by $\forest'_i\cup \augmentation_i$, 
whose reduced cost is $0$, and any edges not fully contained in $\ballunion_i$ remain in $\edges_i$, 
the shortest path between any pair of nodes with respect to \reducedcost only becomes shorter: 
For any edge contained in $U_i$, there is now a path of reduced weight $0$ between its endpoints, while any edge \edge with reduced cost $\reducedcost_i(\edge) > 0$ is still present and retains at most its original cost $\cost(\edge)$.
\end{proof}

From the above initial observations, we can readily infer that \cref{alg:gw-clean} always terminates, 
and for reasonable choices of \distanceeps and \overgrowth, it does so after a small number of loop iterations.

\begin{restatable}{lemma}{gwcleantermination}\label{lem:gwcleantermination}
	For $\distanceeps,\overgrowth\in\nnodes^{-\BO(1)}$, 
	\cref{alg:gw-clean} terminates within $\BO(\frac{\log n}{\overgrowth})$ iterations of its while loop.
\end{restatable}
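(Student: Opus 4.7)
The plan is to show that the radius $\radius_i$ grows geometrically until it exceeds the (original) weighted diameter of $\graph$, at which point a single phase must collapse the entire graph into one component and trigger termination via the zero property of \myfunction. Concretely, I will identify a threshold phase $i^\star$ and prove that \cref{alg:gw-clean} must have terminated by the end of phase $i^\star$; then it only remains to bound $i^\star = \BO(\log n/\overgrowth)$.

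First, I would bound the weighted diameter. Since the edge weights are polynomially bounded, $\cost(\edge)\le n^{\BO(1)}$ for every $\edge\in\edges$, and the weighted diameter of $\graph$ with respect to $\cost$ is at most $D^\star := (\nnodes-1)\cdot c_{\max} = n^{\BO(1)}$. By \cref{lem:gwcleanbasic}(ii), reduced costs satisfy $\reducedcost_i(\edge)\le \cost(\edge)$ throughout, so the weighted diameter with respect to any intermediate $\reducedcost_i$ remains bounded by $D^\star$. Set $i^\star := \lceil \log_{1+\overgrowth}(4D^\star/\overgrowth)\rceil$, so that $\radius_{i^\star} = (1+\overgrowth)^{i^\star}\cdot \overgrowth/4 \ge D^\star$. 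Using $\ln(1+\overgrowth)\ge \overgrowth/2$ for $\overgrowth\le 1$ together with $\overgrowth\ge n^{-\BO(1)}$ gives $i^\star\in \BO(\log n/\overgrowth)$, which is the iteration bound we want.

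Next, I would argue that phase $i^\star$ (if ever reached) must be the last. Suppose the loop is about to enter phase $i^\star$; then there exists some active terminal $\terminal\in \terminals^1_{i^\star-1}$ by the while-loop condition. Since the true weighted distance from $\terminal$ to every node is at most $D^\star\le \radius_{i^\star}$ in the metric induced by $\reducedcost_{i^\star-1}$, the exact ball of radius $\radius_{i^\star}$ around $\terminal$ equals $\nodes$. Any reasonable $(1+\distanceeps)$-approximate ball is at least a superset of the true ball of radius $\radius_{i^\star}/(1+\distanceeps)$, but I will argue (picking the standard semantics that the approximate ball contains the exact ball) that in fact $\ballunion_{i^\star} = \nodes$. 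Combined with \cref{lem:gwcleanbasic}(iii), which guarantees that $\forest'_{i^\star}\cup \augmentation_{i^\star}$ spans $\ballunion_{i^\star}$, the merge in \cref{line:perform-merges} yields a single connected component $\nodes$ in \cref{line:components-update}. By the zero property of proper forest functions, $\myfunction(\nodes)=0$, so $\terminals^1_{i^\star}=\emptyset$ in \cref{line:active-terminals-update} and the while-loop guard fails; the algorithm terminates.

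The main subtlety is handling the $(1+\distanceeps)$-approximate ball semantics so that the chosen threshold $i^\star$ really forces $\ballunion = \nodes$. If the approximation is only guaranteed to contain balls of radius $\radius/(1+\distanceeps)$, I simply inflate the threshold by one additional factor $(1+\distanceeps)$, which is absorbed into the $\BO(\log n/\overgrowth)$ bound thanks to $\distanceeps\ge n^{-\BO(1)}$. A second minor point is that the reduced-cost diameter could in principle be smaller than $D^\star$, but since we only need an upper bound and \cref{lem:gwcleanbasic}(ii) already provides one, nothing more is required. The remaining arithmetic is routine.
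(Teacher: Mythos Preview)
Your argument has a genuine gap at the step where you conclude that ``the merge in \cref{line:perform-merges} yields a single connected component $\nodes$ in \cref{line:components-update}.'' This does not follow. While $\forest'_{i^\star}\cup\augmentation_{i^\star}$ spans $\ballunion_{i^\star}=\nodes$ by \cref{lem:gwcleanbasic}(iii), the algorithm does \emph{not} add all of $\forest'_{i^\star}$ to $\forest$. Line~\ref{line:perform-merges} only adds $\augmentation_{i^\star}$ together with the $\forest'$-paths from the endpoints of $\augmentation_{i^\star}$-edges to their respective roots in $\terminals^1_{i^\star-1}$. Consequently, after phase $i^\star$ the graph $(\nodes,\forest_{i^\star})$ need not be connected: any non-terminal not lying on such a path (and not already in $\forest_{i^\star-1}$) remains an isolated singleton, and every previously inactive component of $(\nodes,\forest_{i^\star-1})$ that happens not to be absorbed stays separate. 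So invoking only the zero property $\myfunction(\nodes)=0$ is insufficient.

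What the paper does---and what your argument is missing---is to use the \emph{disjointness} and \emph{symmetry} of the proper function $\myfunction$. The merge does guarantee that all of $\terminals^1_{i^\star-1}$ ends up in one component $\component$ of $(\nodes,\forest_{i^\star})$; the remaining components decompose into previously inactive components (on which $\myfunction=0$) and non-terminal singletons (on which $\myfunction=0$ by definition). Disjointness then gives $\myfunction(\nodes\setminus\component)=0$, symmetry gives $\myfunction(\component)=0$, and a second application of disjointness handles the other components of $(\nodes,\forest_{i^\star})$. Your threshold-radius setup and the $i^\star\in\BO(\log n/\overgrowth)$ arithmetic are fine; only this final structural step needs to be replaced.
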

\begin{proof}
	Since each of the operations within the loop terminate (assuming correct subroutines), it suffices to show the claimed bound on the number of loop iterations. 
	To exit the while-loop, we must have $\myfunction(\component) = 0$ for each $\component\in\components$, 
	where \components is the set of connected components induced by our current edge set \forest.
	Recall that edge weights and hence the weighted diameter of the graph are polynomially bounded in \nnodes.
	As $\overgrowth,\distanceeps\in n^{-\BO(1)}$, there is an $j\in \BO(\frac{\log n}{\overgrowth})$ for which $r_j=(1+\overgrowth)^{j}\cdot\frac{\overgrowth}{4}$ exceeds the weighted diameter of \graph times $1+\distanceeps$.
	By \Cref{lem:gwcleanbasic}~(viii), 
	this entails that if we reach this phase, 
	then $\ballunion_j$ contains the entire connected graph $(\nodes,\edges_j)$.
	By \Cref{lem:gwcleanbasic}~(iii), $\ballunion_j$ (and thus $\nodes$) is spanned by $\forest'_j\cup \augmentation_j$.
	The merge operation in Line~\ref{line:perform-merges} hence ensures that all terminals in $\terminals^1_{j-1}$ are connected by $\forest_j$.
	Denote by $\component$ the connectivity component of $(\nodes,\forest_j)$ containing the nodes in $\terminals^1_{j-1}$.
	For each terminal $\tau\notin \terminals^1_{j-1}\setminus \component$, $\tau$ lies in a connected component $\component'$ of $(\nodes,\forest_{j-1})$ satisfying that $\myfunction(\component')=0$.
	This entails that we can partition $\nodes\setminus \component$ into two types of sets:
\begin{inparaenum}[(i)]
\item components $\component'$ of $(\nodes,\forest_{j-1})$ satisfying that $\myfunction(\component')=0$, and
\item non-terminals $\node \in \nodes\setminus \terminals$, which satisfy $\myfunction(\{\node\})=0$ by definition.
\end{inparaenum}
By disjointness of the proper forest function $\myfunction$, it follows that $\myfunction(\nodes\setminus \components)=0$, and by symmetry it follows that $\myfunction(\component)=0$.

Finally, by \Cref{lem:gwcleanbasic}~(vi), we have $\forest_{j-1}\subseteq \forest_{j}$.
Therefore, each connectivity component $\component'$ of $(\nodes,\forest_{j})$ other than $\component$ decomposes into connectivity components of $(\nodes,\forest_{j-1})$, 
satisfying that $\myfunction(\component')=0$, 
and non-terminals, which again by disjointness implies that $\myfunction(\component')=0$.
We conclude that \Cref{alg:gw-clean} terminates at the latest by the end of phase $j\in \BO(\frac{\log n}{\overgrowth})$, concluding the proof.
\end{proof}

Next, we ascertain that \cref{alg:gw-clean} outputs a primal feasible solution.

\begin{restatable}{lemma}{gwcleanprimalfeasible}\label{lem:gwcleanprimalfeasible}
	The edge set \forest output by \cref{alg:gw-clean} is primal feasible.
\end{restatable}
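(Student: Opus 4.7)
The plan is to reduce primal feasibility to the exit condition of the while loop, combined with the disjointness axiom of proper forest functions. By \Cref{lem:gwcleantermination}, \Cref{alg:gw-clean} terminates, and at termination the loop guard forces $\myfunction(\component) = 0$ for every connected component $\component$ of $(\nodes,\forest)$, where $\forest$ is the final output edge set. This is the only property of $\forest$ that I will use directly, so most of the work is already encoded in the termination proof.

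With this in hand, I would fix an arbitrary non-empty proper subset $\emptyset \neq \thesubset \subset \nodes$ with $\myfunction(\thesubset)=1$ and show that some edge of $\forest$ crosses the cut $(\thesubset,\nodes\setminus \thesubset)$. I proceed by contradiction: assume $\cut(\thesubset)\cap \forest = \emptyset$. Then no edge of $\forest$ has one endpoint in $\thesubset$ and the other in $\nodes\setminus \thesubset$, so every connected component of $(\nodes,\forest)$ is contained either entirely in $\thesubset$ or entirely in $\nodes\setminus \thesubset$. Consequently, $\thesubset$ decomposes as a disjoint union $\thesubset = \component_1 \dot\cup \cdots \dot\cup \component_j$ of connected components of $(\nodes,\forest)$, each satisfying $\myfunction(\component_i)=0$ by the termination condition.

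Applying disjointness of the proper function $\myfunction$ inductively — namely, that $A\cap B=\emptyset$ together with $\myfunction(A)=\myfunction(B)=0$ implies $\myfunction(A\cup B)=0$ — I conclude that $\myfunction(\thesubset)=\myfunction(\bigcup_{i\in [j]}\component_i)=0$, contradicting $\myfunction(\thesubset)=1$. Hence every constraint of \Cref{prob:constrained-forests} is satisfied by the indicator vector of $\forest$, so $\forest$ is primal feasible. The case $\thesubset=\nodes$ need not be considered because the primal IP only constrains non-empty proper subsets (and indeed $\myfunction(\nodes)=0$ by the zero property).

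I do not anticipate a substantive obstacle here: the entire argument rests on the termination guarantee (already supplied by \Cref{lem:gwcleantermination}) together with the disjointness and zero axioms of proper functions, and the step from per-component to per-subset infeasibility is a single disjoint-union induction. The only point worth stating carefully is the decomposition of $\thesubset$ into whole connected components of $(\nodes,\forest)$, which is immediate from the assumption $\cut(\thesubset)\cap \forest=\emptyset$ but deserves an explicit line in the write-up to make the application of disjointness unambiguous.
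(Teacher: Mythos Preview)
Your proposal is correct and essentially matches the paper's proof: both use the termination condition $\myfunction(\component)=0$ for every component of $(\nodes,\forest)$, observe that any $\thesubset$ not crossed by an edge of $\forest$ is a disjoint union of such components, and apply disjointness to conclude $\myfunction(\thesubset)=0$. The only cosmetic difference is that the paper phrases this as a direct case split rather than a contradiction.
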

\begin{proof}
	When \cref{alg:gw-clean} terminates, we must have exited its while-loop, 
	implying that we have $\myfunction(\component) = 0$ for each $\component\in\components$, 
	where \components is the set of connected components induced by our output set \forest.
	
	Consider any set $S\subseteq V$.
	If $S$ non-trivially intersects a component $\component\in \components$, i.e., $\emptyset\neq S\cap \component\neq \component$, then there is an edge of $\forest$ in the cut defined by $S$, i.e., $|\delta(S)\cap \forest|\ge 1\ge \myfunction(S)$.
	Otherwise, $S=\bigcup_{\component\in \components_S}\component$ for some $\components_S\subseteq \components$.
	As $\myfunction$ satisfies disjointness and $\myfunction(\component)=0$ for each $\component\in \components_S$, it follows that $|\delta(S)\cap \forest|\ge 0=\myfunction(S)$ in this case as well.
	Thus, we conclude that \forest is primal feasible.
\end{proof}

To prove our approximation ratio, 
we also need to ensure that the value \lowerbound output by \cref{alg:gw-clean} is associated with a feasible dual solution.

\begin{restatable}{lemma}{gwcleandualfeasible}\label{lem:gwcleandualfeasible}
	The value \lowerbound output by \cref{alg:gw-clean} is the cost of a feasible dual solution. 
\end{restatable}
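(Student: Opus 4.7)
The plan is to exhibit a feasible dual solution whose objective equals \lowerbound. For each phase $i \ge 0$ and each $\tau \in \terminals^1_i$, let $S^i_\tau$ denote the connected component of $(\nodes, \forest_i)$ containing $\tau$; increment $\yvar_{S^i_\tau}$ by $(1+\distanceeps)^{-1}\radius_i$ (accumulating over phases in which the same set recurs), and leave $\yvar_\thesubset = 0$ otherwise. Non-negativity is immediate, and each $S^i_\tau$ is by construction an active component with $\myfunction(S^i_\tau) = 1$; hence the dual objective equals $\sum_{i \ge 0} (1+\distanceeps)^{-1}\radius_i \cardinality{\terminals^1_i}$, which matches \lowerbound by Line~\ref{line:lobo-update}.

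The main work is establishing the edge constraint $\sum_{\thesubset : \edge \in \cut(\thesubset)} \yvar_\thesubset \le \cost(\edge)$ for every $\edge = \{\othernode, \node\} \in \edges$. Following the approximate-distance modification (\Cref{relaxation:approximate-distances}), I interpret the $(1+\distanceeps)^{-1}\radius_i$ increment to $\yvar_{S^i_\tau}$ as growth measured in the true (rather than the approximate) metric: by the $(1+\distanceeps)$-approximation guarantee $\distance \le \tilde d \le (1+\distanceeps)\distance$, the true-distance ball of radius $(1+\distanceeps)^{-1}\radius_i$ around $S^i_\tau$ is contained in the approximate ball $\ballunion_i$, so the primal reduction $\cost_{\ballunion_i}(\edge)$ applied in Line~\ref{line:reduced-costs} upper bounds the true-metric consumption of $\edge$ by these dual balls.

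Only sets $S^i_\tau$ containing exactly one of $\othernode, \node$ contribute, and by \Cref{lem:gwcleanbasic}~(vi), the phases in which $\othernode$ and $\node$ lie in distinct components of $(\nodes,\forest_i)$ form a prefix $0,\ldots,k^\ast-1$; in each such phase at most two active components straddle $\edge$. The hard part will be showing that the accumulated charge $(1+\distanceeps)^{-1}\sum_{i<k^\ast}\radius_i(a_\othernode^i + a_\node^i)$, where $a_\othernode^i, a_\node^i \in \{0,1\}$ indicate activity of the two components, is bounded by $\cost(\edge)$. The key observation is that every component of $\forest_i$ has internal reduced cost zero (by induction using that $\augmentation_i$ consists of edges with $\reducedcost=0$ and that edges of $\forest'_i$ are absorbed into $\ballunion_i$), so if an endpoint of $\edge$ resides in an active component at the end of phase $i$, then in phase $i+1$ it contributes $\radius_{i+1}$ toward $\cost_{\ballunion_{i+1}}(\edge)$. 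Together with $\sum_j \cost_{\ballunion_j}(\edge) \le \cost(\edge)$ (since reduced costs remain non-negative) and the geometric schedule $\radius_{i+1} = (1+\overgrowth)\radius_i$, appropriate telescoping combined with the $(1+\distanceeps)^{-1}$ scaling yields the desired bound. The technical crux is carefully handling the one-phase offset between dual increments (indexed by $i$) and the corresponding primal reductions (indexed by $i+1$), along with edge cases in which an endpoint of $\edge$ lies in an inactive component or is only absorbed into $\forest$ via a path selected in a later merge.
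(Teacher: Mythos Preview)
Your construction of the dual and the verification that its objective equals \lowerbound are fine, but the feasibility argument has a genuine gap that you yourself flag without resolving. The paper does \emph{not} assign dual mass to the end-of-phase components $S^i_\tau$ (the components of $(\nodes,\forest_i)$). Instead, for each component $\component$ that survives phase~$i$, it charges the increment $\Delta_i=(1+\distanceeps)^{-1}\radius_i$ to \emph{start-of-phase subcomponents} $\component'\subseteq\component$ with $\myfunction(\component')=1$ (components of $(\nodes,\forest_{i-1})$), running a continuous ball-growth simulation in which the $\yvar$-variables of these subcomponents grow at rate $\rho/(1+\distanceeps)$ while edge costs are reduced at rate $\rho$. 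Whenever an edge hits reduced cost~$0$, the subcomponents merge and the process continues on the coarser partition. This direct coupling guarantees, pointwise in the simulation, that the $\yvar$-increase on any edge never exceeds its cost reduction $\costreduction_i(\edge)$; summing over phases gives $\sum_{\thesubset:\edge\in\cut(\thesubset)}\yvar_\thesubset\le\sum_i\costreduction_i(\edge)\le\cost(\edge)$ with no offset bookkeeping at all.

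Your one-phase-offset plan, by contrast, must show $(1+\distanceeps)^{-1}\sum_{i<k^\ast}\radius_i(a^i_\othernode+a^i_\node)\le\cost(\edge)$ by comparing the phase-$i$ dual increment to the phase-$(i{+}1)$ primal reduction. Two obstacles make this harder than you suggest. First, $a^i_x$ is not monotone in $i$: an endpoint in an inactive component can be absorbed into an active one via a merge path, so $a^{i-1}_x=0$ while $a^i_x=1$; this breaks the naive inequality $\sum_i\radius_i\,a^i_x\le\sum_i\radius_{i+1}\,a^{i-1}_x$ that would let you relate dual charge to $\sum_i\cost_{\ballunion_i}(\edge)$. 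Second, the per-phase reduction is capped by the remaining reduced cost, so you cannot simply claim that each active endpoint contributes $\radius_{i+1}$ to $\cost_{\ballunion_{i+1}}(\edge)$ in the merge phase. These are exactly the ``edge cases'' you list; the paper's finer decomposition sidesteps both because it never needs to match across phases and never assigns dual to a set that did not itself witness the corresponding cost reduction.
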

\begin{proof}
	Recall that $\radius = \radius_i=(1+\overgrowth)^i\frac{\overgrowth}{4}$ in phase~$i$, 
	and denote by $j$ the index of the final phase
	Abbreviating $\Delta_{i} \coloneq (1+\distanceeps)^{-1}\radius_{i}$, 
	we can write the value \lowerbound as
	\begin{align*}
		\lowerbound 
		= 
		(1+\distanceeps)^{-1}
		\sum_{i=0}^{j} (1+\overgrowth)^{i}\frac{\overgrowth}{4}\nactive_{i}
		= 
		(1+\distanceeps)^{-1}
		\sum_{i=0}^{j} \radius_{i}\nactive_{i}
		=
		\sum_{i=0}^{j}\Delta_{i}\nactive_{i}
		\;.
	\end{align*}
	
	Denote the cost reduction of an edge achieved in phase~$i$ 
	as  $\costreduction_{i}(\edge) \coloneq \cost_{\ballunion_{i}}(\edge) = \reducedcost_{i-1}(\edge) - \reducedcost_{i}(\edge)$, 
	with $\reducedcost_{-1}(\edge) \coloneq \cost(\edge)$,
	and the amount that $\yvar$ is increased in phase~$i$ 
	for some set $\thesubset\subset\nodes$ as $\yvar_i(\thesubset)$.
	Now observe that to construct a feasible dual solution of value \lowerbound, 
	it suffices to, 
	in each phase~$i\in[j]_0$,
	for each component \component surviving phase $i$,
	increase the dual variables associated with the subsets $\thesubset\subset\component$ in sum by $\Delta_{i}$, 
	while ensuring that for each $\edge\in\edges$, we maintain
	\begin{align*}
		\sum_{\thesubset\colon\cut(\thesubset)\ni\edge}\yvar_i(\thesubset) \leq \costreduction_{i}(\edge)\;.
	\end{align*} 

 	 Since \myfunction is proper, 
 	we know that for each component \component surviving phase~$i$,
 	there exists at least one component $\component'$ active in phase~$i$
 	such that $\component'\subseteq \component$.
 	Therefore, 
 	to allocate $\Delta_{i}$ to dual variables associated with \component, 
 	we can trace the construction of $\component$ from the set of components $\components' \coloneq \{\component'\subseteq\component \mid \myfunction(\component') = 1,$ 
 	$\component'$ is a connected component of $(\nodes,\forest_{i-1})\}$ (where $\forest_{-1} \coloneq \emptyset$) as follows.
 	Starting with $\components'$ as it resulted from phase $i-1$, 
 	we increase the radii of all components $\component'\in\components'$ at rate $\rho$ 
 	and the \yvar-variables of $\component'\in\components'$ at rate $\frac{\rho}{1+\distanceeps}$.
 	Thus, we gradually construct $\ballunion_{i}$ and reduce the costs of all edges in the affected cuts (i.e., increase $\cost_{\ballunion_{i}} = \costreduction_{i}(\edge)$)
 	until the first edge achieves $\reducedcost_i(\edge) = 0$ (or the phase ends). 
 	This happens \emph{at the latest} when the first dual constraint in phase~$i$ becomes tight---%
 	it can happen earlier as the edge cost reductions associated with our radius increases are based on $(1+\distanceeps)$-approximate distances. 
 	When an edge achieves $\reducedcost_i(\edge) = 0$, 
 	we update $\forest$ and $\components'$ to ensure that both endpoints of \edge lie in the same component,
 	and we iterate the process described~above.
 	
 	Since \component survived phase~$i$, this process does not add an edge to \forest that lies in the cut $(\component,\nodes\setminus\component)$ until we have increased the radius by~$\radius_{i}$. 
 	We claim that at no point in this process, $\components'$ becomes empty.
 	Assuming otherwise, we would have that $\component = \bigcup_{\component'\in\components'}\component'$ with $\myfunction(\component')=0$ at the respective point of the process, implying the contradiction that $\myfunction(\component)=0$ by disjointness of $\myfunction$.
 	Accordingly, the total increase of \yvar-variables that we attribute to \component during phase~$i$ is at least $\Delta_i$, as desired.
 	Moreover, our direct coupling of the \yvar-variable increases with edge-cost reductions further ensures 
 	that the \yvar-variables relevant for any individual edge \edge increase by at most $\costreduction_{i}(\edge)$, 
 	i.e., its cost reduction in phase~$i$, 
 	guaranteeing feasibility. 
 	
 	Summing over the $\nactive_i$ components surviving phase~$i$, we increase the dual variables by at least $\Delta_{i}\nactive_{i}$ per merge phase, concluding the proof. 
\end{proof}

Finally, we bound the cost of \forest in relation to \lowerbound.
\begin{lemma}\label{lem:gwcleanprimalcost}
Denoting by $j$ the phase after which \Cref{alg:gw-clean} terminates, it holds that
\begin{equation*}
		\cost(\forest)\le \left(2-\frac{2}{\nterminals}\right)(1+\overgrowth)^2\sum_{i=0}^{j-1}\radius_i\nactive_i\;.
	\end{equation*}
\end{lemma}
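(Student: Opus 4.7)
The plan is to adapt the primal-dual analysis of \citet{goemans1995approximation} to \cref{alg:gw-clean}, accounting for the three relaxations. First, I would note that every edge $\edge\in\forest$ was added in some phase $i_\edge$ at which its reduced cost had reached zero, so iterating the update rule $\reducedcost_k(\edge) = \max\{0,\reducedcost_{k-1}(\edge) - \cost_{\ballunion_k}(\edge)\}$ yields the accounting identity $\cost(\edge) = \sum_{k\le i_\edge}\cost_{\ballunion_k}(\edge)$. Summing over $\edge\in\forest$ and interchanging the order of summation rewrites $\cost(\forest)$ as a sum over phases $k$ of the total contribution of phase $k$'s ball growth to the cost of the final forest.

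Next, within each phase $k$, I would decompose $\cost_{\ballunion_k}(\edge) = \sum_{\node\in\edge}\max\{\radius_k - \distance_{\forest'_k}(\terminals^1_k,\node),0\}$ by active terminal $\terminal\in\terminals^1_k$ and rewrite each terminal's contribution as $\int_0^{\radius_k}\deg_\terminal(\rho)\,d\rho$, where $\deg_\terminal(\rho)$ counts the edges of $\forest$ whose $\terminal$-side endpoint in $\forest'_k$ lies at $\forest'_k$-distance $\rho$ from $\terminal$. The core combinatorial argument of \citet{goemans1995approximation} then says that for each fixed $\rho$, averaging $\deg_\terminal(\rho)$ over $\terminal\in\terminals^1_k$ yields at most $2 - 2/\nterminals$ rather than simply $2$, provided that every leaf of $\forest$ is a terminal. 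This structural property holds by construction: each internal node of a merge path (\cref{line:perform-merges}) has degree at least two in $\forest$, so the leaves of $\forest$ are contained in $\terminals$, even without the final filter step from \citet{goemans1995approximation} that \Cref{relaxation:incremental-solution} omits.

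To convert this continuous bound into the discrete form of the lemma, I would upper-bound the per-phase contribution in terms of $\radius_k$ and the active-component count, paying two factors of $(1+\overgrowth)$ for \Cref{relaxation:deferred-updates}. One factor arises because replacing a continuous integral by a phase-end evaluation charges the full $\radius_k$ rather than the effective growth increment $\radius_k - \radius_{k-1} = \overgrowth\radius_{k-1}$; the second arises because active components may persist for one full extra phase beyond when they would deactivate under truly continuous forest-function evaluation. The $(1+\distanceeps)$ factor of \Cref{relaxation:approximate-distances} does \emph{not} appear here, since we are bounding only the true primal cost $\cost(\forest)$; it enters later, when this lemma is combined with the dual value $\lowerbound$ from \cref{lem:gwcleandualfeasible} to prove \cref{thm:gwcleanapx}.

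The main obstacle will be making the averaged-degree bound rigorous without the explicit filter step of the original \GW algorithm. While \citet{agrawal1995trees} implicitly rely on essentially the same structural observation for Steiner Forest, extending it to arbitrary proper forest functions requires an inductive characterization of $\forest_i$ across phases that ensures both that its leaves remain in $\terminals$ and that the tree into which $\forest$ decomposes admits the standard \GW-style degree bound across cuts induced by active components. A further subtlety is that ``balls'' defined via $\distance_{\forest'_k}$ may deviate from true metric balls by up to a factor of $(1+\distanceeps)$; I would handle this by working throughout in the $\forest'_k$-metric and verifying that the degree bound is invariant under such distortions, so that no additional approximation loss is incurred in the primal cost bound.
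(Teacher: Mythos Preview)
Your high-level plan (amortize $\cost(\forest)$ phase by phase via the reduced-cost accounting identity, then apply a forest degree bound) is the right skeleton and matches the paper's shell-decomposition argument. However, two of the key steps in your plan are off in ways that would prevent the proof from closing.

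\textbf{The structural property is simpler than you think, and different.} You aim to show that leaves of $\forest$ are terminals and then invoke the classical \GW averaged-degree lemma. But the paper's argument does not need this. Since merges in Line~\ref{line:merge-candidates} only occur between trees of $\reducedcomponents$ (each rooted at an active terminal), \emph{every} node of the contracted forest $\forest\setminus\forest_{i-1}$ (contracting components of $\forest_{i-1}$) is an active component at the start of phase~$i$. Hence the contracted forest has at most $\nactive_{i-1}$ nodes and thus at most $\nactive_{i-1}-1$ edges, giving the degree bound $2-\nicefrac{2}{\nactive_{i-1}}$ immediately---no filtering, no leaf argument, no integral. Your ``leaves of $\forest$ are terminals'' is true but neither sufficient nor needed: what matters is that \emph{all} nodes of the contracted forest, not just leaves, are active. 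The integral formulation $\int_0^{\radius_k}\deg_\terminal(\rho)\,d\rho$ is unnecessary; the direct bound ``each edge loses at most $2\radius_i$ in phase~$i$'' already gives $\cost(\forest)\le \sum_{i=0}^{j}\radius_i\nactive_{i-1}(2-\nicefrac{2}{\nactive_{i-1}})$.

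\textbf{Your accounting for the two $(1+\overgrowth)$ factors is incorrect.} Neither factor arises for the reasons you state. The first factor comes from the index shift $\radius_i=(1+\overgrowth)\radius_{i-1}$, converting $\sum_{i=0}^{j}\radius_i\nactive_{i-1}$ into $(1+\overgrowth)\sum_{i=-1}^{j-1}\radius_i\nactive_i$. The second factor is more subtle and your plan misses it entirely: one must absorb the spurious $i=-1$ boundary term $\radius_{-1}\nactive_{-1}=\frac{\overgrowth\nterminals}{4(1+\overgrowth)}$. The paper does this by exploiting integrality of edge weights: since every terminal is incident to an edge of $\forest$ and $\cost(\edge)\ge 1$, we have $\cost(\forest)\ge\nicefrac{\nterminals}{2}$, and hence $\cost(\forest)\le(1+\overgrowth)\cost(\forest)-\frac{\overgrowth\nterminals}{2}\le(1+\overgrowth)\bigl(\cost(\forest)-2\radius_{-1}\nactive_{-1}\bigr)$, which eats the boundary term at the price of the second $(1+\overgrowth)$. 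Your explanation in terms of ``full $\radius_k$ versus increment $\radius_k-\radius_{k-1}$'' does not yield a factor of $(1+\overgrowth)$ (the ratio is $(1+\overgrowth)/\overgrowth$, not $1+\overgrowth$), and ``persisting one extra phase'' is not the mechanism either. Without the $\cost(\forest)\ge\nicefrac{\nterminals}{2}$ trick---which crucially uses the assumption $\cost\colon\edges\to\naturalsnozero$ and the specific choice $\radius_0=\nicefrac{\overgrowth}{4}$---your argument will not reach the stated bound.
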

\begin{proof}
Recall that merge phase~$i\geq 0$ is the phase in which $\radius =\radius_i= (1+\overgrowth)^i\cdot\frac{\overgrowth}{4}$, 
	and $\nactive_i$ is the number of active components \emph{surviving} phase~$i$.
	Moreover, note that all edges \edge added to \forest satisfy that $\reducedcost(\edge) = 0$ on termination, 
	and that the cost reduction for each edge in phase~$i$ is at most $2\radius_i$, as each endpoint of the edge is contained in at most one tree of $\forest'$.
	Hence, the cost of each edge in \forest can be amortized over the phases~$i$ in which its cost is reduced by the algorithm, i.e., which it starts with $\reducedcost_{i-1}(\edge)>0$, and in which it intersects $\ballunion_i$.
	Accordingly, \forest decomposes into a set of nested \emph{shells} $\ballunion_i\setminus\ballunion_{i-1}$ (with $\ballunion_{-1}:=\emptyset$), which the algorithm iteratively and implicitly constructs around active components. 
	This \emph{shell-decomposition argument} is illustrated in \Cref{fig:shelling}.
	
	\begin{figure}[t]
		\centering
		\includegraphics[width=0.8\linewidth]{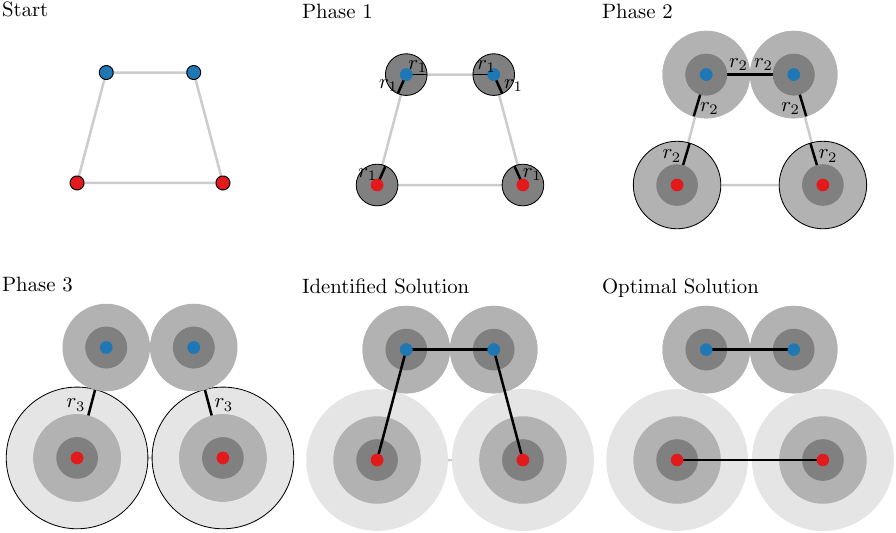}
		\caption{Illustration of our \emph{shell-decomposition argument} on a small instance of Steiner Forest (SF--IC). 
			Gray lines indicate original edges, black lines indicate (parts of) selected edges, black circle linings indicate active components, and node colors indicate input components. 
			The illustrations provided by \citet{goemans1995approximation} (Figs.~2--5) are stylistically similar, 
			but our visualization clarifies the phase-wise charging argument underlying our shell-decomposition algorithm. 
		}\label{fig:shelling}
	\end{figure}
	
	Since in phase~$i$, 
	the remaining edges $\forest\setminus \forest_{i-1}$ (where $\forest_{-1}\coloneq\emptyset$) to be added to \forest form a forest with $\nactive_i$ active components as nodes,  
	and the average degree of such a forest is at most $2-\nicefrac{2}{\nactive_i}$,
	we can bound the cost of the forest \forest output by \cref{alg:gw-clean}~from above as
	\begin{align*}
		\cost(\forest) 
		\leq 
		\sum_{i=0}^{j}\radius_i\nactive_{i-1}\left(2-\frac{2}{\nactive_{i-1}}\right)
		&\leq 
		\left(2-\frac{2}{\nterminals}\right)(1+\overgrowth)\sum_{i=0}^{j}\radius_{i-1}\nactive_{i-1}
		=\left(2-\frac{2}{\nterminals}\right)(1+\overgrowth)\sum_{i=-1}^{j-1}\radius_i\nactive_i\;,
	\end{align*}
	where $\nactive_{-1} \coloneq \nterminals$.
	Since \forest is a feasible primal solution by \Cref{lem:gwcleanprimalfeasible}, each terminal must be incident with at least one edge from \forest, 
	and the minimum edge weight is at least $1$, we have that $\cost(\forest)\ge \nicefrac{\nterminals}{2}$.
	On the other hand, $\nactive_{-1}\radius_{-1}= \frac{\overgrowth t}{4(1+\overgrowth)}$, yielding that
	\begin{equation*}
		\cost(\forest)\le (1+\overgrowth)\cost(\forest)-\frac{\overgrowth t}{2}=(1+\overgrowth)(\cost(\forest)-2\nactive_{-1}\radius_{-1})\le \left(2-\frac{2}{\nterminals}\right)(1+\overgrowth)^2\sum_{i=0}^{j-1}\radius_i\nactive_i\;.\qedhere
	\end{equation*}
\end{proof}

Using the above lemmas, we can now prove \cref{thm:gwcleanapx}.

\begin{proof}[Proof of \cref{thm:gwcleanapx}]
	Assume that \Cref{alg:gw-clean} terminates at the end of phase~$j$, i.e., $\nactive_{j} = 0$;
	by \Cref{lem:gwcleantermination}, such a phase exists.
	Observe that the value \lowerbound output by \Cref{alg:gw-clean} then satisfies
	\begin{align*}
		\lowerbound 
		= 
		(1+\distanceeps)^{-1}
		\sum_{i=0}^{j} \radius_{i}\nactive_{i}
		= 
		(1+\distanceeps)^{-1}
		\sum_{i=0}^{j-1} \radius_i\nactive_i\;.
	\end{align*}
	By \Cref{lem:gwcleanprimalfeasible}, \forest is a feasible primal solution. 
	As \lowerbound is the cost of a feasible dual solution by \Cref{lem:gwcleandualfeasible}, 
	using \Cref{lem:gwcleanprimalcost}, 
	we obtain the desired approximation guarantee as 
\begin{equation*}
	\frac{\cost(\forest)}{\lowerbound} 
	\leq 
	\frac{
		\left(2-\nicefrac{2}{\nterminals}\right)(1+\overgrowth)^2\sum_{i=0}^{j-1}\radius_i\nactive_i
	}{
	(1+\distanceeps)^{-1}
	\sum_{i=0}^{j-1} \radius_i\nactive_i
	}
	= \left(2-\frac{2}{\nterminals}\right)(1+\distanceeps)(1+\overgrowth)^2
	\;.\qedhere
\end{equation*}
\end{proof}

\subsection{Specification of our Model-Agnostic Algorithm}
\label{spec:meta-algo}

The task of approximating Constrained Forest Problems to within a factor of $2 + \epsilon$ in any specific model 
now boils down to implementing, or more precisely simulating, \Cref{alg:gw-clean}.
Here and in what follows, we do not discuss the computation of \lowerbound, which is straightforward to add if so desired.
By \Cref{lem:gwcleantermination}, 
we can divide the computation into $\BO(\frac{\log\nnodes}{\overgrowth})$ phases, 
where, starting at phase~$0$, 
phase~$i$ grows components by radius $\radius_i = (1 + \overgrowth)^i\cdot\frac{\overgrowth}{4}$. 
In our model-agnostic algorithm, 
each phase then consists of six main steps, 
as summarized in \Cref{fig:overview}.
In each phase, we tackle six abstract problems, 
which correspond to the six building blocks of our algorithm.

\begin{figure}[t]
	\centering
	\includegraphics[width=0.9\linewidth]{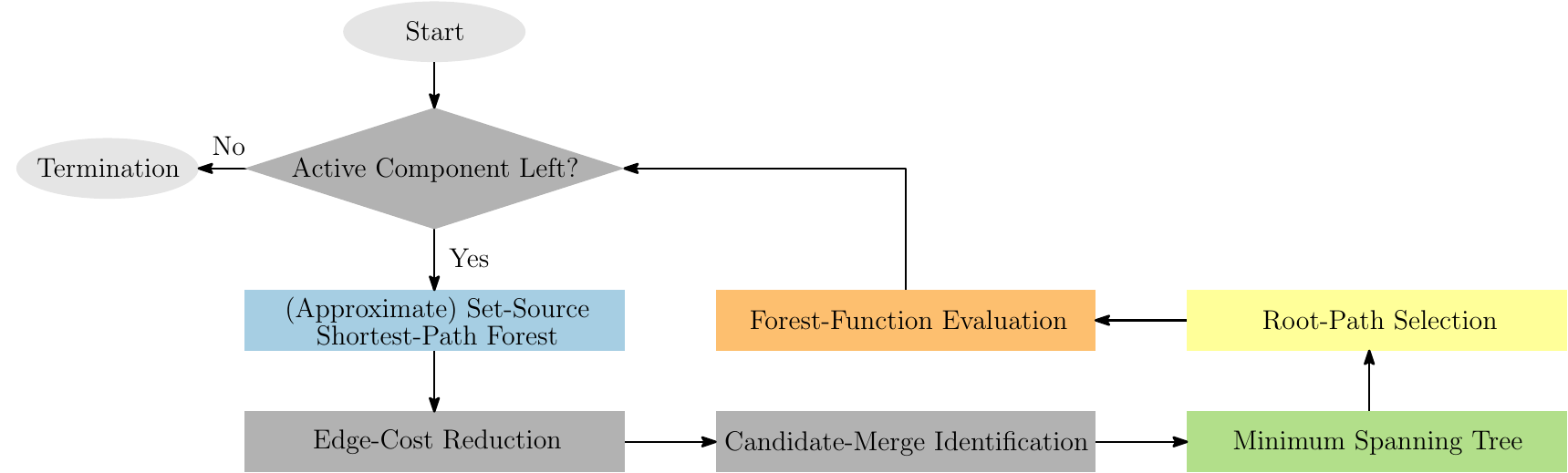}
	\caption{%
		Overview of our model-agnostic shell-decomposition algorithm for approximating Constrained Forest Problems.
		Main tasks are colored;
	simple book-keeping operations are shaded in~gray.
	}\label{fig:overview}
\end{figure}

\subsubsection{Problems Used as Building Blocks}
\label{sec:algorithm:modelagnostic:problems}

Our algorithm is composed of subroutines handling the following problems.
The input and output representation are model-specific, so we do not state them here.

\begin{problem}[$\alpha$-approximate Set-Source Shortest-Path Forest]
	\label{def:aSSSP}
	Given a connected graph $\graph = (\nodes,\edges)$ with edge costs $\cost\colon \edges \rightarrow \naturals$ 
	and a set of sources $\thesubset \subseteq \nodes$, 
	compute a forest $\forest'$ spanning \graph 
	such that for all nodes $\node \in \nodes$, $\distance_{\forest'}(\thesubset, \node) \leq \alpha \distance(\thesubset, \node)$, 
	where $\distance(\thesubset, \node) =  \min_{\othernode \in \thesubset}\{\distance(\othernode,\node)\}$, and $\distance_{\forest'}(\othernode,\node)$ is the weighted distance between \othernode and \node in \forest.
\end{problem}

Note that in phase $i$, we can confine ourselves to computing an $\alpha$-approximate set-source shortest-path forest \emph{up to distance $\radius_i$} (see~\cref{alg:gw-clean}, Lines~\ref{line:growth-balls}--\ref{line:growth-sptree}).

\begin{problem}[Candidate-Merge Identification]
	\label{def:cmi}
	Given a graph $\graph = (\nodes,\edges)$, edge costs ${\cost\colon \edges \rightarrow \naturals}$,
	and a rooted forest $\forest'$ with a subset of its trees marked such that each node \node in a marked tree knows that its tree is marked as well as the identity of its root $\terminal_\node$, 
	identify all edges $\edge = \{\othernode,\node\}$ that are in distinct marked trees and satisfy $\cost(\edge)=0$.
\end{problem}

\begin{problem}[Minimum Spanning Tree]
	\label{def:mst}
	Given a graph $\graph = (\nodes,\edges)$ with edge costs $\cost\colon \edges \rightarrow \naturals$, 
	compute the Minimum Spanning Tree of \graph.
\end{problem}

\begin{problem}[\stepthree] 
	\label{def:ps}
	Given a graph $\graph = (\nodes,\edges)$ with edge costs $\cost\colon \edges \rightarrow \naturals$, 
	a rooted forest, and a set of marked nodes $\thesubset\subseteq \nodes$, 
	select the forest edges connecting each marked node to its~root. 
\end{problem}
\Cref{def:ps} can be reduced to \emph{approximate transshipment} as follows:
\begin{compactenum}[(i)]
  \item count the number of marked nodes in each tree;
  \item set the demand of each marked node to~$-1$ and the demand of the root of each tree to the number of marked nodes in the tree;
  \item set edge costs to $0$ for tree edges and to~$+\infty$ for all other edges;
  \item solve the approximate transshipment problem for these demands and edge weights; and
  \item select all edges with non-zero flow in the output.
\end{compactenum}
Note that the only non-trivial step of the reduction is the computation of the demand, which boils down to a single Partwise Aggregation.
\begin{problem}[Edge-Cost Reduction]
	\label{def:ecr}
	Given a graph $\graph = (\nodes,\edges)$ 
	with edge costs $\cost\colon \edges \rightarrow \naturals$,
	a radius $\radius$,
	and an output of \Cref{def:aSSSP},
	compute, for each edge $\edge\in\edges$,
	\begin{align*}
		\reducedcost(\edge) = \max\left\{0,\cost(\edge) - \sum_{\node\in \edge}\max\{\radius-\distance_{\forest'}(\thesubset,\node),0\}\right\}\;.
	\end{align*}
\end{problem}

\begin{problem}[Forest-Function Evaluation]
	Given a graph $\graph = (\nodes,\edges)$, 
	a partition $\components$ of the node set such that each $\component \in \components$ is connected, 
	and a proper forest function $\myfunction\colon 2^\nodes\rightarrow\{0,1\}$, 
	evaluate $\myfunction(\component)$ for each component $\component \in \components$.
	\label{def:ffe}
\end{problem}

\clearpage

\subsubsection{Meta-Algorithm Using the Building Blocks}
\label{sec:algorithm:modelagnostic:buildingblocks}
Initialize $\forest$, $\components$, $\terminals^1$, $\reducedcost$, and $\radius$ as in \Cref{alg:gw-clean}.
Throughout our algorithm,
we maintain a set of connected components $\components$ with activity statuses $\myfunction(\component)$ for each $\component\in\components$.
At the beginning of phase~$0$, 
\components contains exactly the singleton sets corresponding to all nodes, i.e., $\components = \{\{\node\}\mid \node\in\nodes\}$,
and the active components are the terminals. 
Each phase~$i$ of our algorithm (i.e., one loop iteration in \Cref{fig:overview}, simulating one while-loop iteration of \cref{alg:gw-clean}) 
then consists of the following steps, 
executed for $\radius = (1+\overgrowth)^i\cdot\frac{\overgrowth}{4}$.
\begin{enumerate}[label=(\arabic*)]
	\item \emph{Approximate Set-Source Shortest-Path Forest (aSSSP).}\label{step:aSSSP}
	Assign as temporary edge weight to edge $\edge\in \edges$
	the reduced cost $\reducedcost(\edge)$ 
	if $\reducedcost(\edge)>0$ 
	or $\edge \in \forest\cup \forest'$ (where $\forest'\coloneq\emptyset$ in phase~$0$).
	Compute a $(1+\distanceeps)$-approximate (\radius-restricted) set-source shortest-path forest $\forest'$, 
	using the active terminals $\terminals^1 = \{\min\{\node\mid \node\in\terminals\cap\component\}\mid \component \in \components, \myfunction(\component) = 1\}$ as sources, i.e., for each active component, $\terminals^1$ contains the terminal with the minimum identifier. 
	After Step~\ref{step:aSSSP}, 
	for each node $\node\in\nodes\setminus\terminals^1$ we know its parent in the truncated SSSP forest, 
	its closest source $\othernode\in\terminals^1$ in the respective shortest-path tree (if any), 
	and its distance $\distance_{\forest'}(\othernode,\node)$ to that source. 
	\item \emph{Edge-Cost Reduction (ECR).}\label{step:ecr} 
	Using the approximate \radius-restricted SSSP forest and the distances computed in Step~\ref{step:aSSSP}, 
	update the edge costs in accordance with \Cref{def:ecr}.\footnote{%
		We can keep the edge costs in \naturals by making sure that phases end with integral values of \radius. 
		Note that $\distanceeps \geq \nnodes^{-\BO(1)}$, 
		or distance computations must be exact. 
		Scale all weights by $\lceil\nicefrac{1}{\distanceeps}\rceil$. 
		Now rounding $\radius$ up to the next integer has marginal impact on the approximation guarantee, 
		as overgrowing by factor $(1+\distanceeps)$ plus an additive $1$ is not worse than overgrowing by factor $(1+2\distanceeps)$.
	}
	\item \emph{Candidate-Merge Identification (CMI).}\label{step:cmi}
	Identify the candidate merges \mergeedges between adjacent trees of the aSSSP forest computed in Step~\ref{step:aSSSP}, using that parents of nodes and reduced edge costs are known.  
	\item \emph{Minimum Spanning Tree (MST).}\label{step:msf}
	Compute a Minimum Spanning Tree \tree of \graph with the following edge weights:
	\begin{inparaenum}[(i)]
	\item $0$ for edges in $\forest'$, i.e., the tree edges in the output of Step~\ref{step:aSSSP},
	\item $1$ for edges in $\mergeedges$, i.e., those determined in Step~\ref{step:cmi}, and
	\item $+\infty$ (or a large value) for all other edges.
	\end{inparaenum}
	Mark all selected edges of \tree that are also in the set \mergeedges known from Step~\ref{step:cmi}, i.e.,	
	the edges constituting \augmentation, and add them to \forest 
	(thus \emph{excluding} all edges with temporary weight greater than $1$). 
	For each connected component $\component'$ of the forest constituted by the selected edges of temporary weight~$0$ or~$1$ that contains a terminal $\tau\in\component'$, 
	set $\min\{\node\in\component'\mid \myfunction(\{\node\}) = 1\}$ as the new identifier of the component \component to be created from $\component'$, 
	making it known to all $\node\in \component'$.
	\item \emph{\stepthree (\stepthreeabbrv).}\label{step:rps}
	Connect the marked edges identified in Step~\ref{step:msf} to the roots (i.e., the node with the same identifier as the component) of the components they connect by adding the necessary edges to \forest.
	\item \emph{Forest-Function Evaluation (FFE).}\label{step:ffe}
	Using the new component memberships known from Step~\ref{step:msf},
	update the set~\components and evaluate $\myfunction(\component)$ for each updated component $\component\in\components$.
	If $\myfunction(\component)=0$ for all such components, terminate and output \forest.
	Otherwise, continue with the next loop iteration.
\end{enumerate}

\clearpage

\subsection{Correctness and Complexity of our Model-Agnostic Meta-Algorithm}
Because \Cref{alg:gw-clean} computes a $(2+\epsilon)$-approximation by \Cref{thm:metatheorem}, 
we can prove the correctness and approximation guarantee of our model-agnostic variant (\cref{spec:meta-algo}) by showing its equivalence to \Cref{alg:gw-clean}.

\begin{restatable}[Model-Agnostic Shell-Decomposition Algorithm]{theorem}{equivalenceproof}
	\label{thm:modular-gw}
	For $\epsilon,\distanceeps,\overgrowth$ as in \Cref{thm:gwcleanapx}, a graph $\graph = (\nodes,\edges)$, edge costs $\cost\colon\edges\to\naturalsnozero$, 
	and proper function \myfunction, 
	the modular shell-decomposition algorithm described in \cref{sec:algorithm:modelagnostic:buildingblocks} yields a feasible $(2+\epsilon)$-approximation to the optimal solution. 
\end{restatable}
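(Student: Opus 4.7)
My plan is to prove \Cref{thm:modular-gw} by establishing that the modular algorithm of \Cref{sec:algorithm:modelagnostic:buildingblocks} faithfully simulates \Cref{alg:gw-clean} phase by phase, so that the $(2+\epsilon)$-approximation and feasibility guarantees of \Cref{thm:gwcleanapx} transfer directly to the modular algorithm. Concretely, I proceed by induction on the phase index $i$, with induction hypothesis that at the start of phase $i$ the state $(\forest, \forest', \reducedcost, \components, \terminals^1, \radius)$ maintained by the modular algorithm coincides with the state of \Cref{alg:gw-clean} at the start of its $i$-th while-loop iteration. The base case is immediate because both algorithms initialize $\forest = \emptyset$, $\components = \{\{\node\} : \node \in \nodes\}$, $\reducedcost = \cost$, $\radius_0 = \overgrowth/4$, and $\terminals^1 = \terminals$.

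For the inductive step, I match each of the six building blocks of \Cref{sec:algorithm:modelagnostic:buildingblocks} to its counterpart in \Cref{alg:gw-clean}. Step~\ref{step:aSSSP} (aSSSP) produces the truncated forest $\forest'$ of \cref{line:growth-sptree}, with the temporary-weight rule arranged so that edges discarded by \cref{line:remove-unneeded} in the previous phase (those with $\reducedcost(\edge) = 0$ outside $\forest \cup \forest'$) are effectively absent, while every surviving edge retains its reduced cost. Step~\ref{step:ecr} (ECR) updates reduced costs exactly as in \cref{line:reduced-costs}, Step~\ref{step:cmi} (CMI) enumerates the set $\mergeedges$ of \cref{line:merge-candidates}, and Step~\ref{step:rps} (\stepthreeabbrv) extends each endpoint of an augmentation edge to its terminal root via the unique path in $\forest'$, realizing \cref{line:perform-merges}. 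Step~\ref{step:ffe} (FFE), together with the accompanying update of \components and $\terminals^1$, implements \cref{line:components-update,line:active-terminals-update} and also detects termination when all components are inactive. Each such correspondence amounts to inspecting the specifications in \Cref{sec:algorithm:modelagnostic:problems}.

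The one step that requires a genuine argument, and which I expect to be the main obstacle, is Step~\ref{step:msf} (MST) as an implementation of the nondeterministic choice on \cref{line:acyclic-sequence}, which calls for an arbitrary $\augmentation \subseteq \mergeedges$ such that $\forest' \cup \augmentation$ spans $(\nodes, \forest' \cup \mergeedges)$ as a forest. My plan is to show that, for the weight assignment $0$ on $\forest'$, $1$ on $\mergeedges \setminus \forest'$, and $+\infty$ on all other edges, any MST $\tree$ of the resulting weighted (connected) graph, restricted to its finite-weight edges, contains all of $\forest'$ and spans every connected component of $(\nodes, \forest' \cup \mergeedges)$. This follows from the matroid structure of spanning forests: the $0$-weight edges are already acyclic, so a greedy MST includes all of them, and then adds a minimal set of $1$-weight edges each bridging two distinct components of the current subforest, yielding precisely a spanning forest of $\forest' \cup \mergeedges$ (with the $+\infty$-edges used only to connect isolated nodes, which the algorithm never marks). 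Defining $\augmentation$ as the set of weight-$1$ edges in $\tree$ then produces a valid witness for \cref{line:acyclic-sequence}. With this correspondence in place the induction closes, and the theorem follows by invoking \Cref{thm:gwcleanapx} with $\distanceeps, \overgrowth \le \epsilon/4$.
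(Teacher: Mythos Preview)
Your proposal is correct and follows essentially the same approach as the paper: an induction on phases establishing that the modular algorithm simulates \Cref{alg:gw-clean}, followed by an invocation of \Cref{thm:gwcleanapx}. One point worth tightening: the paper explicitly \emph{augments} the induction hypothesis with the claim that at the start of each phase the edge set $\edges$ maintained by \Cref{alg:gw-clean} consists exactly of the edges with positive reduced cost together with $\forest_{i-1}\cup\forest'_{i-1}$, and then discharges this using \Cref{lem:gwcleanbasic}~(vii), which gives $\forest_i\cup\forest_i'=\forest_i'\cup\augmentation_i$. You gesture at this (``the temporary-weight rule arranged so that edges discarded by \cref{line:remove-unneeded} \ldots\ are effectively absent''), but to make the correspondence between the modular algorithm's filter $\edge\in\forest\cup\forest'$ and \Cref{alg:gw-clean}'s retention of $\forest'\cup\augmentation$ rigorous you should cite that identity explicitly. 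Conversely, your matroid argument for why the MST step realizes a valid choice of $\augmentation$ in \cref{line:acyclic-sequence} is more detailed than the paper's, which simply asserts correctness of Step~\ref{step:msf}.
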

\begin{proof}
We prove the claim by induction on the phase~$i$, going step by step through the algorithm given in \Cref{sec:algorithm:modelagnostic:buildingblocks} and arguing why the computed objects, in particular \forest, match those of \Cref{alg:gw-clean}.

We augment the induction hypothesis by the claim that at the beginning of a phase, in \Cref{alg:gw-clean}, \edges contains exactly the edges of non-zero reduced cost and $\forest_{i-1}\cup \forest'_{i-1}$.
The induction anchor (phase $i=-1$) is given by the identical initialization of objects.
For the step to phase~$i\in \naturals$, observe first that by the induction hypothesis (in particular the additional claim), Step~\ref{step:aSSSP} computes the same $\forest_i'$ and the same distances as \Cref{alg:gw-clean}, and hence Step~\ref{step:ecr} yields the same $\reducedcost_i$.
It follows that Step~\ref{step:cmi} computes the same set $\mergeedges$ of candidate merges, 
implying that Step~\ref{step:msf} correctly determines $\augmentation_i$ and adds it to $\forest$.
Note that the latter step also updates component memberships and component identifiers, 
but does not yet evaluate whether $\myfunction(\component)=1$ for the new components.
This is finally done in Step~\ref{step:ffe}, 
such that in Step~\ref{step:aSSSP} of the next phase, the correct set $\terminals^1_i$ will be used.
It remains to prove the additional claim that $\edges_i$ contains exactly the edges of non-zero reduced cost and $\forest_{i-1}\cup \forest'_{i-1}$, which now is immediate from Line~\ref{line:remove-unneeded} and \Cref{lem:gwcleanbasic}~(vii).

We conclude that both algorithms terminate at the end of the same phase $j$, returning the same forest $\forest=\forest_j$, which by \Cref{thm:gwcleanapx} is a $(2+\epsilon)$-approximation.
\end{proof}

The proof of our main theorem, then, follows immediately.

\metatheorem*
\begin{proof}
	By \cref{thm:modular-gw},
	our modular shell-decomposition algorithm (\cref{alg:gw-clean}) delivers the desired approximation guarantee.
	Without loss of generality, we may assume that $\epsilon\in \nnodes^{-\BO(1)}$, as this is enough to enforce that $\epsilon$ times the cost of an optimal solution is smaller than $1$, i.e., a $2$-approximation is guaranteed.
	Thus, it is sufficient to instantiate the algorithm with $\distanceeps,\overgrowth\in\nnodes^{-\BO(1)}$,
	such that by \cref{lem:gwcleantermination}, the algorithm will terminate after $\BO(\frac{\log\nnodes}{\overgrowth}) = \BO(\frac{\log\nnodes}{\epsilon}) = \tildeO(\epsilon^{-1})$ while-loop iterations. 
	In each of these iterations (up to bookkeeping operations), 
	aSSSP, MST, RPS, and FFE computations are performed exactly once, 
	yielding the desired model-agnostic complexity. 
\end{proof}

\clearpage

\section{Distributed Algorithm}
\label{apx:distributed}
\label{apx:algorithm:distributed}

\subsection{Computational Model}
\label{congest:model}

In the $\Congest(\log \nnodes)$ model (\Congest model) of distributed computing \cite{peleg2000distributed}, 
the input graph \graph also models the topology of a distributed system. 
Each node initially knows its unique $\BO(\log\nnodes)$-bit identifier, 
the identifiers of its neighbors, the weight of its incident edges, and its local problem-specific input. 
Algorithms in the \Congest model proceed in synchronous rounds. 
In each round, each node may
\begin{inparaenum}[(1)]
	\item perform arbitrary,
	finite computations based on its local information, 
	\item send one message of $\BO(\log \nnodes)$ bits to each of its neighbors (where the messages sent to different neighbors may be distinct), and
	\item receive all messages sent by its neighbors.
\end{inparaenum}
The crucial complexity measure for \Congest algorithms is their \emph{round complexity}, 
i.e., the number of \emph{rounds} until all nodes have terminated explicitly.\footnote{%
	A secondary complexity measure for \Congest algorithms 
	is their \emph{message complexity}, i.e., the number of \emph{messages} sent across edges sent before termination. 
	For simplicity, we omit this measure in our exposition. 
}

As a generalization of the model, we will allow for (a constant number of) \emph{virtual nodes.}
Their incident edges are not part of the communication topology, and algorithms need to simulate the computations performed by these nodes.
We require that, initially, the weights of the edges incident with virtual nodes 
are known to their non-virtual endpoints, 
while other inputs to virtual nodes and weights of edges between virtual nodes should be known globally.
This addition enables us to apply our results to a wider range of tasks, e.g., \facloclong.

Note that all tasks studied in this work require $\Omega(\hopdiameter)$ rounds, even if $(V,E)$ is known to all nodes, i.e., $\Omega(\hopdiameter)$ is a universal lower bound.
Moreover, simple standard techniques can be used to chain together subroutines without knowing when they terminate at which node in advance.
For simplicity, we omit the respective book-keeping and control from the description of our algorithms.

\subsection{Meta-Algorithm}
\label{congest:meta}

Implementing our model-agnostic algorithm in the \Congest model, 
we obtain the following unconditional result. 

\begin{theorem}\label{thm:congest-cfp-existential}
	In the \Congest model with $\BO(1)$ virtual nodes, for any $0<\epsilon\le 1$, a $(2+\epsilon)$-approximation to any proper 
	Constrained Forest Problem can be obtained in $\tildeO(\epsilon^{-3}(\sqrt{\nnodes} + \hopdiameter)+\epsilon^{-1}\myfunctioncomplexity)$ rounds, 
	where \myfunctioncomplexity is the complexity of evaluating \myfunction in the distributed setting.
\end{theorem}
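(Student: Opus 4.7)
The plan is to derive \cref{thm:congest-cfp-existential} as a direct instantiation of \cref{thm:metatheorem} in the \Congest model, supplying concrete round complexities for each of the four building blocks (aSSSP, MST, \stepthreeabbrv, FFE) and arguing that the bookkeeping operations and virtual-node simulations fit within the same budget. Choosing $\distanceeps,\overgrowth\in\Theta(\epsilon)$, \cref{lem:gwcleantermination} bounds the number of phases by $\BO(\epsilon^{-1}\log \nnodes)$. Hence it suffices to show that each phase runs in $\tildeO(\epsilon^{-2}(\sqrt{\nnodes}+\hopdiameter) + \myfunctioncomplexity)$ rounds; multiplying by the phase count then yields the claimed bound, while the approximation factor $(2+\epsilon)$ is inherited from \cref{thm:gwcleanapx}.

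First I would plug in off-the-shelf \Congest subroutines for the four building blocks. The MST subroutine (\cref{def:mst}) runs in $\tildeO(\sqrt{\nnodes}+\hopdiameter)$ rounds via the classic Kutten--Peleg-style algorithm. For the $(1+\distanceeps)$-approximate set-source shortest-path forest (\cref{def:aSSSP}), I would attach a virtual super-source to all currently active terminals in $\terminals^1$ via zero-weight edges and invoke the state-of-the-art approximate-SSSP construction of Becker et al.\ / Rozho\v{n} et al., which takes $\tildeO(\epsilon^{-2}(\sqrt{\nnodes}+\hopdiameter))$ rounds. The same transshipment primitive handles \stepthreeabbrv (\cref{def:ps}) through the reduction recalled immediately after its definition. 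Finally, by hypothesis FFE takes $\myfunctioncomplexity$ rounds.

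Next I would argue that the intermediate book-keeping is absorbed into the above budget. Candidate-merge identification (\cref{def:cmi}) and edge-cost reduction (\cref{def:ecr}) are local computations piggy-backing on the distances and parent pointers output by aSSSP, so each takes $\BO(1)$ rounds. Updating $\components$ and $\terminals^1$ amounts to propagating the minimum-identifier terminal inside each newly formed component, i.e., a constant number of Partwise Aggregations on components that are internally connected via $\forest'\cup\augmentation$; each such aggregation can be carried out in $\tildeO(\sqrt{\nnodes}+\hopdiameter)$ \Congest rounds. The $\BO(1)$ virtual nodes can be simulated within the same bound: each real node locally knows the weight of the edge joining it to a virtual node, and any aggregate the virtual node needs to compute or broadcast can be routed via a global aggregation on a precomputed BFS tree in $\BO(\hopdiameter)$ rounds.

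The hard part, as I see it, is giving a single, uniform simulation of the $\BO(1)$ virtual nodes that interacts correctly with \emph{all three} of aSSSP, MST, and transshipment---their non-communication incident edges must be incorporated into each primitive without inflating edge congestion beyond $\polylog(\nnodes)$, which is precisely why the statement caps their number at $\BO(1)$. Once that is done, summing the per-phase cost $\tildeO(\epsilon^{-2}(\sqrt{\nnodes}+\hopdiameter)+\myfunctioncomplexity)$ over $\BO(\epsilon^{-1}\log \nnodes)$ phases delivers the $\tildeO(\epsilon^{-3}(\sqrt{\nnodes}+\hopdiameter)+\epsilon^{-1}\myfunctioncomplexity)$ round complexity stated in the theorem.
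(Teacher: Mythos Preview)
Your proposal is correct and follows essentially the same route as the paper: instantiate \cref{thm:metatheorem} with the Rozho\v{n} et al.\ aSSSP/transshipment algorithms and Kutten--Peleg MST for $\tildeO(\epsilon^{-2}(\sqrt{\nnodes}+\hopdiameter))$ per phase, handle ECR/CMI locally, update components via Partwise Aggregation, and simulate the $\BO(1)$ virtual nodes through global BFS-tree aggregations. The paper additionally notes that the Rozho\v{n} et al.\ primitives operate in the minor-aggregation model and thus natively support $\BO(1)$ virtual nodes, which dispatches the point you flag as ``the hard part'' without further ad-hoc simulation.
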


\begin{proof}
To prove this theorem, 
it suffices to implement the five problem-independent building blocks of our model-agnostic algorithm  (i.e., blocks (1)--(5), cf.~\cref{sec:algorithm:modelagnostic:buildingblocks}) 
with complexity $\tildeO(\epsilon^{-2}(\sqrt{\nnodes} + \hopdiameter))$ in the \Congest model;
the claim then follows from \Cref{thm:metatheorem}.
Outside of subroutine calls, all nodes simulate the virtual nodes, i.e., 
they maintain their state, 
with the exception of not necessarily knowing the weights of their edges to non-virtual nodes.
If not all nodes have the necessary information, this can be fixed in $\BO(\hopdiameter)$ rounds by broadcasting the respective information globally.

\begin{enumerate}
	\item \emph{Distributed $\alpha$-Approximate Set-Source Shortest-Path Forest.}
	\citet{rozhon2022paths} provide a deterministic \Congest algorithm computing a $(1+\epsilon)$-approximate set-source shortest-path forest in $\tildeO((\sqrt{\nnodes} + \hopdiameter)\epsilon^{-2})$ time for $\epsilon\in (0,1]$, 
	with which we can also compute an \radius-restricted approximate set-source shortest-path forest $\forest'$.
	This algorithm operates in the minor aggregation model and is capable of handling $\BO(1)$ virtual nodes.
	
	\item \emph{Distributed Edge-Cost Reduction.}
	This problem can be solved via local computation based on knowledge from Step~1, 
	i.e., each $\node\in\nodes$ can compute the reduced cost of its incident edges based on local information.
	
	\item \emph{Distributed Candidate-Merge Identification.}
	Each non-virtual node sends the identifier of its root to all neighbors (one round);
	for virtual nodes, the identifier is known to all nodes.
	Now it can be locally determined for each incident edge (and those with two virtual endpoints) whether it is a merge candidate.
	\item \emph{Distributed Minimum Spanning Tree.} 
	\citet{kutten98distributed} provide a deterministic MST algorithm running in $\tildeO(\sqrt{\nnodes} + \hopdiameter)$ rounds.
	It is not difficult to modify this algorithm to handle $\BO(1)$ virtual nodes.
	For an explicit solution based on Partwise Aggregation, see \Cref{apx:msf}.
	\item \emph{Distributed \stepthree.}
	\citet{rozhon2022paths} provide a deterministic \Congest algorithm for computing $(1+\epsilon)$-approximate Transshipment in $\tildeO((\sqrt{\nnodes} + \hopdiameter)\epsilon^{-2})$ time for $\epsilon\in (0,1]$, i.e., $\tildeO(\sqrt{\nnodes} + \hopdiameter)$ for $\epsilon\in\Omega(1)$.
	The algorithm can handle $\BO(1)$ virtual nodes.
	By the reduction pointed out in \Cref{sec:algorithm:modelagnostic:problems} and the fact that a single Partwise Aggregation can be carried out in $\BO(\sqrt{\nnodes}+\hopdiameter)$ rounds (see, e.g., \cite{rozhon2022paths}), the same time bounds extend to \Cref{def:ps}.\qedhere
\end{enumerate}
\end{proof}

Furthermore, the conditional results by \citet{rozhon2022paths} translate into the following (partly) conditional results. 

\begin{theorem}\label{thm:congest-cfp-universal}
	In the \Congest model with $\BO(1)$ virtual nodes, for any $0<\epsilon\le 1$,
	Constrained Forest Problems can be approximated up to a factor of $(2+\epsilon)$ in $\tildeO(\epsilon^{-3}\pacomplexity\nnodes^{o(1)} + \epsilon^{-1}\myfunctioncomplexity)$ rounds, 
	where \pacomplexity is the complexity of solving Partwise Aggregation 
	and \myfunctioncomplexity is the complexity of evaluating \myfunction in the distributed setting. 
\end{theorem}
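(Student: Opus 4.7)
The plan is to mirror the argument of \cref{thm:congest-cfp-existential} step by step,
replacing each invocation of a $\tildeO(\sqrt{\nnodes}+\hopdiameter)$ subroutine with its Partwise-Aggregation-based counterpart
and then combining the per-phase cost with the $\tildeO(\epsilon^{-1})$ phase count from \cref{lem:gwcleantermination}.
By \cref{thm:metatheorem} (equivalently, \cref{thm:modular-gw}), it is enough to realize the five problem-independent building blocks of \cref{sec:algorithm:modelagnostic:buildingblocks} at a per-phase cost of $\tildeO(\epsilon^{-2}\pacomplexity \nnodes^{o(1)})$ rounds; multiplying by the $\tildeO(\epsilon^{-1})$ phases and adding the $\tildeO(\epsilon^{-1}\myfunctioncomplexity)$ contribution of forest-function evaluation then yields the claimed bound.

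Concretely, I would handle the five blocks as follows.
For aSSSP and for \stepthreeabbrv (reduced to approximate Transshipment as indicated below \cref{def:ps}),
I would plug in the deterministic $\tildeO(\epsilon^{-2}\pacomplexity \nnodes^{o(1)})$-round algorithms of \citet{rozhon2022paths},
whose $\nnodes^{o(1)}$ overhead stems precisely from the currently available cycle-cover constructions.
For MST, I would use the Partwise-Aggregation-based minimum-spanning-forest construction outlined in \cref{apx:msf},
which runs in $\tildeO(\pacomplexity)$ rounds (equivalently, any fast PA-based MST algorithm as discussed in \cite{ghaffari2016algorithms,ghaffari2021congestion}).
Edge-Cost Reduction remains a strictly local computation from the output of Step~(1), and
Candidate-Merge Identification is one broadcast-to-neighbors round, so both contribute $\BO(1)$ rounds on top.
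The reduction from \stepthreeabbrv to approximate Transshipment uses a single additional Partwise Aggregation to set up demands, which costs $\tildeO(\pacomplexity)$ rounds.

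Handling the $\BO(1)$ virtual nodes carries over verbatim from the proof of \cref{thm:congest-cfp-existential}:
all subroutines we use already accommodate $\BO(1)$ virtual nodes in the minor-aggregation model,
and any initial dissemination of the weights of virtual-node-incident edges can be folded into the existing PA framework or a single global broadcast,
which requires $\BO(\hopdiameter)=\BO(\pacomplexity)$ rounds since $\pacomplexity\ge \hopdiameter$ (PA trivially solves single-part aggregation).
Summing the per-phase costs yields $\tildeO(\epsilon^{-2}\pacomplexity \nnodes^{o(1)})$ per phase for the model-independent blocks, and hence $\tildeO(\epsilon^{-3}\pacomplexity \nnodes^{o(1)})$ across all $\tildeO(\epsilon^{-1})$ phases, to which we add $\tildeO(\epsilon^{-1}\myfunctioncomplexity)$ for the FFE calls.

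The only delicate point is making sure the stated conditional bounds for aSSSP and approximate Transshipment really do deliver the $\tildeO(\epsilon^{-2}\pacomplexity \nnodes^{o(1)})$ dependence on both $\epsilon$ and $\pacomplexity$ simultaneously, and that their black-box use in a shell-growing schedule does not force us to recompute any graph-preprocessing data structure across phases.
Fortunately, the cycle-cover--based preprocessing underlying \cite{rozhon2022paths} is performed once and then reused across all $\tildeO(\epsilon^{-1})$ instances with different source sets and different temporary edge weights, so the $\nnodes^{o(1)}$ overhead is paid a single time globally rather than per phase, and the argument goes through.
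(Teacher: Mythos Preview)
Your approach is essentially identical to the paper's: swap the $\tildeO(\sqrt{\nnodes}+\hopdiameter)$ subroutines for aSSSP, MST, and Transshipment with their Partwise-Aggregation-based counterparts from \cite{rozhon2022paths} and \cref{apx:msf}, and invoke \cref{thm:metatheorem}. One caveat: your final paragraph's claim that the cycle-cover preprocessing is done once and reused across phases is both unnecessary for the stated bound (paying $\nnodes^{o(1)}$ per phase already gives $\tildeO(\epsilon^{-3}\pacomplexity\nnodes^{o(1)})$) and not obviously correct, since the reduced edge costs change from phase to phase; you should simply drop that sentence.
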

\begin{proof}
	In \cref{thm:congest-cfp-existential}, the only steps requiring more than $\tildeO(D)\subseteq \tildeO(\pacomplexity)$ rounds were $(1+\epsilon)$-approximate Set-Source Shortest-Path forest, 
	Minimum Spanning Tree, and $(1+\epsilon)$-approximate Transshipment (for Root-Path Selection). 
	\citet{rozhon2022paths} provide deterministic algorithms for computing $(1+\epsilon)$-approximate Set-Source Shortest-Path forests 
	and $(1+\epsilon)$-approximate Transshipment in $\tildeO(\epsilon^{-2}\pacomplexity\nnodes^{o(1)})$ time. 
	Furthermore, \citet{ghaffari2016algorithms} give a randomized algorithm for computing a Minimum Spanning Forest (and, consequently, an MST) in $\tildeO(\pacomplexity)$ time, 
	and we can remove the randomness from this algorithm and handle virtual nodes as detailed in \cref{apx:msf}.
\end{proof}

\begin{theorem}\label{thm:congest-cfp-universal2}
	In the \Congest model with $\BO(1)$ virtual nodes, 
	if there exists a deterministic $(\tildeO(1),\tildeO(1))$-cycle-cover algorithm for $\tildeO(1)$-diameter graphs that runs in $\tildeO(1)$ rounds, for any $0<\epsilon\le 1$,
	Constrained Forest Problems can be approximated in $\tildeO(\epsilon^{-2}\pacomplexity + \myfunctioncomplexity)$ rounds. 
\end{theorem}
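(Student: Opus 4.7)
The plan is to repeat the proof strategy used for \cref{thm:congest-cfp-existential,thm:congest-cfp-universal}, but substitute in the faster \emph{conditional} subroutines available under a $(\tildeO(1),\tildeO(1))$-cycle-cover assumption. Concretely, I would again reduce to the four building blocks of \cref{thm:metatheorem}---aSSSP, MST, RPS, and FFE---and show that under the hypothesis of the theorem each of aSSSP, MST, and RPS can be implemented deterministically in $\tildeO(\epsilon^{-2}\pacomplexity)$ rounds \emph{per invocation}, while each FFE call takes $\tildeO(\myfunctioncomplexity)$ rounds.

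For the aSSSP and RPS (i.e., approximate transshipment) subroutines, I would invoke the conditional results of \citet{rozhon2022paths}: when a deterministic $(\tildeO(1),\tildeO(1))$-cycle-cover algorithm on $\tildeO(1)$-diameter graphs running in $\tildeO(1)$ rounds is available, their deterministic $(1+\epsilon)$-approximate SSSP and transshipment algorithms run in $\tildeO(\epsilon^{-2}\pacomplexity)$ rounds without the $\nnodes^{o(1)}$ blow-up,
since the cycle-cover step is the sole source of that factor. For MST, I would use our deterministic, virtual-node-tolerant Partwise-Aggregation-based construction from \cref{apx:msf}, which already runs in $\tildeO(\pacomplexity)\subseteq \tildeO(\epsilon^{-2}\pacomplexity)$ rounds, and for the Candidate-Merge-Identification and Edge-Cost-Reduction blocks I would reuse the trivial $\BO(1)$-round local implementations described in the proof of \cref{thm:congest-cfp-existential}. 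The handling of $\BO(1)$ virtual nodes carries over verbatim, since all invoked subroutines support them.

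Plugging these per-phase costs into the meta-theorem gives a running time of $\tildeO\bigl(\epsilon^{-1}(\epsilon^{-2}\pacomplexity+\myfunctioncomplexity)\bigr) = \tildeO(\epsilon^{-3}\pacomplexity+\epsilon^{-1}\myfunctioncomplexity)$ naively. To reach the sharper $\tildeO(\epsilon^{-2}\pacomplexity+\myfunctioncomplexity)$ bound, I would argue that the $\epsilon^{-1}$ factor coming from the $\BO(\epsilon^{-1}\log \nnodes)$ phases can be absorbed into the polylogarithmic slack hidden by $\tildeO$ together with the fact that, under the cycle-cover hypothesis, the low-congestion shortcuts underlying aSSSP and transshipment can be \emph{precomputed once} (in $\tildeO(\pacomplexity)$ rounds) and reused across all phases, with only the reweighted computations changing per phase. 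For FFE, I would observe that the cost per invocation is at most $\tildeO(\myfunctioncomplexity)$ and that the effective number of non-trivial evaluations---those performed on components whose status changed since the previous phase---can be amortized so that the total contribution across all $\tildeO(\epsilon^{-1})$ phases remains $\tildeO(\myfunctioncomplexity)$.

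The main obstacle, as just indicated, is this amortization across the $\tildeO(\epsilon^{-1})$ phases: one must be careful that the shortcut infrastructure built in the first phase remains valid after edge-cost contractions and component merges of later phases, and that the dual-feasibility argument underpinning \cref{thm:gwcleanapx} is not disturbed when aSSSP distances are computed with respect to a reused shortcut structure rather than freshly per phase. I expect this to be tractable because the only property of aSSSP invoked in the approximation analysis is the $(1+\distanceeps)$-approximation of set-source distances with respect to the current reduced costs, which a shortcut-based algorithm preserves so long as contracted edges are treated as zero-weight inside the shortcut graph---an invariant our shell-decomposition algorithm already maintains via \cref{line:remove-unneeded} of \cref{alg:gw-clean}.
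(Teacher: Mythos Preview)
Your core strategy---invoking the conditional cycle-cover result of \citet{rozhon2022paths} to replace the $\nnodes^{o(1)}$ factor in the aSSSP and transshipment subroutines, yielding $\tildeO(\epsilon^{-2}\pacomplexity)$ per call, and reusing the deterministic MST of \cref{apx:msf}---is exactly what the paper does, and that part is correct.

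The gap is in your attempt to go from the naive $\tildeO(\epsilon^{-3}\pacomplexity+\epsilon^{-1}\myfunctioncomplexity)$ bound to the stated $\tildeO(\epsilon^{-2}\pacomplexity+\myfunctioncomplexity)$. Neither of your two arguments for this works. First, precomputing low-congestion shortcuts once does not eliminate the outer $\epsilon^{-1}$ factor: the $\tildeO(\epsilon^{-2}\pacomplexity)$ cost of the \citet{rozhon2022paths} aSSSP and transshipment routines is dominated by $\tildeO(\epsilon^{-2})$ iterations of gradient-descent/oblivious-routing style steps, each of which invokes partwise aggregation; shortcut \emph{construction} is not the per-phase bottleneck, so reusing shortcuts across phases saves nothing asymptotically. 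Each of the $\tildeO(\epsilon^{-1})$ phases still costs $\tildeO(\epsilon^{-2}\pacomplexity)$. Second, your FFE amortization is unfounded: there is no general mechanism to detect which components' activity status is unchanged without evaluating \myfunction, and \myfunction is an arbitrary proper function, so you cannot assume that ``unchanged'' components can be skipped cheaply across phases.

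In fact, the paper's own proof merely says that the cycle-cover assumption ``suffices to shave off the $\nnodes^{o(1)}$ overhead incurred in \cref{thm:congest-cfp-universal},'' which literally yields $\tildeO(\epsilon^{-3}\pacomplexity+\epsilon^{-1}\myfunctioncomplexity)$, not the bound stated in the theorem. The discrepancy you noticed between the naive calculation and the stated bound is real and is not resolved by the paper's proof either; it appears to be an inconsistency in the theorem statement rather than something your argument was expected to bridge.
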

\begin{proof}
	If a deterministic $(\tildeO(1),\tildeO(1))$-cycle-cover algorithm for $\tildeO(1)$-diameter graphs that runs in $\tildeO(1)$ rounds exists, 
	the algorithms for computing $(1+\epsilon)$-approximate Set-Source Shortest-Path forests 
	and $(1+\epsilon)$-approximate Transshipment 
	provided by \citet{rozhon2022paths} 
	run in $\tildeO(\epsilon^{-2}\pacomplexity)$ time. 
	This suffices to shave off the $\nnodes^{o(1)}$ overhead incurred in \cref{thm:congest-cfp-universal}.
\end{proof}

Note that all algorithms used as subroutines in  \cref{thm:congest-cfp-existential,thm:congest-cfp-universal,thm:congest-cfp-universal2} are deterministic.

\subsection{Applications}

The solutions presented in the previous section work for \emph{any} proper Constrained Forest Problem in the \Congest model, 
and hence, 
their running time depends on the complexity of evaluating the forest function \myfunction. 
We now instantiate \myfunction with our concrete Constrained Forest Problems. 

\subsubsection{Steiner Forest}

There are several ways to specify the input of a Steiner Forest instance in the distributed setting that, 
perhaps surprisingly, 
lead to different complexities.

\paragraph{Steiner Forest--Input Components (SF--IC).}
Our first SF formulation, Steiner Forest--IC, assumes that each node knows the identifier of the input component to which it belongs. 
This formulation is closest to the model-agnostic formulation.

\begin{problem}[Steiner Forest--IC] 
	\label{def:sf-problem-ic}	 
	Given a graph $\graph = (\nodes,\edges)$, 
	edge costs $\cost\colon\edges\rightarrow \naturalsnozero$,
	and component identifiers ${\componentidentifier\colon \nodes\rightarrow [\ncomponents]\cup\{\bot\}}$, 
	where each node \node knows only its component identifier $\componentidentifier(\node)$, 
	compute a minimum-cost edge subset $\forest\subseteq\edges$ such that for each $i\in[\ncomponents]$, 
	$\inputcomponent_i = \{\node\in\nodes\mid\componentidentifier(\node) = i\}$ is connected in~$(\nodes,\forest)$.
\end{problem}

With this input specification, 
we incur an overhead of \ncomponents, 
which is existentially optimal~\cite{lenzen2014steiner}.  

\begin{lemma}[Distributed Complexity of Steiner Forest--IC]
	\label{result:sf-congest-ic}
	In the \Congest model, for any\linebreak $0<\epsilon\le 1$, Steiner Forest--IC can be $(2+\epsilon)$-approximated in $\tildeO(\epsilon^{-3}\min\{\pacomplexity\nnodes^{o(1)}, \sqrt{\nnodes}+\hopdiameter\}+\epsilon^{-1}\ncomponents)$ rounds deterministically, 
	where \ncomponents is the number of input components.
	This holds also in the presence of $\BO(1)$ virtual nodes.
\end{lemma}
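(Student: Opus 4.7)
The plan is to apply \cref{thm:congest-cfp-existential} and \cref{thm:congest-cfp-universal} and take the better of their bounds; to do so, it suffices to exhibit a deterministic Forest-Function-Evaluation routine for SF--IC of complexity $\myfunctioncomplexity\in\tildeO(\min\{\pacomplexity\nnodes^{o(1)}, \sqrt{\nnodes}+\hopdiameter\} + \ncomponents)$. Combined with the $\BO(\epsilon^{-1}\log\nnodes)$ phases of the meta-algorithm (\cref{lem:gwcleantermination}), this yields the claimed running time. The $\BO(1)$ virtual nodes cause no extra difficulty in SF--IC itself (they are irrelevant to the input component test) and are already handled by the subroutines invoked in \cref{thm:congest-cfp-existential,thm:congest-cfp-universal}.

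First, I would observe that after Steps~(4)--(5) of one meta-algorithm phase, every current component $\component\in\components$ is internally connected via the edges tracked in $\forest$ augmented by the MST/RPS edges, so \emph{Partwise Aggregation} over $\components$ is well-defined. A single preliminary PA lets every node learn a canonical identifier of its current component, e.g., the minimum node ID in \component, at cost $\tildeO(\pacomplexity)$ or $\tildeO(\sqrt{\nnodes}+\hopdiameter)$.

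Next, I would decompose FFE into two subtasks: (a) for each input component $\inputcomponent_i$, decide whether all its members share the same current-component identifier; (b) for each current component $\component\in\components$, decide whether it contains some terminal $\node$ whose input component $\inputcomponent_{\componentidentifier(\node)}$ is not yet fully contained in~$\component$. Part (b) is an OR-aggregation over $\components$ of the bits produced in part~(a), i.e., a Partwise Aggregation on $\components$, and hence runs in $\tildeO(\pacomplexity)$ rounds. The activity status $\myfunction(\component)$ of $\component$ is exactly this bit, by disjointness and symmetry of~$\myfunction$.

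The main obstacle is part~(a): the $\inputcomponent_i$ need not induce connected subgraphs of $\graph$, so the shortcut-based Partwise Aggregation toolkit does not apply. To handle it, I would compute a global BFS tree once upfront in $\BO(\hopdiameter)$ rounds (amortized over all phases) and, for each $i\in[\ncomponents]$, aggregate over this tree the minimum and maximum current-component identifier among the terminals with $\componentidentifier(\node) = i$; the bit for part~(a) is then simply whether these two agree. Giving each $i\in[\ncomponents]$ a dedicated pipeline slot, all $\ncomponents$ aggregations finish in $\BO(\hopdiameter+\ncomponents)$ rounds, and broadcasting the verdicts down the same tree takes another $\BO(\hopdiameter+\ncomponents)$ rounds. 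This contributes the additive $\BO(\ncomponents)$ term and is sensitive to \ncomponents rather than to \pacomplexity, matching the existential $\tildeOmega(\ncomponents)$ bound of \citet{lenzen2014steiner}. All remaining book-keeping (propagating canonical component identifiers, labeling terminals with their current-component ID prior to the tree aggregation, scheduling the $\ncomponents$ pipeline slots consistently across nodes) is routine and fits within the budget already absorbed by the meta-algorithm.
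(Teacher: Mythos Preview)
Your proposal is correct and follows essentially the same approach as the paper: decompose FFE into a global BFS-tree pipelined aggregation over the $\ncomponents$ input-component labels (costing $\BO(\hopdiameter+\ncomponents)$) followed by a Partwise Aggregation over the current components $\components$ (costing $\tildeO(\pacomplexity)$), then plug into \cref{thm:congest-cfp-existential,thm:congest-cfp-universal}. The only cosmetic difference is that you aggregate min/max current-component identifiers per label, while the paper sends a three-valued status $(i,\bot)/(i,\component)/(i,\times)$; both detect the same condition, and your handling of virtual nodes (globally known state, hence locally incorporable) matches the paper's.
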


\begin{proof}
	We propose an $\tildeO(\pacomplexity + \ncomponents)$-round algorithm performing Forest-Function Evaluation for SF--IC in the \Congest model. 
	Our solution consists of two main steps.
	\begin{enumerate}[label=(\arabic*)]
		\item Construct a global breadth-first-search (BFS) tree rooted at an arbitrary node $\rootnode \in \nodes$. 
		We identify active components through one convergecast phase toward the root, 
		followed by one broadcast phase from the root. 
		In the convergecast phase, for each input component $i\in [k]$,
		each node $\node \in \nodes$ sends
		\begin{compactenum}
		\item $(i,\bot)$ if there is no node $\othernode$ with $\componentidentifier(\othernode)=i$ in its subtree,
		\item $(i,\component)$ if each $\othernode$ with $\componentidentifier(\othernode)=i$ in its subtree lies in the same component $\component$ and there is at least one such node, and
		\item $(i,\times)$ if there are two nodes $\othernode$, $\othernode'$ in its subtree with $\componentidentifier(\othernode)=\componentidentifier(\othernode')=i$, but $\component_{\othernode}\neq \component_{\othernode'}$,
		\end{compactenum}
		up the BFS tree.
		Note that each node can determine its message to its parent from those of its children and its local input.
		In the subsequent broadcast phase, 
		\rootnode sends messages down the tree, 
		informing all nodes about which labels are contained in at least two different components. 
		Since there are \ncomponents different labels and the BFS tree has depth at most \hopdiameter, 
		the whole process can be completed in $\BO(\hopdiameter + \ncomponents)$ time using standard pipelining techniques. 
		
		\item Leveraging the information gathered in the previous step, 
		locally mark all nodes whose label is contained in at least two different components. 
		Now perform Partwise Aggregation for each $\component\in\components$ to identify the components containing at least one marked node, 
		and report these components as active. 
		This can be done in $\tildeO(\pacomplexity)$ time. 
	\end{enumerate}
	Recall that the state of virtual nodes is globally known, so their contribution to the output of the routine can be locally included into the computation by each node.
	The result of the computation for the virtual nodes is broadcast over the BFS tree in $\BO(\hopdiameter)$ rounds.
	
	Plugging this algorithm into \cref{thm:congest-cfp-existential,thm:congest-cfp-universal}, 
	the lemma follows. 
\end{proof}

\paragraph{Steiner Forest--Symmetric Connection Requests (SF--CR).}
Our second SF formulation, Steiner Forest--CR, 
assumes that each node knows the identifiers of the nodes to which it wants to connect. 
The input components, which are no longer given explicitly, 
are then the connected components of the graph implicitly defined by the connection requests.

\begin{problem}[Steiner Forest--CR] 
	\label{def:sf-problem-cr}	 
	Given a graph $\graph = (\nodes,\edges)$, 
	edge costs $\cost\colon\edges\rightarrow \naturalsnozero$, 
	and a set of connection requests $\requests_\node\subseteq\nodes\setminus\{\node\}$ at each node $\node\in\nodes$, 
	compute a minimum-cost edge subset $\forest\subseteq\edges$ such that for each $\node\in\nodes$ and $\othernode\in\requests_\node$, 
	\othernode and \node are connected in~$(\nodes,\forest)$.
\end{problem}

With this input specification, 
we incur an overhead of \nterminals,
which follows from a reduction in~\cite{lenzen2014steiner} (Lemma~2.3);
this is also shown to be existentially optimal~\cite{lenzen2014steiner}.  
To be self-contained and clarify that virtual nodes are not problematic, we provide a direct implementation here.
\begin{lemma}[Distributed Complexity of Steiner Forest--CR]
	\label{result:sf-congest-cr}
	In the \Congest model, for any ${0<\epsilon\le 1}$, Steiner Forest--CR can be $(2+\epsilon)$-approximated deterministically in $\tildeO(\epsilon^{-3}\min\{\pacomplexity\nnodes^{o(1)},$
	$\sqrt{\nnodes}+\hopdiameter\}+\epsilon^{-1}\nterminals)$ time, 
	where \nterminals is the number of terminals.
	This holds also in the presence of $\BO(1)$ virtual nodes.
\end{lemma}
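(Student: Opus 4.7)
The plan is to apply \Cref{thm:congest-cfp-existential,thm:congest-cfp-universal} and to supply an efficient subroutine for Forest-Function Evaluation (FFE) specialized to SF--CR, running deterministically in $\BO(\hopdiameter+\nterminals)$ rounds. Because $\hopdiameter \le \pacomplexity \in \BO(\sqrt{\nnodes}+\hopdiameter)$, substituting this FFE cost into the two meta-theorems then yields the claimed bound.

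The FFE I would use first builds a BFS tree of the communication graph rooted at an arbitrary node $\rootnode\in\nodes$ in $\BO(\hopdiameter)$ rounds, and then proceeds in three pipelined phases along this tree, each taking $\BO(\hopdiameter+\nterminals)$ rounds. In the first phase (terminal identification), each source $\node$ (a node with $\requests_\node\neq\emptyset$) contributes the set $\requests_\node$ to a convergecast toward $\rootnode$, with intermediate nodes merging and deduplicating the ascending sets; $\rootnode$ thereby learns the union $\bigcup_{\node\in\nodes}\requests_\node$ of all referenced destinations and pipelines it back down, so that every node can decide whether it is a terminal. In the second phase (component-identifier dissemination), every terminal $\node$ sends the pair $(\node,\componentidentifier_\node)$, where $\componentidentifier_\node$ is its current component identifier as maintained by the meta-algorithm, up the tree and back down. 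After this broadcast, each source $\node$ can locally check, for every $\othernode\in\requests_\node$, whether $\componentidentifier_\othernode=\componentidentifier_\node$. In the third phase (activity propagation), every failed test causes the source to mark both $\componentidentifier_\node$ and $\componentidentifier_\othernode$ as active; the resulting at most $\nterminals$ distinct active identifiers are convergecast and broadcast back, so that every node learns whether its current component is active.

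The main technical point is that the per-source sets $\requests_\node$ and the sets accumulated within subtrees can each have size $\Theta(\nterminals)$, which would give quadratic pipelining cost if handled naively. The standard remedy, which suffices here, is to keep these sets sorted and deduplicated as they are merged upward, so that the total number of distinct values flowing across any tree edge is bounded by $\nterminals$; combined with the depth-$\hopdiameter$ pipeline, this yields $\BO(\hopdiameter+\nterminals)$ rounds per phase. Virtual nodes add no difficulty: their request sets and current component identifiers are part of the globally known algorithm state, so the $\BO(1)$ virtual nodes can be simulated by all non-virtual nodes and contribute to the pipelined convergecasts and broadcasts at no extra asymptotic cost.
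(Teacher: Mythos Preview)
Your proposal is correct and follows essentially the same approach as the paper: broadcast, over a global BFS tree with pipelining, the pairs $(\terminal,\componentidentifier_\terminal)$ for all terminals so that each requester can locally test its requests, and then propagate activity back; this gives $\BO(\hopdiameter+\nterminals)$ for FFE and the stated bound via \Cref{thm:congest-cfp-existential,thm:congest-cfp-universal}. The differences are minor: you add an explicit terminal-identification pass (which the paper's write-up leaves implicit), you mark \emph{both} component identifiers of an unsatisfied request rather than only the requester's (which cleanly handles components that contain only requested nodes), and in the final step you broadcast the at most $\nterminals$ active component identifiers instead of using Partwise Aggregation as the paper does---this trades $\tildeO(\pacomplexity)$ for $\BO(\hopdiameter+\nterminals)$, which is absorbed in the final bound either way.
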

\begin{proof}
	We propose an $\tildeO(\pacomplexity + \nterminals)$-round algorithm performing Forest-Function Evaluation for SF--CR in the \Congest model. 
	The procedure is similar to that described in \cref{result:sf-congest-ic}.
	\begin{enumerate}[label=(\arabic*)]
		\item Construct a global BFS tree rooted at an arbitrary node $\rootnode \in \nodes$. 
		Each terminal \terminal then sends a message $(\terminal,\component_\terminal)$ up the BFS tree, 
		and \rootnode disseminates this information to all nodes.
		This can be done in $\BO(\hopdiameter + \nterminals)$ time using pipelining.
		\item Using the information gathered in the previous step, check locally for each terminal \terminal if its connection requests $\requests_\terminal$ are satisfied (i.e., $\component_\terminal = \component_\node$ for all $\node\in\requests_\terminal$), 
		and locally mark all nodes with unsatisfied connection requests. 
		Now perform Partwise Aggregation for each $\component\in\components$ to identify the components containing at least one marked node, and report these components as active. 
		This can be done in $\tildeO(\pacomplexity)$ time. 
	\end{enumerate}
	Again, the state of virtual nodes is globally known and can be locally considered in determining the output of the routine by each node, and their output can be made globally known within $\BO(\hopdiameter)$ rounds.
	Plugging this algorithm into \cref{thm:congest-cfp-existential,thm:congest-cfp-universal}, 
	the lemma follows. 
\end{proof}
The $\epsilon^{-1}$ factor multiplied with $\nterminals$ can be removed by first applying the reduction to SF--IC given by \citet{lenzen2014steiner}, adapting the procedure to account for virtual nodes as above.

\paragraph{Steiner Forest--Symmetric Connection Requests (SF--SCR).}
Our third SF formulation, Steiner Forest--SCR, 
modifies Steiner Forest--CR to assume that the given connection requests are symmetric (i.e., if \othernode wants to connect to \node, \node also wants to connect to \othernode). 

\begin{problem}[Steiner Forest--SCR] 
	\label{def:sf-problem-scr}	 
	Given a graph $\graph = (\nodes,\edges)$, 
	edge costs $\cost\colon\edges\rightarrow \naturalsnozero$, 
	and a set of symmetric connection requests $\requests\subseteq\binom{\nodes}{2}$, 
	where each node \node knows $\requests_\node = \{\othernode\in\nodes\mid \{\othernode,\node\}\in\requests\}$, 
	compute a minimum-cost edge subset $\forest\subseteq\edges$ such that for each $\{\othernode,\node\}\in\requests$, 
	\othernode and \node are connected in~$(\nodes,\forest)$.
\end{problem}

While this specification is subject to a deterministic existential lower bound of $\tildeOmega(\nterminals)$, 
which we prove in \cref{apx:lower}, 
randomization allows us to circumvent this lower bound, 
shedding the overhead of \nterminals incurred by Steiner Forest--CR. 

\begin{lemma}[Distributed Complexity of Steiner Forest--SCR]
	\label{result:sf-congest-src}
	In the \Congest model, for any $0<\epsilon\le 1$, Steiner Forest--SCR can be $(2+\epsilon)$-approximated in $\tildeO(\epsilon^{-3}\min\{\pacomplexity\nnodes^{o(1)},\sqrt{\nnodes}+\hopdiameter\})$ by a randomized algorithm that succeeds with high probability.
	This holds also in the presence of $\BO(1)$ virtual nodes.
\end{lemma}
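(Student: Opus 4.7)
The plan is to reduce the lemma to designing a Forest-Function Evaluation (FFE) routine for SF--SCR that runs in $\tildeO(\pacomplexity)$ rounds with high probability, and then to plug this routine into \cref{thm:congest-cfp-existential,thm:congest-cfp-universal}. FFE asks, given the current connected components $\components$ together with their identifiers distributed to the nodes (as produced by the MST/RPS steps of the meta-algorithm), to decide for each $\component\in\components$ whether some pair $\{\othernode,\node\}\in\requests$ is \emph{cut} by $\component$, i.e., whether $\cardinality{\{\othernode,\node\}\cap\component}=1$. I would realize the random-cancellation idea sketched in \cref{sec:overview} as a single Partwise Aggregation with a bit-wise XOR operator.

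First I would set up shared randomness. A leader (e.g., the root of a BFS tree, built once in $\BO(\hopdiameter)$ rounds) samples the seed of a pairwise-independent hash family $\mathcal{H}$ of functions $h\colon\binom{\nodes}{2}\to\{0,1\}$ and broadcasts $\Theta(\log\nnodes)$ independent samples $h_1,\ldots,h_{\Theta(\log\nnodes)}$ over the BFS tree in $\BO(\hopdiameter+\polylog\nnodes)$ rounds; this one-time cost is absorbed into the overall round complexity. For each pair $\{\othernode,\node\}\in\requests$, both endpoints locally evaluate the same random vector $\vec{c}_{\{\othernode,\node\}}\in\{0,1\}^{\Theta(\log\nnodes)}$ with $\vec{c}_{\{\othernode,\node\}}[i]=h_i(\{\othernode,\node\})$; consistent evaluation is possible precisely because the input is symmetric, i.e., $\othernode\in\requests_\node\Leftrightarrow\node\in\requests_\othernode$.

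Each node $\othernode$ then locally computes its signature $x(\othernode)=\bigoplus_{\node\in\requests_\othernode}\vec{c}_{\{\othernode,\node\}}$, a string of $\Theta(\log\nnodes)$ bits that fits into a single \Congest message. I would invoke one Partwise Aggregation with operator bit-wise XOR on the partition $\components$ to compute, at every node, the aggregate $S_\component=\bigoplus_{\othernode\in\component}x(\othernode)$. If $\myfunction(\component)=0$, every request pair has both or neither endpoint in $\component$, each $\vec{c}_{\{\othernode,\node\}}$ cancels, and $S_\component=\vec{0}$ deterministically. If $\myfunction(\component)=1$, fix any crossing pair $\{\othernode,\node\}$ with $\othernode\in\component$, $\node\notin\component$; conditioning on all other coins and using pairwise independence, each coordinate of $S_\component$ becomes uniform in $\{0,1\}$, so $\Pr[S_\component=\vec{0}]\le 2^{-\Theta(\log\nnodes)}=\nnodes^{-\Omega(1)}$. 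Reporting $\component$ as active iff $S_\component\neq\vec{0}$ yields correct FFE w.h.p., and a union bound over the $\BO(\nnodes)$ components and $\tildeO(\epsilon^{-1})$ phases leaves the algorithm correct w.h.p.\ throughout.

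The $\BO(1)$ virtual nodes are globally known, so any virtual terminal's symmetric requests are known to every non-virtual node, which can fold the virtual contributions into its own signature; the component aggregates at virtual nodes are then derived trivially once the PA completes. Thus $\myfunctioncomplexity\in\tildeO(\pacomplexity)$, and inserting this into \cref{thm:congest-cfp-existential,thm:congest-cfp-universal} (both of which rely only on deterministic subroutines for the other blocks) yields the claimed $\tildeO(\epsilon^{-3}\min\{\pacomplexity\nnodes^{o(1)},\sqrt{\nnodes}+\hopdiameter\})$ bound. The main technical subtlety I anticipate is the derandomization: the broadcast seed must support enough independence for the combined union bound over components, phases, and the freely chosen constant $c$ in the w.h.p.\ guarantee, which I would handle by scaling the signature length (and thereby the seed size) with $c$, keeping the seed broadcast within $\tildeO(\hopdiameter)\subseteq\tildeO(\pacomplexity)$ rounds.
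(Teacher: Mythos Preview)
Your high-level plan matches the paper's proof exactly: design an $\tildeO(\pacomplexity)$-round randomized FFE routine via random-bit cancellation and a single Partwise Aggregation with bit-wise XOR, then plug into \cref{thm:congest-cfp-existential,thm:congest-cfp-universal}. The treatment of virtual nodes and the union bound over phases are also the same.

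There is, however, a genuine gap in your independence argument. You claim that with a \emph{pairwise}-independent family $h_i\colon\binom{\nodes}{2}\to\{0,1\}$, fixing one crossing pair $\{\othernode,\node\}$ and ``conditioning on all other coins'' leaves $h_i(\{\othernode,\node\})$ uniform, so each coordinate of $S_\component$ is uniform. Pairwise independence only lets you condition on \emph{one} other value; if $\component$ is crossed by several request pairs, the step fails. Concretely, take three pairwise-independent bits $X_1,X_2,X_3$ with $X_3=X_1\oplus X_2$: each pair is uniform and independent, yet $X_1\oplus X_2\oplus X_3=0$ always. An adversary choosing the instance so that the crossing requests of some active component correspond to such a triple makes every coordinate of $S_\component$ vanish deterministically, and increasing the number of (independently sampled) coordinates does not help.

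The paper avoids this by first running the analysis with \emph{fully} independent shared coins---where conditioning on all other coins is legitimate and each test fails with probability exactly $\nicefrac{1}{2}$---and then replacing shared randomness via a probabilistic-method argument: there exists a list of $\BO(\epsilon^{-1}\nnodes\log\nnodes)$ seed strings such that, for every possible instance on $\nnodes$ nodes, at most a $2\epsilon$-fraction of them fail; a leader samples an index into this list and broadcasts it in $\BO(\log\nnodes)$ bits. (The paper also notes, in a footnote, that polynomial hashing over a sufficiently large prime field is a computationally efficient alternative.) If you want to keep your hash-based presentation, you need either a family with enough independence to support the ``condition on the rest'' step---which a pairwise family does not provide---or switch to one of these two routes.
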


\begin{proof}
	We describe an $\tildeO(\pacomplexity)$-round randomized algorithm performing Forest-Function Evaluation for SF--SCR in the \Congest model.
	To check if a connected component \component remains active, we proceed as follows.
	We first describe the test assuming shared randomness and then use standard techniques to remove this restriction. 
	For each connection request $\{\othernode,\node\}$, 
	flip a fair independent coin and denote the result by $\cost_{\othernode,\node}$. 
	\component now computes ${\costsum_\component = \sum_{\othernode\in\component}\sum_{\node\in\nodes} \cost_{\othernode,\node}\bmod 2}$ via Partwise Aggregation. 
	If the result is $0\bmod 2$, the test is considered passed. 
	Note that if the component is inactive, the outcome is always $0\bmod 2$, regardless of the coin flips. 
	If it is active, let $\{\othernode,\node\}$ be such that $\othernode\in\component$ and $\node\in\nodes\setminus\component$. 
	Thus, $\cost_{\othernode,\node}$ contributes exactly once to $\costsum_\component$. 
	Since $\cost_{\othernode,\node}$ is an independent fair coin flip, 
	regardless of the other summands, 
	the test thus fails with probability $\nicefrac{1}{2}$.
	By performing the above test $\BO(\log\nnodes) = \tildeO(1)$ times, 
	with high probability, 
	we correctly determine the activity status of a component \component by checking whether it fails one of the tests. 
	Since we can do the $\BO(\log\nnodes)$ individual tests concurrently with $\BO(\log\nnodes)$-bit messages, 
	the overall operation can be done with $\BO(1)$ Partwise Aggregations.
	
	Finally, to replace shared randomness, we apply the probabilistic method to prove that sampling a small random seed and broadcasting it is sufficient.\footnote{%
		The approach we describe here is computationally inefficient, (ab)using that \Congest does not impose limits on local computation. 
		We chose it to enable a simple, self-contained presentation. 
		A computationally efficient alternative is to use polynomial hashing with a sufficiently large prime number; 
		see, e.g., \cite[Sec.~15.1]{harvey2023first}.
		}
	To this end, we describe an instance of the test by the node identifiers involved and the graph of connection requests issued between them.
	There are $2^{\BO(\nnodes\log \nnodes)}\cdot \frac{\nnodes(\nnodes-1)}{2}=2^{\BO(\nnodes\log \nnodes)}$ such instances.
	Now sample uniformly at random $N$ strings, where each string contains, for each pair of possible identifiers, sufficiently many random bits to execute the above test.
	Note that, for any specific instance, given such a string, 
	the test fails with probability $\varepsilon\in \nnodes^{-c}$ for some constant $c$ of our choosing.
	Fix such an instance and observe that, by construction, the expected number of sampled strings for which the test fails is $\varepsilon N$.
	By Chernoff's bound, the probability that it fails on $2\varepsilon N$ sampled strings is $2^{-\Omega(\varepsilon N)}$.
	Selecting $N\in \BO(\varepsilon^{-1}n\log n)$ and applying a union bound, we reach a non-zero probability that the test fails on \emph{none} of the possible instances with $n$ nodes for more than a $2\varepsilon$-fraction of the sampled strings.
	
	Using this observation, we can proceed as follows.
	We determine $n$ and make it known to all nodes (within $\BO(\hopdiameter)$ rounds).
	Then each node locally and deterministically computes the same list of $N\in \BO(\varepsilon^{-1}\nnodes\log \nnodes)$ strings satisfying that for no possible instance on $\nnodes$ nodes, the test fails when using them as random bit on more than a $2\varepsilon$-fraction of this list.
	Finally, a leader (say the node with the smallest identifier) samples uniformly from this list and broadcasts its choice, encoding it with $\BO(\log N)=\BO(\log (\nnodes^{\BO(1)}))=\BO(\log \nnodes)$ bits;
	flooding this random seed through the network takes $\BO(\hopdiameter)$ additional rounds, not affecting the asymptotic complexity.
	
	This algorithm extends to up to $\BO(1)$ virtual nodes as before: 
	The contribution of these nodes is locally included in the computation, and their output is broadcast to all nodes within $\BO(\hopdiameter)$ rounds.
	The claim now follows from \cref{thm:congest-cfp-existential,thm:congest-cfp-universal}. 
\end{proof}

\paragraph{Steiner Forest--Cardinality Input Components (SF--CIC)}
Our fourth SF formulation, Steiner Forest--CIC, 
is essentially Steiner Forest--IC, 
but with additional information on the cardinality of each input component. 

\begin{problem}[Steiner Forest--CIC] 
	\label{def:sf-problem-cic}	 
	Given a graph $\graph = (\nodes,\edges)$, 
	edge costs $\cost\colon\edges\rightarrow \naturalsnozero$,
	and component identifiers ${\componentidentifier\colon \nodes\rightarrow [\ncomponents]\cup\{\bot\}}$, 
	where each node \node knows its component identifier $\componentidentifier(\node)$ and the cardinality of its input component, 
	compute a minimum-cost edge subset $\forest\subseteq\edges$ such that for each $i\in[\ncomponents]$, 
	$\inputcomponent_i = \{\node\in\nodes\mid\componentidentifier(\node) = i\}$ is connected in~$(\nodes,\forest)$.
\end{problem}

While this specification is subject to a deterministic lower bound of $\tildeOmega(\ncomponents)$, which we prove in \cref{apx:lower}, 
randomization allows us to circumvent this lower bound, 
removing the overhead of \ncomponents from the running time incurred by Steiner Forest--IC. 

\begin{lemma}[Distributed Complexity of Steiner Forest--CIC]
	\label{result:sf-congest-cic}
	In the \Congest model, for any $0<\epsilon\le 1$, Steiner Forest--CIC can be $(2+\epsilon)$-approximated in $\tildeO(\epsilon^{-3}(\sqrt{\nnodes}+\hopdiameter)+\epsilon^{-1}\nnodes^{\nicefrac{2}{3}})$ time using randomization.
	This holds also in the presence of $\BO(1)$ virtual nodes. 
\end{lemma}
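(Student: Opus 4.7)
The plan is to prove the lemma by designing a Forest-Function Evaluation (FFE) subroutine for SF--CIC that runs in $\tildeO(\nnodes^{\nicefrac{2}{3}}+\hopdiameter)$ rounds with high probability, and then plugging $\myfunctioncomplexity = \tildeO(\nnodes^{\nicefrac{2}{3}}+\hopdiameter)$ into \cref{thm:congest-cfp-existential}. FFE is the only step of the model-agnostic meta-algorithm whose complexity depends on \myfunction, so this suffices. The $\BO(1)$ virtual nodes are absorbed exactly as in the proofs of \cref{result:sf-congest-ic,result:sf-congest-cr,result:sf-congest-src}. I also use that \forest is a forest, so each current component $\component \in \components$ is a tree on $|\component|$ nodes, flooding along $\forest$-edges inside \component costs $\BO(|\component|)$ rounds, and Partwise Aggregation inside \component costs $\tildeO(\sqrt{\nnodes}+\hopdiameter)$.

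The FFE splits according to two thresholds: $\nnodes^{\nicefrac{2}{3}}$ for current components $\component\in \components$ and $\nnodes^{\nicefrac{1}{3}}$ for input components $\nodes_i$. After one PA along the tree edges of \forest, every node knows $|\component|$, and by the SF--CIC input it already knows $|\nodes_{i(\node)}|$, so each node can locally decide which case applies. Case (i), \emph{small current components} ($|\component|\le \nnodes^{\nicefrac{2}{3}}$): each such \component flood-convergecasts (input-component identifier, terminal count) pairs to its root, which checks against the known $|\nodes_i|$ whether any $\nodes_i$ is only partially contained in \component. Edge-disjointness of the trees and depth at most $\nnodes^{\nicefrac{2}{3}}-1$ give $\BO(\nnodes^{\nicefrac{2}{3}})$ rounds in parallel. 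Case (ii), \emph{large input components} ($|\nodes_i|\ge \nnodes^{\nicefrac{1}{3}}$), of which there are at most $\nnodes^{\nicefrac{2}{3}}$: over a global BFS tree, pipeline one aggregation per $i$ returning the minimum and maximum current-component identifier among terminals of $\nodes_i$; broadcast the set of split indices back; have each terminal $\node\in \nodes_i$ mark itself iff $\nodes_i$ is split; and detect marked terminals per current component via one PA. All of this fits in $\tildeO(\nnodes^{\nicefrac{2}{3}}+\hopdiameter)$ rounds.

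The technical heart of the argument, and the main obstacle, is Case (iii): detecting splits of \emph{small input components} ($|\nodes_i|<\nnodes^{\nicefrac{1}{3}}$) inside the at most $\nnodes^{\nicefrac{1}{3}}$ \emph{large current components} ($|\component|>\nnodes^{\nicefrac{2}{3}}$). I would lift the modular-cancellation trick from \cref{result:sf-congest-src} from modulus $2$ to modulus $s$, \emph{grouping} input components by their common size $s$ and paying only one aggregation per distinct $s$. A single globally broadcast $\BO(\log\nnodes)$-bit seed defines independent uniform hash values $r_i\in \mathds{Z}_s$ for each input component $\nodes_i$ of size $s$, and each large \component computes
\begin{equation*}
	X_{\component,s} \;=\; \sum_{\node\in \component,\;|\nodes_{i(\node)}|=s} r_{i(\node)} \pmod{s},
\end{equation*}
aggregated at the BFS root indexed by the pair (large-component identifier, size~$s$). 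If no $\nodes_i$ of size $s$ splits \component, each such $\nodes_i$ contributes $s\cdot r_i \equiv 0 \pmod{s}$, hence $X_{\component,s}=0$. Otherwise, fix any split $\nodes_j$ with $0<k:=|\nodes_j\cap \component|<s$ and condition on the other $r_i$'s: the residual $k\cdot r_j\bmod s$ is uniform on the cyclic subgroup of $\mathds{Z}_s$ of order $s/\gcd(k,s)\ge 2$, so $X_{\component,s}\ne 0$ with probability at least $\nicefrac{1}{2}$. Repeating $\BO(\log\nnodes)$ times in parallel and union-bounding boosts correctness to w.h.p.; shared randomness is eliminated exactly as in \cref{result:sf-congest-src}.

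With at most $\nnodes^{\nicefrac{1}{3}}$ distinct small sizes and at most $\nnodes^{\nicefrac{1}{3}}$ large current components, each repetition of Case (iii) piles up only $\tildeO(\nnodes^{\nicefrac{2}{3}})$ output values of $\BO(\log\nnodes)$ bits, which fit into the BFS-tree pipelining budget of $\tildeO(\nnodes^{\nicefrac{2}{3}}+\hopdiameter)$ rounds; broadcasting the resulting activity flag to all large-component identifiers and then PA-broadcasting within each such component adds only $\tildeO(\sqrt{\nnodes}+\hopdiameter)\subseteq \tildeO(\nnodes^{\nicefrac{2}{3}}+\hopdiameter)$ rounds. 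Concatenating Cases~(i)--(iii) yields the desired FFE complexity, and \cref{thm:congest-cfp-existential} combined with a union bound over the $\tildeO(\epsilon^{-1})$ meta-algorithm phases produces the claimed $\tildeO(\epsilon^{-3}(\sqrt{\nnodes}+\hopdiameter)+\epsilon^{-1}\nnodes^{\nicefrac{2}{3}})$ bound.
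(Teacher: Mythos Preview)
Your proposal is correct and follows essentially the same three-case decomposition as the paper: brute-force pipelining inside small current components, global BFS-tree pipelining for the at most $\nnodes^{\nicefrac{2}{3}}$ large input components, and a randomized modular-sum test (per size $s<\nnodes^{\nicefrac{1}{3}}$) for small input components inside the at most $\nnodes^{\nicefrac{1}{3}}$ large current components. The only cosmetic difference is that you draw uniform values $r_i\in\mathds{Z}_s$ while the paper uses binary coins $c_{\componentidentifier}\in\{0,1\}$ summed modulo $s$; both variants yield detection probability at least $\nicefrac{1}{2}$ via the same ``condition on all but one split and observe at most one bad outcome'' argument, so the analyses coincide.
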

\begin{proof}
	We describe an $\tildeO(\nnodes^{\nicefrac{2}{3}}+\hopdiameter)$-round randomized algorithm performing Forest-Function Evaluation for SF--CIC in the \Congest model. 
	Leveraging the nodes' knowledge of the cardinality of their input components, 
	as well as the fact that we can determine the cardinality of a current component \component via a simple Partwise Aggregation, 
	the algorithm essentially uses the strategy from SF--IC to handle input components of size at least $\nnodes^{\nicefrac{1}{3}}$, 
	and a strategy similar to that from SF--SCR to handle smaller input components. 
	First, all components determine their size using partwise aggregation, where $\pacomplexity\in \BO(\sqrt{\nnodes}+\hopdiameter)$.
	Then each component $\component$ of size at most $\nnodes^{\nicefrac{2}{3}}$ computes, 
	for each input component, the size of its intersection with $\component$, 
	using $\BO(\nnodes^{\nicefrac{2}{3}})$ rounds and pipelining internally.
	On the other hand, for each input-component size $s < \nnodes^{\nicefrac{1}{3}}$, 
	each component of size larger than $\nnodes^{\nicefrac{2}{3}}$ tests all input components of size $s$ concurrently using a randomized strategy, 
	aggregating random coin flips in a manner similar to our approach in \cref{result:sf-congest-src}. 
	To this end, again assume shared randomness and flip a fair independent coin $\cost_{\componentidentifier}$ for each input-component identifier $\componentidentifier$. 
	Now, each component aggregates ${\costsum_\component = \sum_{\node \in \terminals \cap \component}\cost_{\componentidentifier(\node)}}\bmod s$, 
	where correctness is shown analogously to SF--SCR. 
	This can be done in $\BO(\nnodes^{\nicefrac{2}{3}}+\hopdiameter)$ rounds, 
	as we can pipeline over a global BFS tree with this congestion 
	(at most $\nnodes^{\nicefrac{1}{3}}$ input components, $\nnodes^{\nicefrac{1}{3}}$ component sizes, and $\BO(\log\nnodes)$ one-bit sums, which can be sent concurrently).
	The shared randomness can be replaced by broadcasting a random seed of size $\BO(\log \nnodes)$, as before.
	Finally, for each input component of size $s \geq \nnodes^{\nicefrac{1}{3}}$, 
	we check if there exist at least two different components $\component_i\neq\component_j$, 
	each containing at least one of its terminals, 
	by pipelining over the global BFS tree as we did for SF--IC. 
	This takes an additional $\BO(\nnodes^{\nicefrac{2}{3}}+\hopdiameter)$ rounds.
	
	This algorithm extends to up to $\BO(1)$ virtual nodes as before: 
	The contribution of these nodes is locally included in the computation, and their output is broadcast to all nodes within $\BO(\hopdiameter)$ rounds.
	Applying \cref{thm:congest-cfp-existential}, the lemma follows. 
\end{proof}

Note that the upper bound derived in \cref{result:sf-congest-cic} is not tight, 
and we could improve it by selecting cutoffs more judiciously and performing tests more efficiently, at the cost of adding complexity to the presentation. 
Since the main point here is to highlight the gap between the deterministic and randomized complexities for the global Steiner Forest problem, 
we leave the details to an extended version of this paper. 

\subsubsection{Point-to-Point Connection}
In the \Congest model, we can evaluate the forest function for the Point-to-Point Connection problem in $\tildeO(\pacomplexity)$ time 
by, for each component, \begin{inparaenum}[(1)]
	\item computing the number of sources and targets it contains via Partwise Aggregation, and 
	\item locally checking whether the number of sources equals the number of targets. 
\end{inparaenum}

\begin{corollary}[Distributed Complexity of Point-to-Point Connection]
	\label{result:ppc-congest}
	In the \Congest model, for any $0<\epsilon\le 1$, Point-to-Point Connection can be $(2+\epsilon)$-approximated in $\tildeO(\epsilon^{-3}\min\{\pacomplexity\nnodes^{o(1)}, \sqrt{\nnodes}+\hopdiameter\})$ time, deterministically. 
\end{corollary}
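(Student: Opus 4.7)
The plan is to instantiate \cref{thm:congest-cfp-existential,thm:congest-cfp-universal} with an efficient Forest-Function Evaluation (FFE) routine tailored to PPC; the claimed round complexity then follows by taking the better of the two resulting bounds.

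First, I would verify that the PPC forest function---defined by $\myfunction(\thesubset)=1$ iff $\cardinality{\thesubset\cap\sources}\neq \cardinality{\thesubset\cap\targets}$---is proper. Zero holds since $\cardinality{\sources}=\cardinality{\targets}$ by assumption, so $\myfunction(\nodes)=0$. Symmetry follows because $\cardinality{(\nodes\setminus\thesubset)\cap\sources}-\cardinality{(\nodes\setminus\thesubset)\cap\targets}$ equals $-(\cardinality{\thesubset\cap\sources}-\cardinality{\thesubset\cap\targets})$, so both expressions vanish simultaneously. Disjointness is immediate from additivity of cardinalities over disjoint sets: if $A\cap B=\emptyset$ and $\myfunction(A)=\myfunction(B)=0$, then the source and target counts already match in each of $A$ and $B$, hence also in $A\cup B$.

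Second, I would implement FFE in $\tildeO(\pacomplexity)$ rounds as follows. Each node $\node$ locally holds indicators $s_\node,t_\node\in\{0,1\}$ for its membership in $\sources$ and $\targets$, respectively. A single Partwise Aggregation whose associative operator is component-wise integer addition---encoded in $\BO(\log\nnodes)$ bits, since both counts are bounded by $\nnodes$---lets every component $\component\in\components$ compute the pair $(\cardinality{\component\cap\sources},\cardinality{\component\cap\targets})$. Each node of $\component$ then locally compares the two entries to determine $\myfunction(\component)$. The $\BO(1)$ virtual nodes introduced by our framework are handled exactly as in \cref{thm:congest-cfp-existential,thm:congest-cfp-universal}: their inputs are globally known, their contribution is added in locally, and their final output is disseminated over a BFS tree in $\BO(\hopdiameter)$ additional rounds.

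Third, plugging $\myfunctioncomplexity\in\tildeO(\pacomplexity)\subseteq\tildeO(\sqrt{\nnodes}+\hopdiameter)$ into \cref{thm:congest-cfp-existential} gives $\tildeO(\epsilon^{-3}(\sqrt{\nnodes}+\hopdiameter))$ rounds, whereas \cref{thm:congest-cfp-universal} gives $\tildeO(\epsilon^{-3}\pacomplexity\nnodes^{o(1)})$ rounds; the minimum of the two is the bound claimed in the corollary. All ingredients---the shell-decomposition meta-algorithm, its subroutines for aSSSP, MST, and \stepthreeabbrv, and the FFE above---are deterministic, so the resulting algorithm is deterministic, as stated. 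Unlike for SF or its variants, no obstacle arises here, because the PPC forest function is completely determined by two integer cardinalities per component, both of which are exactly the kind of quantity PA was designed to compute.
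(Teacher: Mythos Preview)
Your proposal is correct and follows essentially the same approach as the paper: implement FFE for PPC via a single Partwise Aggregation that sums the source and target indicators per component, then compare locally, yielding $\myfunctioncomplexity\in\tildeO(\pacomplexity)$ and the claimed bound from \cref{thm:congest-cfp-existential,thm:congest-cfp-universal}. Your write-up is more explicit than the paper's (you spell out properness and the aggregation operator), and the remark about virtual nodes is unnecessary here since PPC introduces none, but none of this affects correctness.
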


\subsubsection{\facloclong}

As \facloclong has the forest function of SF with terminals $C\cup\{s\}$, where $s$ is the (sole) virtual node in the input graph and the input format is SF--IC, we can evaluate the forest function in $\tildeO(\pacomplexity)$, as shown in the proof of \Cref{result:sf-congest-ic}.

\begin{corollary}[Distributed Complexity of \facloc]
	\label{result:fl-congest}
	In the \Congest model, for any $0<\epsilon\le 1$, \facloc can be $(2+\epsilon)$-approximated in $\tildeO(\epsilon^{-3}\min\{\pacomplexity\nnodes^{o(1)}, \sqrt{\nnodes}+\hopdiameter\})$~time, deterministically.
\end{corollary}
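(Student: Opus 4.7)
\textbf{Proof proposal for Corollary~\ref{result:fl-congest}.} The plan is to reduce \facloc to an SF--IC instance on an augmented graph with a single virtual node and then invoke Lemma~\ref{result:sf-congest-ic} as a black box. Specifically, following the rephrased \facloc definition from \Cref{sec:preliminaries}, I would form $\graph' = (\nodes \dot\cup \{s\}, \edges \dot\cup \{\{\node,s\} \mid \node \in \nodes\})$ with $\cost(\{\node,s\}) = o_\node$ for each $\node \in \nodes$, and designate the terminal set $\terminals = C \cup \{s\}$ as a single input component. Each node locally knows its opening cost $o_\node$ and whether it belongs to $C$, so each non-virtual endpoint of an $s$-edge knows that edge's weight, satisfying the virtual-node conventions of \cref{congest:model}; the state of $s$ (its identifier, that it is a terminal of the unique input component) is trivially globally known.

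I would then observe that this is an instance of SF--IC with $\ncomponents = 1$ in the presence of $\BO(1)$ virtual nodes. Applying Lemma~\ref{result:sf-congest-ic} directly yields a deterministic $(2+\epsilon)$-approximation in $\tildeO(\epsilon^{-3}\min\{\pacomplexity\nnodes^{o(1)}, \sqrt{\nnodes}+\hopdiameter\} + \epsilon^{-1}\cdot 1)$ rounds, and the additive $\epsilon^{-1}$ is absorbed into the leading term (since $\pacomplexity,\hopdiameter \ge 1$). The $(2+\epsilon)$-approximation for ST on $\graph'$ is exactly a $(2+\epsilon)$-approximation for \facloc, because the opening cost of a facility $\node$ opened by the solution corresponds precisely to selecting the edge $\{\node,s\}$ into the Steiner tree, and any feasible \facloc solution induces a feasible ST solution on $\graph'$ of equal cost by connecting each opened facility to $s$ via its dedicated edge (and vice versa).

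The only thing requiring verification is that nothing in the pipeline of \Cref{thm:congest-cfp-existential} and \Cref{thm:congest-cfp-universal} breaks for this particular virtual node: aSSSP, MST, RPS, and edge-cost reduction all handle $\BO(1)$ virtual nodes by the cited results, and FFE for SF--IC was already shown to tolerate virtual nodes in the proof of Lemma~\ref{result:sf-congest-ic} (the contribution of $s$ is globally known and can be locally folded into each node's decision, with the output broadcast over the BFS tree in $\BO(\hopdiameter)$ rounds). Hence no obstacle remains, and the corollary follows. The mildly subtle point---hardly a real obstacle---is merely to note that the $\ncomponents = 1$ additive term collapses and that the FFE for the single ``input component'' $C \cup \{s\}$ is just a connectivity check already handled by the SF--IC routine.
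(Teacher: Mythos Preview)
Your proposal is correct and takes essentially the same approach as the paper: both reduce \facloc to an ST (i.e., SF--IC with $\ncomponents=1$) instance on the augmented graph with the single virtual node $s$, then invoke the SF--IC machinery (the paper phrases this as evaluating the forest function in $\tildeO(\pacomplexity)$ via the proof of Lemma~\ref{result:sf-congest-ic}, while you call the lemma itself as a black box and absorb the additive $\epsilon^{-1}$ term). The only cosmetic difference is that you spell out the cost-equivalence between \facloc solutions and Steiner trees on $\graph'$, which the paper leaves implicit in the rephrased definition.
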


\section{Parallel Algorithm}
\label{apx:parallel}
\label{apx:algorithm:parallel}

\subsection{Computational Model}
In the Parallel Random-Access Machine model (PRAM model), 
multiple processors share one random-access memory to jointly solve a computational problem \cite{fortune1978parallelism}.
Since all contention models for concurrent access to the
same memory cell by multiple processors are equivalent up to small ((sub-)logarithmic) factors in complexity \cite{harris1994survey}, 
we assume that there is no contention. 
The computation can thus be viewed as a Directed Acyclic Graph (DAG), 
whose nodes represent elementary computational steps and whose edges represent dependencies. 
The sources of the DAG, then, represent the input. 
In the PRAM model, 
the crucial complexity measures are the total size of the DAG, called \emph{work,}
i.e., the sequential complexity of the computation, 
and the maximum length of a path in the DAG, 
called \emph{depth,}
i.e., the time to complete the computation with an unbounded number of processors executing steps at unit speed.

\subsection{Meta-Algorithm}

Implementing our model-agnostic algorithm in the PRAM model, 
we obtain the following result. 

\begin{theorem}
	In the PRAM model, for any $0<\epsilon\le 1$, 
	Constrained Forest Problems can be approximated up to a factor of $(2+\epsilon)$ in ${\tildeO(\epsilon^{-3}\nedges + \epsilon^{-1}\myfunctioncomplexity_w)}$ work and $\tildeO(\epsilon^{-3}+\epsilon^{-1}\myfunctioncomplexity_d)$ depth, 
	where $\myfunctioncomplexity_w$ and $\myfunctioncomplexity_d$ are the work and depth required to evaluate \myfunction in the parallel setting, respectively.
\end{theorem}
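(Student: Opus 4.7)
The plan is to instantiate the model-agnostic shell-decomposition algorithm of \Cref{thm:metatheorem} in the PRAM model, so the task reduces to plugging in parallel implementations of the four subroutines (aSSSP, MST, RPS, FFE) and verifying that the bookkeeping steps add no asymptotic overhead. By \Cref{lem:gwcleantermination}, the while loop runs for $\tildeO(\epsilon^{-1})$ phases; hence, if each phase costs $\tildeO(\epsilon^{-2}\nedges)$ work and $\tildeO(\epsilon^{-2})$ depth outside of FFE, summing over phases yields the claimed bounds once the FFE contributions of $\myfunctioncomplexity_w$ work and $\myfunctioncomplexity_d$ depth per phase are added.

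First, I would plug in standard PRAM results for the three problem-independent subroutines. For MST (\Cref{def:mst}), I would cite classical parallel Bor\r{u}vka-style algorithms, which run in $\tildeO(\nedges)$ work and $\tildeO(1)$ depth~\cite{pettie2002randomized}. For $(1+\distanceeps)$-approximate Set-Source Shortest-Path Forest and for the $(1+\distanceeps)$-approximate Transshipment used to implement \stepthreeabbrv (\Cref{def:aSSSP,def:ps}), I would invoke parallel continuous-optimization/hop-set based algorithms running in $\tildeO(\distanceeps^{-2}\nedges)$ work and $\tildeO(\distanceeps^{-2})$ depth; choosing $\distanceeps,\overgrowth \in \Theta(\epsilon)$ gives the per-phase figures above. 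The reduction from \stepthreeabbrv to Transshipment sketched after \Cref{def:ps} is trivially executable in PRAM, since counting marked descendants reduces to a list-ranking-style aggregation on the input forest, achievable in $\tildeO(\nnodes)$ work and $\tildeO(1)$ depth.

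Next I would verify that the bookkeeping steps (Edge-Cost Reduction in \Cref{def:ecr} and Candidate-Merge Identification in \Cref{def:cmi}) are immediate in PRAM. Each edge independently computes its new reduced cost from the per-endpoint distances produced by Step~\ref{step:aSSSP}, costing $\BO(\nedges)$ work and $\BO(1)$ depth; the same holds for Candidate-Merge Identification, where each edge locally inspects its endpoints' root identifiers and reduced cost. Updating the component identifiers maintained across phases (Line~\ref{line:components-update}--\ref{line:active-terminals-update} of \Cref{alg:gw-clean}) reduces to computing connected components and minima over trees, both standard in $\tildeO(\nedges)$ work and $\tildeO(1)$ depth.

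The one subtlety to check---and the only real obstacle---is that the cited subroutines accommodate the minor extensions our algorithm needs: the \radius-restriction of aSSSP (handled by thresholding distances after the fact, within the same bounds), the scaling trick of \Cref{fn:zero-weight-handling} for zero-weight edges (which only affects complexity through polylog factors absorbed by $\tildeO$), and the $\BO(1)$ virtual nodes required for \facloc (trivially simulated in PRAM by augmenting the edge list in $\BO(\nnodes)$ work and $\BO(1)$ depth). Once these are verified, summing $\tildeO(\epsilon^{-1})$ phases, each contributing $\tildeO(\epsilon^{-2}\nedges)$ work and $\tildeO(\epsilon^{-2})$ depth outside FFE, and adding $\tildeO(\epsilon^{-1})$ FFE invocations gives the desired $\tildeO(\epsilon^{-3}\nedges + \epsilon^{-1}\myfunctioncomplexity_w)$ work and $\tildeO(\epsilon^{-3} + \epsilon^{-1}\myfunctioncomplexity_d)$ depth, while correctness and the $(2+\epsilon)$ approximation ratio are inherited directly from \Cref{thm:modular-gw}.
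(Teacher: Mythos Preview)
Your proposal is correct and follows essentially the same approach as the paper: instantiate \Cref{thm:metatheorem} by plugging in PRAM implementations of aSSSP and Transshipment (the paper cites \cite{rozhon2022paths}), MST (the paper cites the deterministic algorithm of \cite{chong2001concurrent} rather than the randomized \cite{pettie2002randomized}), and noting that ECR and CMI are trivial per-edge operations, then sum over the $\tildeO(\epsilon^{-1})$ phases. Your treatment of the bookkeeping subtleties (\radius-restriction, zero-weight edges, virtual nodes) is slightly more explicit than the paper's proof, but the overall structure is identical.
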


\begin{proof}
To prove this theorem, 
it suffices to implement the five problem-independent building blocks of our model-agnostic algorithm (i.e., blocks (1)--(5), cf.~\cref{sec:algorithm:modelagnostic:buildingblocks}) 
with factor $\epsilon^{-1}$ smaller work and depth in the PRAM model;
we then apply \Cref{thm:metatheorem}.

\begin{enumerate}
	\item \emph{Parallel $\alpha$-approximate Set-Source Shortest-Path Forest.}
	\citet{rozhon2022paths} provide a deterministic parallel algorithm computing a $(1+\epsilon)$-approximate shortest-path tree on graphs with non-negative edge weights 
	in $\tildeO(1)$ depth and $\tildeO(\epsilon^{-2}\nedges)$ work
	for $\epsilon \in (0,1]$. 
	We can use this algorithm to solve our \emph{set-source} shortest-path problem 
	by introducing a new node and connecting it to all terminals via edges of weight $0$, 
	thus turning the problem into a single-source shortest-path problem. 
	Thus, we can also compute an \radius-restricted approximate Set-Source Shortest-Path forest $\forest'$.
	
	\item \emph{Parallel Edge-Cost Reduction.}
	Using the knowledge from Step~1, we can compute the reduced costs of all edges according in $\BO(\nedges)$ work and $\BO(1)$ depth. 
	
	\item \emph{Parallel Candidate-Merge Identification.}
	Using the knowledge from Steps~1 and~2, we can mark candidate edges in $\BO(\nedges)$ work and $\BO(1)$ depth.
	
	\item \emph{Parallel Minimum Spanning Tree.}
	\citet{chong2001concurrent} provide a deterministic parallel algorithm computing an MST in $\BO((\nnodes + \nedges)\log\nnodes) = \tildeO(\nnodes + \nedges)$ work and $\BO(\log\nnodes)=\tildeO(1)$ depth.\footnote{%
		This is optimal up to a logarithmic factor in the work. 
		To the best of our knowledge, all currently known work- \emph{and} depth-optimal algorithms are randomized \cite{halperin2001optimal,pettie2002randomized}.
	}
	
	\item \emph{Parallel \stepthree.}
	\citet{rozhon2022paths} provide a deterministic PRAM algorithm computing a $(1+\epsilon)$-approximate transshipment in $\tildeO(1)$ depth and $\tildeO( \epsilon^{-2}\nedges)$ work
	for any $\epsilon \in (0,1]$, i.e., $\tildeO(m)$ work for $\epsilon \in \Omega(1)$.\qedhere
\end{enumerate}
\end{proof}

\subsection{Applications}

The algorithm presented in the previous section works for \emph{any} proper Constrained Forest Problem in the PRAM model, 
and hence, its running time depends on the complexity of evaluating the forest function \myfunction in the parallel setting. 
As we analyze below, instantiating \myfunction with our concrete example problems does not incur any overhead over the problem-agnostic steps of our parallel algorithm. 

\subsubsection{Steiner Forest}
We first observe that the other three input representations can be efficiently reduced to SF--IC.
This is trivial for SF--CIC.
For SF--CR, we perform an MST calculation on the graph $(\terminals,\edges_{\terminals})$, where $\edges_{\terminals}$ is the union of an arbitrary spanning tree with the connection requests, and edge weights are $1$ for edges coming from connection requests and $2$ for those that do not.
We then remove all edges of weight~$2$ from the output and obtain a spanning forest of the input components, from which it is straightforward to generate the SF--IC representation using $\BO(\nterminals)$ work and $\tildeO(1)$ depth.

The above graph can be constructed using $\BO(\terminals+\edges_{\terminals})$ work and $\BO(1)$ depth,
and the MST computation can be performed in $\tildeO(\terminals+\edges_{\terminals})$ work and $\tildeO(1)$ depth using the algorithm by~\citet{chong2001concurrent}.
Note that, up to logarithmic factors, this matches the input size;
it is also straightforward to show that the majority of connection requests need to be read to solve the problem with constant probability (and good approximation ratio).\footnote{Unless we have already determined that all terminals are in the same input component, distinguishing whether there are one or more input components requires to keep reading connection requests. If the instance does not allow to cheaply connect these terminal sets that may or may not be in the same input component, figuring this out is essential for obtaining a good approximation ratio.}

Finally, SF--SCR trivially reduceds to SF--CR.
With this in mind, by slight abuse of notation, in the following, we will refer to SF as a whole and tacitly assume that the input format is SF--IC.

We propose an $\tildeO(\nnodes)$ work and $\tildeO(1)$ depth parallel algorithm for performing Forest-Function Evaluation for Steiner Forest in the PRAM model. 
For each $i \in [\ncomponents]$, 
denote by $\nterminals_i = \cardinality{\{\node\in \nodes\mid \componentidentifier(\node) = i\}}$ the number of nodes $\node\in\nodes$ with $\componentidentifier(\node) = i$. 
We can compute the $\nterminals_i$ for each $i\in[\ncomponents]$ by keeping track of $(i, count_i)$-pairs and performing parallel merge sort, 
which requires $\tildeO(n)$ work (as there are at most $\tildeO(\nnodes)$ distinct labels) and $\tildeO(1)$ depth \cite{cole1988parallel}.\footnote{%
	As $\nterminals_i$ remains constant for each $i\in[\ncomponents]$ over the course of the algorithm, 
	this needs to be done only once overall. 
}
Using the same method and the fact that we know the current component identifier of each node from Step~3 (MST), 
we can also count how often each label $i\in[\ncomponents]$ occurs in a component $\component\in\components$. 
Note that there exist at most \nnodes nonzero $(\component, i, count_i)$-triples. 
Therefore, 
using one step of computation by \nnodes parallel processors,  
we can evaluate, for each $\component\in\components$,
whether there exists a label $i\in[\ncomponents]$ such that $\nterminals_i \neq \cardinality{\{\node\in\component\mid \componentidentifier(\node) = i\}} > 0$, 
and record the result as the new activity status of each component. 

\begin{corollary}[Parallel Complexity of Steiner Forest]
	\label{result:sf-pram}
	In the PRAM model, for any $0<\epsilon\le 1$, Steiner Forest can be $(2+\epsilon)$-approximated with $\tildeO(\epsilon^{-3}\nedges)$ work and $\tildeO(\epsilon^{-3})$ depth.
\end{corollary}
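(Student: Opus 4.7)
The plan is to invoke the parallel meta-theorem of this section with an efficient Forest-Function Evaluation (FFE) subroutine for Steiner Forest. By that meta-theorem, a $(2+\epsilon)$-approximation to any proper CFP can be computed in $\tildeO(\epsilon^{-3}\nedges + \epsilon^{-1}\myfunctioncomplexity_w)$ work and $\tildeO(\epsilon^{-3} + \epsilon^{-1}\myfunctioncomplexity_d)$ depth, so it suffices to show $\myfunctioncomplexity_w \in \tildeO(\nnodes)$ and $\myfunctioncomplexity_d \in \tildeO(1)$ for the Steiner-Forest forest function. Without loss of generality we may assume $\graph$ is connected (otherwise solve each connected component separately), so that $\nnodes \le \nedges+1$ and the $\epsilon^{-1}\myfunctioncomplexity_w = \tildeO(\epsilon^{-1}\nnodes)$ term is absorbed into $\tildeO(\epsilon^{-3}\nedges)$.

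I would handle the SF--IC input representation directly and reduce the remaining three to it. For SF--IC, the plan is to precompute the input-component sizes $\nterminals_i=\cardinality{\{\node \in \nodes \mid \componentidentifier(\node)=i\}}$ once at the start of the algorithm by sorting the $(\componentidentifier(\node),1)$-pairs using the deterministic parallel merge-sort of \citet{cole1988parallel}, which requires $\tildeO(\nnodes)$ work and $\tildeO(1)$ depth. Within each phase of the meta-algorithm, after the MST step has assigned to every node $\node$ its current component identifier $\component_\node$, I would sort the $(\component_\node,\componentidentifier(\node))$-pairs lexicographically to produce the at most $\nnodes$ nonzero $(\component,i,\text{count})$-triples recording the contribution of each input component to each current component. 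A component $\component$ is then declared active if and only if some triple it owns satisfies $0 < \text{count} < \nterminals_i$, which is detectable by a single parallel sweep. Both the sort and the sweep stay within $\tildeO(\nnodes)$ work and $\tildeO(1)$ depth.

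For the remaining input variants, SF--CIC reduces trivially to SF--IC by simply ignoring the extra cardinality data. For SF--CR and SF--SCR, I would, as a preprocessing step, form the graph on $\terminals$ whose edges are the connection requests (weight $1$) together with an arbitrary spanning tree of $\graph$ restricted to $\terminals$ (weight $2$), compute an MST of this graph via \citet{chong2001concurrent} in $\tildeO(\nedges)$ work and $\tildeO(1)$ depth, discard the weight-$2$ edges, and read off the input-component identifiers from the resulting spanning forest. From this point on, we run the SF--IC algorithm.

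The main obstacle, and the place where I would be most careful, is verifying that the sort keys used in each phase remain $\BO(\log \nnodes)$ bits so that parallel integer sorting stays linear in work per phase; this is ensured by the minimum-identifier naming of components in block~\ref{step:msf} of the meta-algorithm. Summing the per-phase cost $\tildeO(\nnodes)$ work and $\tildeO(1)$ depth against the $\BO(\epsilon^{-1}\log \nnodes)$ phase bound from \cref{lem:gwcleantermination} and combining with the problem-independent building blocks yields the claimed $\tildeO(\epsilon^{-3}\nedges)$ work and $\tildeO(\epsilon^{-3})$ depth.
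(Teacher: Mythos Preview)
Your proposal is correct and follows essentially the same approach as the paper: reduce all input variants to SF--IC (with SF--CR handled via an MST computation on the terminals-plus-requests graph using \citet{chong2001concurrent}), then evaluate the forest function per phase by parallel merge-sorting to obtain the at most $\nnodes$ nonzero $(\component,i,\text{count})$-triples and comparing against the precomputed $\nterminals_i$. The paper does exactly this, citing the same subroutines; your additional remarks about key sizes and absorbing the $\epsilon^{-1}\nnodes$ term are sound refinements.
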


\subsubsection{Point-to-Point Connection}
We propose an $\tildeO(\nnodes)$ work and $\tildeO(1)$ depth algorithm for performing Forest-Function Evaluation for Point-to-Point Connection in the PRAM model. 
For each node $\node\in\nodes$, 
set $x_\node = 1$ if \node is a source node, 
$x_\node = -1$ if \node is a destination node, 
and $x_\node = 0$ otherwise. 
In one step of computation by $\BO(\nnodes)$ parallel processors, 
calculate $\sum_{\node\in\component} x_\node$ for each component $\component\in\components$, 
and report all components with a nonzero sum as active components.

\begin{corollary}[Parallel Complexity of Point-to-Point Connection]
	\label{result:ppc-pram}
	In the PRAM model, for any $0<\epsilon\le 1$, Point-to-Point Connection can be $(2+\epsilon)$-approximated with $\tildeO(\epsilon^{-3}\nedges)$ work and $\tildeO(\epsilon^{-3})$ depth.
\end{corollary}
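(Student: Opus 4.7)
The plan is to obtain the corollary as a direct instantiation of the parallel meta-algorithm theorem stated just above it, plugging in a concrete Forest-Function Evaluation (FFE) subroutine for the PPC forest function. That meta-theorem yields $(2+\epsilon)$-approximations in $\tildeO(\epsilon^{-3}\nedges + \epsilon^{-1}\myfunctioncomplexity_w)$ work and $\tildeO(\epsilon^{-3} + \epsilon^{-1}\myfunctioncomplexity_d)$ depth, so the entire task reduces to exhibiting an FFE routine for the PPC forest function with $\myfunctioncomplexity_w \in \tildeO(\nedges)$ and $\myfunctioncomplexity_d \in \tildeO(1)$.

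For the FFE subroutine, I would use the signed-indicator trick described immediately before the corollary: assign $x_\node = +1$ to each source, $x_\node = -1$ to each target, and $x_\node = 0$ to every other node, and observe that the PPC forest function satisfies $\myfunction(\component) = 1$ iff $\sum_{\node\in\component} x_\node \neq 0$, since this is precisely the condition that the numbers of sources and targets in $\component$ differ. Given the current component labels produced by the MST step, computing all of these per-component sums in parallel is a standard grouped-reduction task: sort the $(\text{component id}, x_\node)$-pairs by component id using a parallel sort, then apply prefix sums within each segment and read off the last entry per segment. Using Cole's parallel merge sort and standard prefix-sum primitives, this whole step takes $\tildeO(\nnodes)$ work and $\tildeO(1)$ depth; a final $\BO(1)$-depth pass marks each component as active iff its segment sum is nonzero.

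Plugging $\myfunctioncomplexity_w \in \tildeO(\nnodes) \subseteq \tildeO(\nedges)$ and $\myfunctioncomplexity_d \in \tildeO(1)$ into the meta-theorem then yields the claimed $\tildeO(\epsilon^{-3}\nedges)$ work and $\tildeO(\epsilon^{-3})$ depth bounds. There is essentially no hard step: everything technical has already been absorbed into the meta-theorem, and the only PPC-specific ingredient is the observation that the connectivity constraint is expressible as a single additive predicate, which is ideally suited to parallel segmented reduction. The mildest subtlety to verify is that the component identifiers used for grouping are consistent with those produced by the MST subroutine of the meta-algorithm, but this is immediate since FFE is invoked on exactly the component labeling handed to it by Step~(5) of the model-agnostic procedure.
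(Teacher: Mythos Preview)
Your proposal is correct and matches the paper's approach exactly: instantiate the parallel meta-theorem with the signed-indicator FFE routine ($x_\node\in\{-1,0,+1\}$, sum per component, report nonzero sums as active), giving $\myfunctioncomplexity_w\in\tildeO(\nnodes)$ and $\myfunctioncomplexity_d\in\tildeO(1)$. Your explicit sort-then-segmented-prefix-sum implementation is a bit more detailed than the paper's terse ``one step by $\BO(\nnodes)$ processors,'' but the argument is the same.
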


\subsubsection{\facloclong}

We add the virtual node $s$ and its edges to the input with $\BO(\nnodes)$ work and $\BO(1)$ depth.
The result now follows from \Cref{result:sf-pram}.
\begin{corollary}[Parallel Complexity of \facloc]
	\label{result:fl-pram}
	In the PRAM model, for any $0<\epsilon\le 1$, \facloc can be $(2+\epsilon)$-approximated in $\tildeO(\epsilon^{-3}\nedges)$ work and $\tildeO(\epsilon^{-3})$ depth.
\end{corollary}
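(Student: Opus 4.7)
The plan is to reduce \facloc in PRAM to the Steiner Forest problem on a single augmented graph and then invoke \Cref{result:sf-pram}. Concretely, I would follow the rephrasing of \facloc given in \cref{sec:preliminaries}: introduce a single virtual node $s$, and for each $\node\in\nodes$ add an edge $\{\node,s\}$ with weight $o_\node$. The resulting instance is an SF (in fact ST) instance on $\graph' = (\nodes\dot\cup\{s\},\edges\dot\cup\{\{\node,s\}\mid \node\in\nodes\})$ with terminals $\terminals = C\cup\{s\}$ and a single input component. Since \Cref{result:sf-pram} is already established, the only work is to show that the reduction can be performed efficiently enough not to dominate the asymptotic work and depth.

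First I would observe that constructing $\graph'$ from $\graph$ in the PRAM model is trivial: we allocate $\nnodes+1$ node records and $\nedges+\nnodes$ edge records, and in parallel assign weights $c(\edge)$ to the original edges and $o_\node$ to each new edge $\{\node,s\}$. Each node $\node$ independently writes the label of its input component (the single component containing $C\cup\{s\}$) by checking whether $\node\in C$ or $\node=s$. This costs $\BO(\nnodes+\nedges)$ work and $\BO(1)$ depth, which is subsumed by the downstream SF work bound since the graph is connected and hence $\nedges\geq\nnodes-1$.

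Next I would apply \Cref{result:sf-pram} to $\graph'$ with the SF--IC input specification just described. The augmented graph has $\nnodes+1$ nodes and $\nedges+\nnodes = \tildeO(\nedges)$ edges, so the SF algorithm produces a $(2+\epsilon)$-approximate solution $\forest\subseteq\edges(\graph')$ using $\tildeO(\epsilon^{-3}\nedges)$ work and $\tildeO(\epsilon^{-3})$ depth. The output is interpreted as the \facloc solution in the natural way: the opened facilities are $O=\{\node\in\nodes\mid \{\node,s\}\in\forest\}$, and the connecting edges are $F=\forest\cap\edges$. By the equivalence already stated in \cref{sec:preliminaries}, the total cost $\sum_{\node\in O}o_\node+\sum_{\edge\in F}\cost(\edge)$ equals $\cost(\forest)$, so the approximation factor transfers directly.

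The reduction is entirely routine in this model, so I do not anticipate any real obstacle; in contrast to \Congest (where $s$ must be \emph{simulated} via Partwise Aggregation), in PRAM the virtual node is simply another memory record visible to every processor. The only point that requires a brief check is that the final partition into opened facilities and connecting edges can be read off the SF output in $\BO(\nnodes+\nedges)$ work and $\BO(1)$ depth, which again is dominated by the SF cost. Combining these observations yields the claimed $\tildeO(\epsilon^{-3}\nedges)$ work and $\tildeO(\epsilon^{-3})$ depth bounds.
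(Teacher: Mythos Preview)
Your proposal is correct and follows essentially the same approach as the paper: add the virtual node $s$ with its incident edges in $\BO(\nnodes)$ work and $\BO(1)$ depth, then invoke \Cref{result:sf-pram} on the resulting Steiner Tree instance. Your write-up is simply more detailed about the bookkeeping (reading off opened facilities and connecting edges), which the paper omits as trivial.
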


\section{Multi-Pass Streaming Algorithm}
\label{apx:streaming}

\subsection{Computational Model}
In streaming models of computation, 
an input graph is presented as a stream of edges, 
which are commonly assumed to arrive in \emph{arbitrary} order. 
Algorithms are assessed by the number of \emph{passes} they need to make over the stream and the amount of \emph{space} they require 
(organized in memory words of $\BO(\log \nnodes)$ bits). 
While streaming models differ with regard to the restrictions they impose on either measure, 
in this work, 
we consider the \emph{Multi-Pass (semi-)Streaming model} \cite{feigenbaum2005graph}, 
where we are allowed several passes over the stream and $\tildeO(\nnodes)$ space.

\subsection{Meta-Algorithm}

Implementing our model-agnostic algorithm in the Multi-Pass Streaming model, 
we obtain the following result. 

\begin{theorem}
	In the Multi-Pass Streaming model, for any $0<\epsilon\le 1$,
	Constrained Forest Problems can be approximated up to a factor of $(2+\epsilon)$ using $\tildeO(\nnodes + \myfunctioncomplexity_s)$ space and $\tildeO(\epsilon^{-3}+\epsilon^{-1}\myfunctioncomplexity_p)$ passes, 
	where $\myfunctioncomplexity_s$ and $\myfunctioncomplexity_p$ are the space and the number of passes required to evaluate \myfunction in the streaming setting, respectively.
\end{theorem}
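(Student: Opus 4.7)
The plan is to invoke Theorem \ref{thm:metatheorem} and implement each of the five problem-independent building blocks of the shell-decomposition algorithm using $\tildeO(\nnodes)$ space and $\tildeO(\epsilon^{-2})$ passes per invocation. Since Lemma \ref{lem:gwcleantermination} bounds the outer loop by $\tildeO(\epsilon^{-1})$ iterations, this yields the claimed $\tildeO(\epsilon^{-3})$ pass bound, plus an additive $\tildeO(\epsilon^{-1}\myfunctioncomplexity_p)$ passes and $\tildeO(\myfunctioncomplexity_s)$ space for repeated forest-function evaluation. The enabling observation is that all information we need to carry between passes and phases fits in $\tildeO(\nnodes)$ words of persistent memory: the current solution edges $\forest$, component labels $\componentidentifier_\node$, the truncated SSSP forest $\forest'$, approximate distances to the closest active terminal, and activity flags.

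First, I would handle $(1+\distanceeps)$-approximate Set-Source Shortest-Path Forest via the streaming algorithm of \citet{becker21stream}, applied to the input graph augmented by a virtual source joined to every node in $\terminals^1$ by zero-weight edges; this gives a $(1+\distanceeps)$-approximate SSSP tree in $\tildeO(\nnodes)$ space and $\tildeO(\distanceeps^{-2})$ passes. From the resulting parent pointers and distances, Edge-Cost Reduction is a single pass that recomputes $\reducedcost(\edge)$ for each streamed edge, and Candidate-Merge Identification is another single pass that tests, for each streamed $\edge$, whether $\reducedcost(\edge)=0$ and whether its endpoints lie in distinct marked trees of $\forest'$. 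Neither of these needs to materialize the full reduced-cost graph, since the next subroutine can consume the updated edges on the fly.

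Next, for Minimum Spanning Tree I would invoke the classical Borůvka-style semi-streaming algorithm of \citet{feigenbaum2005graph}, which uses $\tildeO(\nnodes)$ space and $\tildeO(1)$ passes. For \stepthreeabbrv, I would follow the reduction described in Section \ref{sec:algorithm:modelagnostic:problems} to $(1+\distanceeps)$-approximate Transshipment (the demand counts per tree can be obtained by a Partwise Aggregation implemented in $\tildeO(1)$ passes over the edges of $\forest'$) and then apply the streaming transshipment solver of \citet{becker21stream}, again in $\tildeO(\nnodes)$ space and $\tildeO(\distanceeps^{-2})$ passes. Virtual nodes (needed for \facloc) cost nothing beyond the ambient $\tildeO(\nnodes)$ memory, since a constant number of them can simply be held in memory and simulated in every pass.

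The main obstacle I anticipate is not any single subroutine but the careful accounting of persistent state across subroutines, passes, and phases: one must verify that the combined footprint of $\forest$, $\forest'$, component identifiers, stored distances and parents, the virtual-node bookkeeping, and the internal state of the invoked streaming subroutines stays within $\tildeO(\nnodes)$ words simultaneously, and that the subroutines can be sequentially composed without re-streaming the input more than $\tildeO(\distanceeps^{-2})$ times per phase. Given that each persistent object is $\tildeO(\nnodes)$ and only a constant number of them are live at once, this bookkeeping should go through cleanly; making it rigorous is the main work required to complete the proof.
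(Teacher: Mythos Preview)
Your proposal is correct and follows essentially the same approach as the paper: invoking \Cref{thm:metatheorem}, implementing aSSSP and \stepthreeabbrv via the streaming algorithms of \citet{becker21stream}, MST via \citet{feigenbaum2005graph}, and handling ECR and CMI on the fly from the stored per-node distances and parent pointers so that reduced costs and merge-candidate status are never materialized as $\Omega(\nedges)$-sized objects. The paper additionally spells out why the ``edge not deleted'' condition in CMI is implied by the endpoints lying in distinct trees of $\forest'$ (via \Cref{lem:gwcleanbasic}), which is the one bookkeeping detail you would need to make explicit when writing this up.
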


To prove this theorem, 
it suffices to implement the five problem-independent building blocks of our model-agnostic algorithm (i.e., blocks (1)--(5), cf.~\cref{sec:algorithm:modelagnostic:buildingblocks}) 
with factor $\epsilon^{-1}$ smaller number of passes in the Multi-Pass Streaming model;
we then apply \Cref{thm:metatheorem}, noting that memory can be reused in each iteration.
We will store the $\tildeO(\nnodes)$-sized outputs from Steps~\ref{step:aSSSP}, \ref{step:msf}, \ref{step:rps}, and \ref{step:ffe} explictilty, while handling Steps~\ref{step:ecr} and~\ref{step:cmi} differently to avoid memory cost $\Omega(\nedges)$.

\begin{enumerate}
	\item \emph{Streaming $\alpha$-approximate Set-Source Shortest-Path Forest.}
	\citet{becker21stream} provide a streaming algorithm that, 
	given an undirected graph with non-negative edge weights, 
	computes a $(1 + \epsilon)$-approximate single-source shortest-path tree 
	using $\tildeO(\nnodes)$ space and $\tildeO(\epsilon^{-2})$ passes for any $\epsilon \in (0,1]$. 
	We can convert our \emph{set-source} shortest-path problem into a single-source shortest-path problem by introducing a new node and connecting it to all source nodes by zero-weight edges;
	we simulate this node and its incident edges by holding them in memory, requiring $\BO(n)$ additional words.
	Moreover, we maintain $\forest$ and $\forest'$ in memory, also requiring $\BO(n)$ additional words.
	Note that the lowest node ID (i.e., the component ID) is known for each node as an output of Step~\ref{step:msf} from the preceding iteration (or from initialization). 
	Thus, we can compute (and store) the desired \radius-restricted approximate Set-Source Shortest-Path forest $\forest'$.
	
	\item \emph{Streaming Edge-Cost Reduction.}
	To maintain $\tildeO(\nnodes)$ space, 
	reduced costs of edges are not stored explicitly.
	We exploit that we store, for each node, not only whether it is in the forest, but also its distance to its root.
	Thus, we can infer its reduced cost from the values stored for its endpoints whenever the stream hands us an edge.

	\item \emph{Streaming Candidate-Merge Identification.}
	To maintain $\tildeO(\nnodes)$ space, 
	candidate status of edges is not stored explicitly, either.
	Recall that, in one iteration of the meta-algorithm, an edge $\{\othernode,\node\} \in \edges$ is in \mergeedges if and only if
	\begin{inparaenum}[(i)]
	\item its endpoints are in different trees of the restricted shortest paths forest,
	\item its reduced cost is $0$, and
	\item it has not been deleted from $\edges$.
	\end{inparaenum} 
	The first condition can be checked using the root identifiers (if any) stored by $\othernode$ and $\node$, respectively.
	We already pointed out that reduced costs are stored implicitly, enabling us to check the second condition.
	Regarding the third condition, recall that \Cref{alg:gw-clean} deletes all edges that reached reduced cost $0$  in an iteration except for those in $\forest'\cup \augmentation$.
	By \Cref{lem:gwcleanbasic} (vii), the edges of reduced cost $0$ that are not deleted are exactly those in $\forest\cup \forest'$, 
	i.e., at the beginning of an iteration, $\forest\cup \forest'$ spans exactly the connectivity components of $(\nodes,\forest'\cup \augmentation)$.
	By \Cref{lem:gwcleanbasic} (iii), this is equivalent to spanning $\ballunion$ at the beginning of the iteration (i.e., before the Set-Source Shortest Path forest is computed).
	Therefore, meeting the condition that \othernode and \node lie in different trees of the restricted shortest-path forest implies the third condition, which thus does not need to be checked separately.
	
	\item \emph{Streaming Minimum Spanning Tree.}
	\citet{feigenbaum2005graph} provide a streaming algorithm that, 
	given an undirected graph with non-negative edge weights, 
	computes a Minimum Spanning Tree using $\BO(1)$ passes and $\BO(\nnodes)$ space.
	
	\item \emph{Streaming \stepthree.}
	\citet{becker21stream} provide a streaming algorithm that, 
	given an undirected graph with non-negative edge weights, 
	computes a $(1 + \epsilon)$-approximate Transshipment solution 
	using $\tildeO(\nnodes)$ space and $\tildeO(\epsilon^{-2})$ passes for any $\epsilon \in (0,1]$, 
	i.e., $\tildeO(1)$ passes for $\epsilon= \Theta(1)$. 
\end{enumerate}

\subsection{Applications}

The algorithm presented in the previous section works for \emph{any} proper Constrained Forest Problem in the Multi-Pass Streaming model, 
and hence, 
its running time depends on the complexity of evaluating the forest function \myfunction in the streaming setting. 
Fortunately, evaluating \myfunction in this setting is straightforward in all our applications. 
For the purposes of the following analysis, 
recall that before Forest-Function Evaluation, 
for each node $\node\in\nodes$, 
we already know the identifier of its current component $\component_\node$ from Step~\ref{step:msf} (MST). 

\subsubsection{Steiner Forest}
Like for PRAM, other input representations can be efficiently reduced to SF--IC, with the only non-trivial case being the reduction from SF--CR.
As for PRAM, this is covered by performing an MST computation (followed by simple computations on the memory content), which requires $\BO(1)$ passes and $\BO(\nnodes)$ space~\cite{feigenbaum2005graph}.
Therefore, in the streaming model, we may w.l.o.g.\ assume that the input representation for SF is SF--IC.

We provide an efficient strategy to evaluate the forest function for SF--IC.
For each node $\node\in\nodes$, 
we can learn its label $\componentidentifier(\node)$ in a single pass over the edge stream. 
Storing this information takes $\BO(\nnodes)$ space. 
Without making further passes over the edge stream, 
we can then count  $\nterminals_i = \cardinality{\{\node\in \nodes\mid \componentidentifier(\node) = i\}}$ for each $i\in[\ncomponents]$,  
as well as $\cardinality{\{\node\in\component\mid \componentidentifier(\node) = i\}}$ for each $\component\in\components$ and label $i$ assigned to at least one node in \component. 
As noted in the context of the PRAM model, 
there are at most \nnodes distinct labels, 
and at most \nnodes nonzero $(\component, i, count_i)$ triples, 
such that this information can be stored in $\BO(\nnodes)$ space. 
Using the stored information, 
we can now evaluate, 
for each $\component\in\components$,
whether there exists a label $i\in[\ncomponents]$ such that $\nterminals_i \neq \cardinality{\{\node\in\component\mid \componentidentifier(\node) = i\}} > 0$, 
and record the result as the new activity status of each component, 
which again takes $\BO(\nnodes)$ space. 

\begin{corollary}[Streaming Complexity of Steiner Forest]
	\label{result:sf-mps}
	In the Multi-Pass Streaming model, for any $0<\epsilon\le 1$, Steiner Forest can be $(2+\epsilon)$-approximated in $\tildeO(\epsilon^{-3})$ passes and $\tildeO(\nnodes)$ space.
\end{corollary}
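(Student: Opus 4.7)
The plan is to invoke the Multi-Pass Streaming meta-algorithm stated at the start of this subsection and to plug in an inexpensive Forest-Function Evaluation (FFE) procedure for SF--IC. Since the three other SF input representations reduce to SF--IC streaming-efficiently, this suffices. Concretely, SF--CIC trivially reduces to SF--IC by discarding the extra cardinality information; SF--SCR is a special case of SF--CR; and SF--CR reduces to SF--IC in $\BO(1)$ passes and $\BO(\nnodes)$ space via a single MST computation on the auxiliary terminal graph (weight $1$ on connection-request edges, weight $2$ on glue edges of an arbitrary spanning tree), invoking the streaming MST algorithm of \cite{feigenbaum2005graph}. So I may assume SF--IC throughout.

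Next, I will argue that $\myfunctioncomplexity_p \in \BO(1)$ and $\myfunctioncomplexity_s \in \BO(\nnodes)$ for SF--IC. In a one-time initial pass that is amortized over all iterations, I store $\componentidentifier(\node)$ for every $\node \in \nodes$ and compute, for each $i \in [\ncomponents]$, the global count $\nterminals_i = |\{\node \in \nodes \mid \componentidentifier(\node) = i\}|$; both fit in $\BO(\nnodes)$ space. From the MST step of the preceding iteration, the current component identifier $\component_\node$ of every node is already cached in memory, so within a single iteration I can tally $|\{\node \in \component \mid \componentidentifier(\node) = i\}|$ for all relevant $(\component, i)$ pairs purely from internal memory. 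At most $\nnodes$ nonzero such triples exist, so this uses $\BO(\nnodes)$ space, and $\myfunction(\component)$ is then determined locally by checking whether some label $i$ satisfies $0 < |\{\node \in \component \mid \componentidentifier(\node) = i\}| < \nterminals_i$. No additional passes over the edge stream are required.

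The one point to verify is that the FFE state is recomputable between iterations without rereading the stream, which holds because both the static labels $\componentidentifier$ and the dynamic component identifiers $\component_\node$ persist in $\tildeO(\nnodes)$ memory and are updated by the other building blocks. Combining these observations with the meta-theorem's bound of $\tildeO(\epsilon^{-3} + \epsilon^{-1}\myfunctioncomplexity_p)$ passes and $\tildeO(\nnodes + \myfunctioncomplexity_s)$ space then yields the claimed $\tildeO(\epsilon^{-3})$ passes and $\tildeO(\nnodes)$ space. No step here poses a genuine technical obstacle; the argument is essentially a routine assembly of the already-established streaming subroutines with the SF-specific memory-resident FFE.
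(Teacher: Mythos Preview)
Your proposal is correct and follows essentially the same approach as the paper: reduce all SF input representations to SF--IC (the nontrivial case SF--CR via a streaming MST on the auxiliary terminal graph), then evaluate the forest function using $\BO(\nnodes)$ memory by storing all labels $\componentidentifier(\node)$, the global counts $\nterminals_i$, and the at-most-$\nnodes$ nonzero $(\component,i,\text{count})$ triples, checking locally whether some label is split across components. The paper makes the identical argument, including the observation that the current component identifiers are already resident in memory from the MST step, so no extra passes are needed for FFE.
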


\subsubsection{Point-to-Point Connection}
For each node $\node\in\nodes$, 
set $x_\node = 1$ if \node is a source node, 
$x_\node = -1$ if \node is a target node, 
and $x_\node = 0$ otherwise. 
Then calculate $\sum_{\node\in\component} x_\node$ for each component $\component\in\components$, 
and report all components with a nonzero sum as active components. 
This takes a single pass and $\BO(\nnodes)$ memory words.

\begin{corollary}[Streaming Complexity of Point-to-Point Connection]
	\label{result:ppc-mps}
	In the Multi-Pass Streaming model, for any $0<\epsilon\le 1$, Point-to-Point Connection can be $(2+\epsilon)$-approximated in $\tildeO(\epsilon^{-3})$ passes and $\tildeO(\nnodes)$ space.
\end{corollary}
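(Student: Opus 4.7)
The plan is to invoke the streaming meta-theorem for proper Constrained Forest Problems (stated just above the corollary) and verify that Point-to-Point Connection admits a Forest-Function Evaluation subroutine whose streaming complexity is absorbed by the $\tildeO(\nnodes)$ space and $\tildeO(\epsilon^{-3})$ pass bounds contributed by the other building blocks.

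First, I would set up the representation. For each node $\node \in \nodes$, I store a single integer $x_\node \in \{-1, 0, +1\}$ encoding its source/target status (available from the initial problem input). This uses $\BO(\nnodes)$ memory words and requires no passes over the edge stream beyond what is needed to learn the source/target status, which is $\BO(1)$ passes. I then maintain, together with the other per-node state kept by the meta-algorithm (in particular, the current component identifier $\component_\node$ that is an output of Step~\ref{step:msf} from the preceding iteration), a per-component accumulator initialized to zero.

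Next, I would describe how a single FFE call is executed. Using the per-node component identifier and the stored $x_\node$, I stream over the nodes (held in memory) and update $s_{\component_\node} \mathrel{+}= x_\node$ for each $\node$; since both the $x_\node$ and the component identifiers are already resident in $\tildeO(\nnodes)$ memory, this requires \emph{no} pass over the edge stream and $\tildeO(\nnodes)$ space. Finally, a component \component is declared active iff $s_\component \neq 0$, which by the definition of the PPC forest function $\myfunction(\thesubset) = \mathbbm{1}[\cardinality{\thesubset \cap \sources} \neq \cardinality{\thesubset \cap \targets}]$ is exactly $\myfunction(\component) = 1$. Thus $\myfunctioncomplexity_p \in \BO(1)$ passes and $\myfunctioncomplexity_s \in \BO(\nnodes)$ words suffice.

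Substituting these bounds into the streaming meta-theorem's complexity of $\tildeO(\epsilon^{-3} + \epsilon^{-1}\myfunctioncomplexity_p)$ passes and $\tildeO(\nnodes + \myfunctioncomplexity_s)$ space yields $\tildeO(\epsilon^{-3})$ passes and $\tildeO(\nnodes)$ space, which proves the corollary. There is no real obstacle here: the FFE for PPC is information-theoretically trivial once we have per-node source/target labels and component identifiers, so the whole argument is a direct instantiation. The only minor point to be careful about is that the PPC forest function is genuinely proper (zero, symmetry, and disjointness all follow from the fact that $\cardinality{\cdot \cap \sources} - \cardinality{\cdot \cap \targets}$ is additive on disjoint unions and that $\cardinality{\sources} = \cardinality{\targets}$), so the meta-theorem is applicable in the first place.
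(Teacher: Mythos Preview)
Your proposal is correct and follows essentially the same approach as the paper: encode source/target status as $x_\node\in\{-1,0,+1\}$, aggregate per current component using the identifiers already in memory, and declare \component active iff the sum is nonzero, then plug $\myfunctioncomplexity_p\in\BO(1)$ and $\myfunctioncomplexity_s\in\BO(\nnodes)$ into the streaming meta-theorem. Your added remark that the PPC forest function is proper is a nice sanity check the paper leaves implicit.
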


\subsubsection{\facloclong}

We simulate $s$ and its incident edges using $\BO(n)$ memory,
obtaining the following corollary of the results for SF.

\begin{corollary}[Streaming Complexity of \facloc]
	\label{result:fl-mps}
	In the Multi-Pass Streaming model, for any $0<\epsilon\le 1$, \facloc can be $(2+\epsilon)$-approximated in $\tildeO(\epsilon^{-3})$ passes and $\tildeO(\nnodes)$ space.
\end{corollary}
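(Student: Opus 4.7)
The plan is to reduce \facloc to Steiner Forest in the streaming setting by explicitly simulating the virtual node $s$ and its incident edges, then invoke \Cref{result:sf-mps} as a black box. Since \facloc was rephrased in \Cref{sec:preliminaries} as ST on the augmented graph $\graph'=(\nodes\dot\cup\{s\}, \edges\dot\cup\{\{\node,s\}\mid \node\in\nodes\})$ with $\cost(\{\node,s\})=o_\node$ and terminals $\terminals=C\cup\{s\}$, and since ST is the special case of SF with $k=1$ and $\nodes_1=\terminals$, it suffices to run the SF streaming algorithm on $\graph'$.

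The first thing I would do is explain how $s$ and its $n$ incident edges fit into the streaming model at cost $\tildeO(\nnodes)$ additional space. The opening costs $o_\node$ are given at the node endpoints of the original stream (each node $\node$ knows whether $\node\in C$ and knows $o_\node$), so in one pass over the input stream, each node's auxiliary data (membership in $C$ and $o_\node$) can be stored in $\BO(\log \nnodes)$ bits; the total cost is $\BO(\nnodes\log \nnodes)=\tildeO(\nnodes)$ words. Whenever a subroutine of the shell-decomposition algorithm requires a pass over the edge set of $\graph'$, we simulate a pass over $\graph'$ by making one pass over the original stream (yielding the edges of $E$) and then streaming the $n$ virtual edges $\{\node,s\}$ with weights $o_\node$ from the in-memory auxiliary table. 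This at most doubles the number of passes and adds $\BO(n)$ to the memory, both absorbed into the $\tildeO$-notation.

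Next, I would argue that the forest-function evaluation for \facloc fits the SF--IC format and can be done with $\BO(1)$ passes in $\tildeO(\nnodes)$ space: the input components are simply the single component $\nodes_1=C\cup\{s\}$, and the strategy of \Cref{result:sf-mps}'s FFE (store $\componentidentifier(\node)$ for each node, count terminals per current connected component, compare with the total number of terminals) works verbatim after the auxiliary labels for the virtual node have been fixed. All other building blocks (aSSSP, MST, ECR, CMI, RPS) have already been shown in the body of \Cref{apx:streaming} to tolerate $\BO(1)$ extra virtual nodes with $\BO(\nnodes)$ additional memory, exactly as needed here.

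The only subtle point, and the place where I would be most careful, is to verify that every subroutine cited by the SF streaming proof (in particular the approximate SSSP algorithm of \citet{becker21stream} and the MST algorithm of \citet{feigenbaum2005graph}) is indifferent to the fact that $n$ of the edges are being re-injected from memory rather than read from the original stream. Since these algorithms are specified as making passes over an edge stream without assumptions on how it is generated, this is immediate, but it should be stated. Plugging \Cref{result:sf-mps} into the augmented graph $\graph'$, which has $\nnodes+1$ nodes and $\nedges+\nnodes$ edges, then yields a $(2+\epsilon)$-approximation for ST on $\graph'$ in $\tildeO(\epsilon^{-3})$ passes and $\tildeO(\nnodes)$ space, which by the rephrasing of \facloc is precisely the claimed bound.
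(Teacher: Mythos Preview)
Your proposal is correct and follows essentially the same approach as the paper: simulate the virtual node $s$ and its $\nnodes$ incident edges in $\BO(\nnodes)$ memory, then apply the Steiner Forest streaming result (\Cref{result:sf-mps}) to the augmented graph. The paper's own proof is a one-liner to this effect, so your version is simply a more detailed elaboration of the same reduction.
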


\begin{figure}[t]
	\centering
	\includegraphics[height=7cm]{}
	\caption{%
		Example of the lower-bound graph used to reduce Equality Testing to Steiner Forest, for an instance with $n = 8$. 
		Node colors indicate input components (SF--CIC) resp. connection requests (SF--SRC), and node shapes indicate which nodes are simulated by which player.}\label{fig:lower-bound-graph}
\end{figure}

\section[Deterministic Lower Bounds for Steiner Forest Input Specifications in Congest]{Deterministic Lower ‌Bounds for Steiner Forest Input Specifications in \Congest}\label{apx:lower}

In this section, we prove that any deterministic algorithm for the Steiner Forest problem with a finite approximation ratio in the \Congest model has time complexity $\tildeOmega(\nterminals)$ (for SF--SCR) and $\tildeOmega(\ncomponents)$ (for SF--CIC), 
leveraging ideas from the work of \citet{lenzen2014steiner}.

\begin{lemma}[Deterministic Lower Bounds for Steiner Forest Input Specifications]\label{lem:sf-lowerbounds}
	In the \Congest model, any deterministic algorithm for the Steiner Forest problem with a finite approximation ratio 
	takes $\tildeOmega(\nterminals)$ rounds if the input is specified via symmetric connection requests (SF--SCR), 
	and $\tildeOmega(\ncomponents)$ rounds if the input is specified as component identifiers supported by component cardinalities (SF--CIC).
\end{lemma}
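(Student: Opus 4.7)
The plan is to adapt the lower-bound strategy of \citet{lenzen2014steiner}, which reduces from deterministic $2$-party Set Disjointness to SF--CR, by instead reducing from deterministic $2$-party Equality on $N$-bit inputs---a task whose deterministic communication complexity is $\Omega(N)$. The reduction's graph family is depicted in Figure~\ref{fig:lower-bound-graph}: two hubs $h_A,h_B$ joined by a single bottleneck edge, with $\Theta(\nterminals)$ (resp.\ $\Theta(\ncomponents)$) leaf terminals attached to each hub. Alice simulates all nodes on the $h_A$ side and Bob all nodes on the $h_B$ side, so that any $T$-round \Congest algorithm induces a deterministic two-player protocol exchanging at most $O(T\log \nnodes)$ bits across the single cut edge. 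Setting $N=\Theta(\nterminals)$ for SF--SCR and $N=\Theta(\ncomponents)$ for SF--CIC and invoking the $\Omega(N)$ communication lower bound then yields the claimed round complexities up to polylogarithmic factors.

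The embedding of an Equality instance $(x,y)\in\{0,1\}^N\times\{0,1\}^N$ into the Steiner Forest input differs between the two cases. For SF--SCR, the symmetric request set is declared once and for all (for example, $\{\{a_i,b_i\}:i\in [\nterminals/2]\}$) and is therefore trivially consistent between the two players regardless of their private inputs. Alice's bits $x$ and Bob's bits $y$ then control the weights of the edges incident to their respective leaves, together with a small gadget inside each hub, chosen so that when $x=y$ the cheapest feasible forest has cost at most $C$, whereas whenever $x$ and $y$ differ in a planted coordinate $i^\ast$ every feasible forest has cost exceeding $\alpha\cdot C$ for the prescribed finite approximation ratio $\alpha$. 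For SF--CIC, the same bottlenecked graph is reused, but the reduction encodes $x$ via the component identifiers that Alice assigns to her leaves and $y$ via the component cardinalities reported on Bob's side; the cardinality field, which is exactly the extra information that SF--CIC provides beyond SF--IC, is what ties the two encodings together, and an $\alpha$-approximate forest must again reveal the planted disagreement.

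The main obstacle is ensuring that the local inputs of each simulated node (the request list $R_\node$ in SF--SCR, the component identifier together with its cardinality in SF--CIC) stay globally consistent \emph{only when $x=y$}, while each node's input remains locally well-formed regardless of the other player's string---otherwise the reduction is vacuous because the simulating player could not set up its half of the input without already knowing $x$ or $y$. Following the template of~\citet{lenzen2014steiner}, we resolve this by planting the equality gap at a single coordinate $i^\ast$ chosen adversarially from public randomness and keeping all other coordinates identical across the two halves, so that every node's local input is always well-formed and only the solution quality at $i^\ast$ depends on the bit that is private to the other player. Combining the $\Omega(N)$ deterministic lower bound for Equality with the $O(T\log \nnodes)$-bit bandwidth across the single cut edge then yields $T\in \tildeOmega(N)$, completing both proofs.
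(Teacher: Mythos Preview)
Your high-level plan---reduce from deterministic two-party Equality and simulate across a small cut---matches the paper, but the concrete reduction you sketch has real gaps.

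First, the ``planted coordinate $i^\ast$'' device destroys the lower bound. If all coordinates except one are forced to agree, the instance depends on a single private bit per player, and Equality on one bit costs $O(1)$ communication; you would obtain $T\in\tildeOmega(1)$, not $\tildeOmega(N)$. The reduction must let \emph{every} bit of $x$ and $y$ influence the instance independently.

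Second, for SF--CIC you propose to encode $x$ in Alice's component identifiers and $y$ in Bob's reported cardinalities. But the cardinality is not a free parameter: it must equal the true size of the input component determined by the identifiers. If Alice's identifiers and Bob's cardinalities disagree, the SF--CIC instance is simply ill-formed, and nothing can be concluded from the algorithm's behavior on it. You acknowledge this obstacle but do not actually overcome it.

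Third, with a single bottleneck edge and fixed requests $\{a_i,b_i\}$, the unique path between the two sides is always required, so the optimal forest is the same regardless of $x,y$; your ``small gadget inside each hub'' is doing all the work and is left unspecified.

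The paper sidesteps all three issues by encoding the private inputs in the \emph{graph topology} rather than in the SF-specific inputs. Alice's set $A\subseteq[n]$ determines whether leaf $a_i$ attaches to hub $a_+$ or $a_-$; Bob does the analogous thing. The SF inputs themselves---requests $\requests_{a_i}=\{b_i\}$, $\requests_{b_i}=\{a_i\}$ for SF--SCR, identifiers $\componentidentifier(a_i)=\componentidentifier(b_i)=i$ with cardinality $2$ for SF--CIC---are fixed in advance and are valid for every $(A,B)$. The cut consists of four edges $\{a_\pm,b_\pm\}$, with the ``cross'' pair $\{a_+,b_-\},\{a_-,b_+\}$ given prohibitive weight; a feasible forest avoids both heavy edges iff $A=B$. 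This makes the cost gap hold for any finite ratio while keeping the cut of constant size, yielding $T\in\tildeOmega(n)$ directly.
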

\begin{proof}
We reduce the equality testing (ET) problem to $\rho$-approximate SF--SCR and $\rho$-approximate SF--CIC as follows. 
Let $A, B \subseteq [n]$ be an instance of ET. 
Alice, who knows $A$, constructs the following (sub)graph, depicted in \Cref{fig:lower-bound-graph}: 
The nodes are the set $\{a_i\}_{i=1}^{n}$, representing each $i \in [n]$, and two additional nodes $a_+$ and $a_-$. 
For each $i \in A$, Alice connects node $a_i$ to $a_+$; for each $i \not \in A$, Alice connects node $a_i$ to $a_-$. 
Formally, the edges are $E_A = \{\{a_+,a_i\} \; | \; i \in A\} \cup \{\{a_-,a_i\} \; | \; i \not \in A\}$. 
Similarly, Bob constructs nodes $\{b_i\}_{i=1}^{n}$, 
representing each $i \in [n]$, 
and two additional nodes $b_+$ and $b_-$. 
For each $i \in B$, Bob connects node $b_i$ to $b_+$; for each $i \not \in B$, Bob connects node $b_i$ to $b_-$. 
Formally, the edges are $E_B = \{\{b_+,b_i\} \; | \; i \in B\} \cup \{\{b_-,b_i\} \; | \; i \not \in B\}$. 
The graph also contains edges $E_{AB} = \{\{a_+,b_+\}, \{a_+,b_-\}, \{a_-,b_+\}, \{a_-,b_-\}\}$. 
All edges have unit cost, except the edges $\{\{a_+,b_-\},\{a_-,b_+\}\}$ which have cost $W \coloneq \rho \cdot (2n + 2) + 1$. 
Finally, we set the connection requests to $\requests_{a_i} = \{b_i\}$ and $\requests_{b_i} = \{a_i\}$ for SF--SCR, 
and the component identifiers to $\componentidentifier(a_i) = \componentidentifier(b_i) = i$ for SF--CIC, 
such that the size of each input component is exactly $2$.

We show that if the algorithm computes a $\rho$-approximation to SF--SCR or SF--CIC on this graph, then we can output the answer ``YES'' to the original ET instance if and only if the algorithm produces an answer that does not include a heavy edge (i.e., $\{a_+,b_-\}$ or $\{a_-,b_+\}$). 
If $A=B$, then all the requests can be satisfied by selecting the edges $E_A \cup E_B \cup \{\{a_+,b_+\},\{a_-,b_-\}\}$. 
If $A \neq B$, then any solution must contain at least one heavy edge (i.e., $\{a_+,b_-\}$ or $\{a_-,b_+\}$), hence it has a weight larger than $\rho \cdot (2n + 2)$.

Now, Alice and Bob can construct the graph based on their input and simulate the algorithm on this graph, i.e., Alice simulates the algorithm on nodes $\{a_i\}_{i=1}^{n} \cup \{a_+,a_-\}$, and Bob simulates the algorithm on nodes $\{b_i\}_{i=1}^{n} \cup \{b_+,b_-\}$. 
The only communication required between Alice and Bob are the messages that cross the edges in $E_{AB}$. 
Solving ET deterministically requires exchanging $\Omega(n)$ bits in the worst case \cite{Kushilevitz_Nisan_1996}. 
In the \Congest model, at most $\BO(\log n)$ bits can cross $E_{AB}$ in each round. 
Thus, the running time of the algorithm must be $\Omega(\nicefrac{n}{\log n})$, 
which is $\tildeOmega(\nterminals)$ for SF--SCR and $\tildeOmega(\ncomponents)$ for SF--CIC.
\end{proof}

\section[Deterministic Minimum-Spanning-Forest Construction with Partwise Aggregation]{Deterministic Minimum-Spanning-Forest Construction with\newline Partwise Aggregation}
\label{apx:msf}

In this section, we give a deterministic \Congest algorithm to compute the Minimum Spanning Forest (and, consequently, also a Minimum Spanning Tree) in $\tildeO(\pacomplexity)$ time, 
derandomizing the algorithm by \citet{ghaffari2016algorithms}. 
Our algorithm is shown as \cref{alg:detmsf}.

\begin{lemma}[Deterministic MSF Computation]
	In the \Congest model, a Minimum Spanning Forest can be computed deterministically in $\tildeO(\pacomplexity)$ time.
	This holds also in the presence of $\BO(1)$ virtual nodes.
\end{lemma}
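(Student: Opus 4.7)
The plan is to implement Borůvka's algorithm on top of Partwise Aggregation, which already supplies all the deterministic primitives we need,
and to argue that its $\BO(\log\nnodes)$ merge phases each take $\tildeO(\pacomplexity)$ rounds.
Throughout the algorithm, we maintain a partition of $\nodes$ into \emph{current components},
each of which is connected in $\graph$ via the MSF edges selected so far.
At the beginning of each phase, every node knows the identifier of its current component (initially its own ID),
so we can directly run PA on the current partition.
The overall runtime is then $\BO(\log\nnodes)\cdot \tildeO(\pacomplexity)=\tildeO(\pacomplexity)$.

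Within a single phase, the plan is as follows.
First, each current component uses one PA call to determine its minimum-weight outgoing edge,
with ties broken deterministically by the lexicographic order of endpoint identifiers;
this takes $\tildeO(\pacomplexity)$ rounds and makes the selected edge known to its two endpoints (and via a second PA also to the entire component).
We add these edges to the MSF,
and view the selected edges as orienting each current component toward one neighbor component,
yielding a \emph{functional graph} on current components in which every vertex has out-degree~$1$.
By standard tie-breaking, each weakly connected component of this functional graph contains a unique $2$-cycle,
and declaring the smaller-ID endpoint of that cycle the new leader gives a well-defined new component structure,
namely the connected components of the forest constituted by the old forest together with the newly selected merge edges.
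Crucially, each new component remains a connected subgraph of \graph,
so PA continues to be applicable on it in subsequent phases.

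The main obstacle is to deterministically disseminate the new leader identifier to all nodes in each new component within $\tildeO(\pacomplexity)$ rounds,
since a single new component may be formed by a long chain of old components linked through merge edges.
I would handle this by $\BO(\log\nnodes)$ rounds of \emph{pointer jumping on the functional graph}:
each current component maintains a pointer to another current component (initially its merge target),
and in each pointer-jumping round it replaces this pointer by its pointer's pointer.
A single pointer-jumping round can be executed with $\BO(1)$ PA calls:
the partition consisting of the union of each component with its currently pointed-to component (connected in $\graph$ via the already-selected merge edge) is a legal PA partition,
so each component can learn the target pointer of its current target in $\tildeO(\pacomplexity)$ rounds.
After $\BO(\log\nnodes)$ such rounds,
every current component knows the identifier of the $2$-cycle at the root of its tree in the functional graph and can adopt the smaller of the two cycle identifiers as its new leader;
a final PA within each new component then broadcasts this identifier to all constituent nodes.

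For the $\BO(1)$ virtual nodes,
we use the standard trick of globally broadcasting their state once per phase over a BFS tree in $\BO(\hopdiameter)\subseteq \BO(\pacomplexity)$ rounds and having all real nodes redundantly simulate the computations they perform,
so virtual edges contribute only to the aggregation on their non-virtual endpoints (which already know their weights by assumption).
Combining the $\tildeO(\pacomplexity)$ cost of finding minimum outgoing edges,
the $\tildeO(\pacomplexity)$ cost of the $\BO(\log\nnodes)$ pointer-jumping rounds,
and the $\BO(\hopdiameter)$ cost of virtual-node simulation,
each of the $\BO(\log\nnodes)$ Borůvka phases takes $\tildeO(\pacomplexity)$ rounds,
yielding the claimed $\tildeO(\pacomplexity)$ deterministic MSF bound.
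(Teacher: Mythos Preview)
Your overall plan---Bor\r{u}vka phases with PA for minimum-outgoing-edge detection, plus global broadcast for virtual nodes---matches the paper. The gap is in the pointer-jumping step. The family ``each component together with its pointed-to component'' is \emph{not} a partition: if components $A$, $B$, $C$ all select merge edges into the same component $D$, then $A\cup D$, $B\cup D$, $C\cup D$ all contain $D$, so you cannot hand this collection to a PA call. Even if you repaired the first round (broadcast each component's pointer to all its nodes via one PA on the \emph{old} partition, then pass it across the single merge edge in one ordinary round), the second round fails for a different reason: after one jump, $C$'s pointer is $p(p(C))$, and $C\cup p(p(C))$ need not be connected in $\graph$ without the intermediate component $p(C)$, so PA is again inapplicable. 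In short, pointer jumping on the component graph does not translate to a constant number of PA calls per jump in the way you describe.

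The paper avoids this by controlling the \emph{diameter} of the merged super-components rather than chasing pointers. In each Bor\r{u}vka phase it simulates Cole--Vishkin on the out-degree-$1$ functional graph of selected edges (one Cole--Vishkin step costs one PA call on the old partition plus one round across merge edges, for $\BO(\log^*\nnodes)$ steps total) to obtain a $3$-coloring, and from that a maximal matching among the components. Each unmatched component additionally keeps its own selected edge, whose other endpoint is necessarily matched by maximality. Consequently, after contracting the old components, every new component has diameter at most~$3$ in the contracted graph, and the new leader ID can be determined and disseminated with $\BO(1)$ PA calls on the \emph{old} partition interleaved with $\BO(1)$ rounds over the newly added edges. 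That bounded-diameter argument is the missing ingredient; once you insert it in place of pointer jumping, the remainder of your argument (including the treatment of virtual nodes) goes through unchanged.
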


\begin{algorithm2e}[t]
	\DontPrintSemicolon
	\caption{Deterministic Minimum Spanning Forest computation in $\tildeO(\pacomplexity)$ time.}\label{alg:detmsf}
	\KwIn{Graph $\graph = (\nodes,\edges)$, edge costs $\cost\colon\edges\rightarrow\naturals$}
	\KwOut{Minimum Spanning Forest \forest}
	$\forest \gets \emptyset$ \;
	$\components \gets \{\{\node\}\mid\node\in\nodes\}$\;
	\ForEach{$i\in[\lceil\log\nnodes\rceil]$}{
		$\forest'\gets \emptyset$\;
		\ForEach{$\component \in \components$}{
			Determine the lightest edge leaving $\component$ and add it to $\forest'$\;
		}
		Add a maximal matching $\forest_M \subseteq \forest'$ in the graph $(\components,\forest')$ to \forest\;
		If $\component \in \components$ has no incident edge in $\forest_M$, it adds the edge it selected into $\forest'$ to \forest\;
		Merge components connected in \forest to update \components\;
	}
	\Return \forest
\end{algorithm2e}

\begin{proof}
	Consider first the special case without virtual nodes.
	Similar to \citet{ghaffari2016algorithms}, 
	we have $\BO(\log \nnodes)$ phases in total. 
	Denote by $\components$ the set of connectivity components so far.
	Nodes locally store the smallest identifier of a node in their component as the component ID.
	Starting with each node in its own component, in each phase, each component $\component \in \components$ computes its minimum-weight outgoing edge (breaking ties by identifiers). 
	This can be done in $\BO(\pacomplexity)$ time by aggregating the minimum over the weights of outgoing edges witnessed by nodes in \component.
	As in the algorithm, denote the set of these edges by $\forest'$.
	Note that $\forest'$ is part of the MSF, since a lightest outoing edge can replace a heavier outgoing edge in a spanning tree to obtain a spanning tree of smaller weight.
	
	We now simulate the Cole-Vishkin algorithm \cite{cole1986coloring} to compute a $3$-coloring of $(\components,\forest')$ in $\mathcal{O}(\pacomplexity \cdot \log^*\nnodes)$ time.
	It suffices to send the color of the component's parent to the nodes contained in the component, 
	which is achieved in $\BO(\pacomplexity)$ rounds using Partwise Aggregation. 
	Next, we select a maximal matching and add it to the MSF in $\BO(\pacomplexity)$ time by sequentially going over the color classes. 
	When coming to color $c$, all unmatched components with color $c$ will first check for an unmatched incoming merge, and select one if there are any (breaking ties by identifiers), requiring $\BO(\pacomplexity)$ rounds.
	For each unmatched component, we add the edge it previously selected into $\forest'$ to the MSF (also in $\BO(\pacomplexity)$ rounds).
	 
	Finally, it remains to determine and distribute the new component identifier within each newly formed connected component.
	To this end, observe that contracting the old components results in a graph in which each connected component has diameter at most $3$: 
	It contains at most one matching edge, and any selected non-matching edge has exactly one matched endpoint.
	Thus, we can determine and make known the lowest node identifier within each new connected component by $\BO(1)$ Partwise Aggregations and additional communication rounds over the newly selected edges. 

	Overall, each iteration takes $\BO(\pacomplexity\log^* n)$ rounds, and iterating $\lceil \log\nnodes \rceil$ times guarantees that all connected components are spanned: 
	Each component that is not identical to a connected component of the input graph gets merged with at least one other component in each iteration, at least doubling the minimum size of a non-spanning component in each step.
	As all edges we add belong to the MSF,
	the output forest is indeed the MSF of the graph.
	
	To complete the proof, we need to show that the same complexity can be achieved if some of the nodes are virtual.
	The states of the virtual nodes are maintained by all nodes.
	When determining lightest outgoing edges, i.e., $\forest'$, we perform a (global) aggregation for each virtual node, adding $\BO(\hopdiameter)$ rounds, followed by a broadcast to update its state at all nodes.
	The same applies to all other steps of the algorithm.
	Thus, essentially the same procedure can be used to implement each of the steps of an iteration with virtual nodes, adding in total $\tildeO(\hopdiameter)\subseteq \tildeO(\pacomplexity)$ rounds.
\end{proof}

\section{Further Related Work}\label{apx:related}

Beyond the related work discussed in the main text, 
here, we briefly summarize 
relevant research related to the \emph{methodological foundations} of our framework, 
i.e., primal-dual approaches, 
as well as relevant literature related to our \emph{class of problems}, 
i.e., survivable network design problems. 
Given a graph $\graph = (\nodes,\edges)$ with edge costs $\cost(\edge)$, 
\emph{survivable network design problems} (SNDPs) ask us to select a minimum-cost edge subset satisfying any given node- or edge-connectivity requirements. 
Since SNDPs are both NP-hard in general and relevant in practice, 
their approximability has been studied extensively for specific problems,  
graph classes, and computational models. 

\emph{Primal-Dual Approaches.}\quad
In the classic \emph{sequential setting}, 
SNDPs saw one of the earliest applications of the primal-dual method for approximation algorithms, 
yielding $(2-\nicefrac{2}{\nterminals})$-approximations for Steiner Forest and other Constrained Forest Problems with proper forest functions  \cite{agrawal1995trees,goemans1995approximation,goemans1996primal}, 
as well as a $3$-approximation for $2$-ECSS \cite{klein1993cycles} and an $\BO(\log k)$-approximation for $k$-ECSS \cite{williamson1995primal,goemans1994improved}. 
In the \emph{distributed setting}, 
primal-dual approaches have been leveraged to tackle problems like metric facility location \cite{pandit2009return}, 
minimum dominating set \cite{pandit2010rapid},
load balancing \cite{ahmadian2021distributed}, 
approximate weighted matching \cite{panconesi2010fast},  
and approximate $k$-core decomposition \cite{chan2021distributed}. 
Notably, by porting the primal-dual algorithm of \citet{agrawal1995trees} to the distributed setting, 
the Steiner Forest algorithm presented by \citet{lenzen2014steiner} also (implicitly) follows a primal-dual approach. 
Beyond the models considered in our work, 
in the \emph{online setting}, 
primal-dual methods have been combined with ideas from online set cover \cite{alon2003online,gupta2012online} 
to design algorithms for fractional network design problems \cite{alon2006general,buchbinder2006improved}.

Most primal-dual results on network design problems have been limited to functions that are \emph{proper} (in particular, satisfying \emph{disjointness}) 
or at least \emph{uncrossable} 
(i.e., $\myfunction(A) = \myfunction(B) = 1$ implies that either $\myfunction(A\cap B) = \myfunction(A\cup B) = 1$ 
or $ \myfunction(A\setminus B) = \myfunction(B\setminus A) = 1$, which strictly generalizes disjointness). 
Optimizing these functions comes with the convenience that there exists an optimal dual solution with laminar support i.e., the sets associated with positive dual-variable values are either nested or disjoint. 
This might allow us to extend our framework to uncrossable functions, 
as posed as an open question in \cref{sec:conclusion}. 
Moreover, recent work obtains constant-factor approximations 
even for network design problems specified by functions that are \emph{not} uncrossable and come without laminar optimal-dual support \cite{bansal2023improved}. 
Hence, further investigating which classes of functions allow for good model-agnostic primal-dual approximations appears as a promising avenue for future~work. 

\emph{Survivable Network Design Problems.}\quad
Since the literature on SNDPs is incredibly vast, 
we discuss only the work immediately addressing our example problems.

For \emph{Steiner Forest} (SF), in the \emph{sequential setting}, 
\citet{garg2002fast} develop a fully polynomial-time approximation scheme for \emph{fractional} SF,
\citet{borradaile2015ptas} design a polynomial-time approximation scheme (PTAS) for SF in Euclidean metrics, 
and \citet{chan2018ptas} give a PTAS for SF in doubling metrics (which include Euclidean metrics as a special case).
Furthermore, 
\citet{bateni2011approximation} design a  PTAS for SF on planar graphs and graphs of bounded treewidth. 
In the \emph{parallel setting}, \citet{suzuki1990parallel} study SF in \emph{unweighted planar} graphs,  
and in the \emph{streaming setting}, \citet{czumaj2022streaming} design approximation algorithms for \emph{geometric} SF---%
i.e., both works consider settings that are more restricted than ours. 
In the \emph{distributed setting}, 
our work significantly improves over the state of the art established by \citet{lenzen2014steiner}.

While \emph{Point-to-Point Connection}, 
studied in the \emph{sequential setting} already by \citet{li1992delivery},
is one of the original Constrained Forest Problems considered by \citet{goemans1995approximation},
to the best of our knowledge, no prior work has studied Point-to-Point Connection in any of the models we consider. 

The \emph{\facloclong} (\facloc) problem is different from the classic Uncapacitated Facility Location (UFL) problem   
in that it minimizes the sum of facility-opening and \emph{network-construction costs}, rather than \emph{connection costs} (i.e., client-to-facility distances).
Our specification is motivated by the need, identified in operations research, 
to capture route-sharing scenarios, 
in which the costs of clients cannot be considered independently \cite{klose2005facility},
and it can be naturally extended to include the \emph{connection costs} from each client to its closest selected facility using the selected forest  \cite{melkote2001integrated}, 
as posed as an open question in \cref{sec:conclusion}. 
For \emph{non-metric UFL}, 
the $\BO(\log c)$-approximation (where $c$ is the number of clients) obtained by \citet{hochbaum1982heuristics} in the \emph{sequential setting} is asymptotically optimal, 
as approximating set cover (a special case of non-metric UFL) to within factor $(1 - \epsilon)\ln \nnodes$ is NP-hard for every $\epsilon > 0$ \cite{dinur2014analytical}. 
In contrast, for \emph{metric UFL}, 
\citet{li2013ufl} obtain a 1.488-approximation by merging techniques from \citet{byrka2010optimal} and \citet{mahdian2006approximation}. 
Recalling that the cost of facility-location problems can be split into node-related costs $o$ (opening facilities) and edge-related costs $d$ (connecting facilities), 
an algorithm yields a \emph{Lagrangian-multiplier-preserving} $\alpha$-approximation if its solution $S$ is guaranteed to satisfy, for any feasible solution $S^*$, $o(S) + d(S) \leq o(S^*) + \alpha d(S^*)$. 
Here, in recent work, 
\citet{cohen2023breaching} give the first Lagrangian-multiplier-preserving $\alpha$-approximation for $\alpha < 2$ by combining dual fitting with local search.

Both non-metric and metric UFL have also been studied in the \emph{distributed setting}. 
Specifically in the \Congest model, 
\citet{moscibroda2005facility} provide an algorithm that, 
for every positive integer $k$, 
yields a $\BO(k(fc)^{1/k}\log (f + c))$-approximation for \emph{non-metric UFL} in $\BO(k)$ rounds (where $f$ is the number of facilities), 
and for \emph{metric UFL}, 
\citet{pandit2009return} give a $7$-approximation in $\BO(\log f + \log c)$ rounds using a primal-dual approach based on the work of  \citet{jain2001approximationb}.
In the \emph{parallel setting}, 
\citet{blelloch2010parallel} give a $(3 + \epsilon)$-approximation for \emph{metric UFL} in $\BO(\nedges \log_{1+\epsilon} \nedges)$ work and $\tildeO(1)$ depth.
Finally, in the \emph{streaming setting}, 
\citet{czumaj2022streaming,czumaj2013approximation} study \emph{Euclidean uniform UFL} (i.e., UFL on the grid $\{1,\dots,\Delta\}^d$ with uniform facility-opening costs), 
providing an $(1 + \epsilon)$-approximation for $d = 2$ using one pass and $(\frac{\log\Delta}{\epsilon})^{\BO(1)}$ space \cite{czumaj2013approximation} and, 
more recently, a $d^{3/2}$-approximation using one pass and $\tildeO(d\log \Delta)$ space \cite{czumaj2022streamingb}.

\phantomsection
\addcontentsline{toc}{section}{\protect\numberline{Z}References}
\bibliographystyle{ACM-Reference-Format}
\bibliography{bibliography}

\end{document}